\newtheorem{thm}{Theorem}[section]
\newtheorem{lem}[thm]{Lemma}
\newtheorem{cor}[thm]{Corollary}
\newtheorem{exam}[thm]{Example}
\newtheorem{prop}[thm]{Proposition}
\newcommand{\ot}{\leftarrow}
\newcommand{\RN}[1]{%
  \textup{\uppercase\expandafter{\romannumeral#1}}%
}
\newcommand{\bM}{{\mathbf M}}
\newcommand{\bS}{{\mathbf S}}
\newcommand{\bE}{{\mathbf E}}
\newcommand{\myComment}[1]{}
\newcommand{\sima}{\sim_a}
\newcommand{\simb}{\sim_b}
\newcommand{\simB}{\sim_B}
\newcommand{\simC}{\sim_C}
\title{Learning What Others Know}
\author{Alexandru Baltag\footnote{Institute for Logic, Language and Computation, University of Amsterdam, A.Baltag@uva.nl.} $\,\,$ \& Sonja Smets\footnote{Institute for Logic, Language and Computation, University of Amsterdam and
 Department of Information Science and Media Studies, University of Bergen, S.J.L.Smets@uva.nl.}}
\date{}
\begin{document}
\maketitle

\begin{abstract} We propose a number of powerful dynamic-epistemic logics for multi-agent information sharing and acts of publicly or privately accessing other agents' information databases. The static base of our logics is obtained by adding to standard epistemic logic \emph{comparative epistemic assertions}, that can express \emph{epistemic superiority} between groups or individuals, as well as a \emph{common distributed knowledge} operator (that combines features of both common knowledge and distributed knowledge). On the dynamic side, we introduce actions by which epistemic superiority can be acquired: ``sharing all one knows'' (by e.g. giving access to one's information database to all or some of the other agents), as well as more complex informational events, such as hacking. We completely axiomatize several such logics and prove their decidability.
\end{abstract}

\section{Introduction}

In this paper, we look at actions by which agents gain access to other agents' information databases, and thus can in principle learn everything known to those others, acquiring epistemic superiority over them. We assume that information is distributed in a number of information sources or `sites' (e.g. files, folders, data sets, websites, databases etc.) at a given time. Each source can be thought of as being associated with an
agent, either because it is the knowledge base of a real agent (natural or artificial), or because we think of the source itself as an abstract `agent' (possessing exactly the information that is locally stored at that site).

\medskip

We enrich static epistemic logic with two new ingredients: (1) \emph{comparative epistemic assertions} for individuals or groups, that can capture epistemic superiority (e.g. ``she knows all they know''); (2) a new modal operator for \emph{common distributed knowledge}, that generalizes the two standard notions of common knowledge and distributed knowledge. On this static base, we built communication logics obtained by adding various dynamic operators for information sharing, public or private accessing etc.

\medskip

An agent may gain access to a site, after which it can be assumed to instantly `read' all the information stored at that source. The `reading' agent gains access to a source either because it is granted such access by the source agent itself (by ``sharing'' her database, in which case it is natural to assume that the source `knows' it is being accessed), or because it somehow succeeded to illegally gain such access via e.g. hacking (in which case typically the source doesn't  know it's being accessed).\footnote{Although sometimes it does get to know it, either because the hacker publicizes all the stolen information, or because somehow the source agent is able to detect the hacking. Our account can deal with various such scenarios.} So a reading action can be public (when it is common knowledge that the information is visible to everybody), or semi-public (when it is accessible only to some agents, but it is common knowledge who has access and who doesn't), or fully private (when both the information and the access are unknown to outsiders). Multiple agents may simultaneously access multiple sources. After each such reading action, each reading agent knows everything that was known by its source agents.


\bigskip

There are various possible applications of this work to multi-agent information gathering: e.g. multi-body planning tasks in which sensed information from different bodies, each having its own sensors, is to be collected and acted upon in order to reach a goal state \cite{RN}; recommender systems collecting user-preferences from multiple sources in order to provide a meaningful recommendation; cryptographic communication, involving protocols in which agents share their public keys and others use them to send messages, but also attacks by intruders getting access to private keys; etc.

\medskip

In the context of information accountability, here is a concrete example from \cite{refAI}. The agents are internet users, including website owners who have control over their own website as well as web robots (or web-crawlers) who can extract information from those websites. Such web robots can be directed to the URL of specific website owners and can be used for different purposes, e.g. to index website content. Yet not all web crawlers are designed for legitimate purposes: e.g. they can also be used extract valuable information for e.g. spamming; in the worst case, they can gain access to all the private content of some users and hence gain `epistemic superiority' over them. Website owners can disallow robots to visit their website (e.g. via `robots.txt' (https://www.robotstxt.org/) website owners can use a file to give instructions to the web robots or they can directly block an IP address). Giving such access-restricting instructions is a ``semi-public'' action (in the technical sense of our paper):
the `/robots.txt' file is publicly available, hence what parts are under `no-access'-restriction is public information. Still, robots used by spammers or malware robots could actually ignore
these instructions. In practice, it can be hard to detect whether a user's site has been visited by a web robot, especially as existing detection-methods are far from waterproof. Thus, the need for the more general setting in section \ref{product} of our paper, e.g. actions by which different agents secretly and simultaneously gain access to others' sites (without the owners' knowledge).

\bigskip

The paper is structured as follows: section \ref{prelim} gives some background on epistemic logic. In section \ref{superiority} we add epistemic comparative assertions for groups, and give a complete axiomatization of the resulting logic. In section \ref{semi-public} we study public and semi-public sharing/reading actions, and axiomatize them in the absence of common knowledge operators.  Motivated by the problems posed by common knowledge, we generalize this notion in section \ref{Cd} (to ``common distributed knowledge''), provide a complete and decidable axiom system, and use it to axiomatize semi-public actions. The proofs are relegated to the Appendix. Finally, in section \ref{product}, we further generalize this work to arbitrary reading actions, giving an axiomatization, and ending with a Conjecture, which we plan to settle in a future journal version of this paper.

\setcounter{tocdepth}{2}

\section{Preliminaries}\label{prelim}

An \emph{epistemic model} $\bS= (S, \sima, \underline{\bullet})_{a\in A}$ consists of: a set $S$ of \emph{states}; a family of equivalence relations $\sim_a\subseteq S\times S$, labelled by \emph{agents} $a\in A$ coming from a finite set $A$, and denoting the respective agents' epistemic \emph{indistinguishability relations}; and a \emph{truth-assignment function}\footnote{This last component is just a dual presentation of the more standard \emph{valuation map} $\|\bullet\|: Prop \to {\mathcal P}(S)$. Indeed, given the truth-assignment map, we can define the valuation by putting $\|p\|:=\{s\in S: \underline{s}(p)=1\}$. And vice-versa: given the valuation, we can put $\underline{s}:=\{p\in Prop: s\in \|p\|\}$.}
$\underline{\bullet}: S\to 2^{Prop}$, mapping each state $s\in S$ to a truth-assignment $\underline{s}:Prop\to 2=\{0,1\}$ defined on a given set $Prop$ of \emph{atomic propositions} (and mapping each $p\in Prop$ to a truth value $\underline{s}(p)\in \{0,1\}$). For any group of agents $B\subseteq A$, we define two equivalence relations $\sim_B, \sim^B\subseteq S\times S$:
$$\sim_B \,\, :=\,\, \bigcap_{b\in B} \sim_b, \,\,\,\,\,\,\,\,\,\,\,\,\,\,\,
\sim^B \,\, :=\,\, (\bigcup_{b\in B} \sim_b)^*,$$
where, for any relation $R\subseteq S\times S$, we take $R^*$ to denote the reflexive-transitive closure of $R$.

One can now introduce, for each group $B\subseteq A$,
a \emph{distributed knowledge} operator $D_B\varphi$ as the Kripke modality\footnote{The Kripke modality $[R]$ for a binary relation $R\subseteq S\times S$ is defined by putting $s\models [R]\varphi$ iff we have $t\models\varphi$  for all the states $t\in S$ with $sRt$.} for $\sim_B$, and a \emph{common knowledge} operator $C_B\varphi$ as the Kripke modality for $\sim^B$. In this paper, \emph{individual knowledge} $K_a\varphi$ is defined as just an \emph{abbreviation} for $D_{\{a\}}$.\footnote{But see e.g. \cite{FHMV} for an alternative treatment, in which both $K$ and $D$ are primitive operators, with $K_a\varphi$ being only logically equivalent to $D_{\{a\}}\varphi$.}

The \emph{logic of distributed knowledge} $LD$ has as language the set of all formulas built recursively from atomic formulas $p\in Prop$ by using negation $\neg\varphi$, conjunction $\varphi\wedge \psi$ and distributed knowledge operators $D_B\varphi$ (for all groups $B\subseteq A$). The \emph{logic of distributed knowledge and common knowledge} $LDC$ is obtained by extending the language of $LD$ with common knowledge modalities $C_B\varphi$.
These logics are known to be decidable and have the finite model property. Table \ref{tb0} below includes complete proof systems $\mathbf{LDC}$ and $\mathbf{LD}$ for these logics:

\vspace{-0.3cm}
\begin{table}[h!]
\begin{center}
{\small
\begin{tabularx}{\textwidth}{>{\hsize=0.5\hsize}X>{\hsize=1.5\hsize}X>{\hsize=0.8\hsize}X}
\toprule
\textbf{(I)} & \textbf{Axioms and rules of classical propositional logic}
 \vspace{1mm} \ \\
 \textbf{(II)} &\textbf{$S5$ axioms and rules for distributed knowledge}:   \ \\
($D$-Necessitation)& From $\varphi$, infer $D_B\varphi$  \ \\
($D$-Distribution) & $D_B (\varphi \to \psi) \to (D_B\varphi \to D_B\psi)$  \ \\
(Veracity) & $D_B\varphi \to \varphi$   \ \\
(Pos. Introspection) & $D_B\varphi \to D_B D_B\varphi$   \ \\
(Neg. Introspection) & $\neg D_B\varphi \to D_B \neg D_B\varphi$  \vspace{1mm} \ \\
\textbf{(III)} & \textbf{Special axiom for distributed knowledge}:  \ \\
(Monotonicity) & $D_B\varphi\to D_C\varphi$, for all $B\subseteq C\subseteq A$ \ \\
\textbf{(IV)} & \textbf{Axioms and rules for common knowledge}: \ \\
($C$-Necessitation) & From $\varphi$, infer $C_B\varphi$  \ \\
($C$-Distribution) & $C_B(\varphi\to \psi)\to (C_B\varphi\to C_B\psi)$  \ \\
($C$-Fixed Point) & $C_B\varphi \to (\varphi\wedge \bigwedge_{b\in B} K_b C_B\varphi)$  \ \\
($C$-Induction) & $C_B (\varphi \to \bigwedge_{b\in B} K_b \varphi)\to (\varphi\to C_B\varphi)$ \vspace{1mm} \ \\
\bottomrule
\end{tabularx}
}
\end{center}
\vspace{-0.5cm}
\caption{The proof system $\mathbf{LDC}$. Individual knowledge is a defined operator $K_a \varphi := D_{\{a\}}\varphi$. The system $\mathbf{LD}$ is obtained by eliminating the axioms in group (IV).}\label{tb0}
\end{table}
\vspace{-0.5cm}
\begin{exam} \label{one} The drawing below represents an epistemic model $\bS$ with 4 atomic propositions $Prop=\{p,q,r,w\}$ and 3 agents $A=\{a, b, c\}$. The possible states are represented by circles, inside which we write all the atomic propositions that are \emph{true} at that state. By default, the missing ones are false, so this fully captures each state's truth assignment (e.g. the circle labelled $p$ represents a state at which $p$ is true, but $q$ and $r$ are false. The epistemic indistinguishability relations are represented as edges (``links'') labelled by the respective agent. Since all our epistemic models are assumed to be $S5$, all $\simB$ are equivalence relations; hence, we skip the loops, as well as some of the links that can be obtained by transitivity.
\vspace{-0.5cm}
\begin{center}
      \begin{tikzpicture}[node distance=1.6cm]
        \tikzstyle{zz}=[decorate,decoration={zigzag,post=lineto,post length=5pt}]

        \tikzstyle{w}=[draw=black,thick,circle,
        minimum size=1.6em]

        \tikzstyle{every edge}=[draw,thick,font=\footnotesize]

        \tikzstyle{every label}=[font=\footnotesize]

        \tikzstyle{ev}=[anchor=center,node distance=3.8cm]

        \tikzstyle{wred}=[w,draw=black]



        \node[w,label={left:},label={above:}] (w1) {$p$};

         \node[w, right of=w1,label={above:}] (w2) {$q$};

            \node[w, below of=w1,label={below:}] (w3) {$r$};

            \node[w, below of=w2,label={below:}] (w4) {$w$};






        \path (w1) edge[-] node[above]{$b$} (w2);

           \path (w1) edge[-] node[left]{$a$} (w3);

                \path (w1) edge[-] node[above]{$c$} (w4);

        \path (w2) edge[-] node[above]{$\,$} (w3);

                 \path (w2) edge[-] node[right]{$a$} (w4);

         \path (w3) edge[-] node[below]{$b$} (w4);

      \end{tikzpicture}
\end{center}
\vspace{-0.4cm}
\end{exam}
\noindent In this model, the disjunction of all atomic propositions is common knowledge: we have $C_{\{a,b,c\}} (p\vee q\vee r\vee w)$. In the $p$-state, $p$ is true, but this fact is not known to any individual agent. Instead, $p$ is distributed knowledge among all agents: we have $D_{\{a,b,c\}} p$. Intuitively, this distributed knowledge can be ``resolved'', i.e. converted into actual (common) knowledge, if the agents share all their information.\footnote{In \cite{SB}, we study different epistemic and doxastic states of groups of groups of agents that are realizable via specific sharing protocols.} In fact, in this state $p$ is distributed knowledge even within any $2$-agent group: we have $D_{\{a,b\}}p \wedge D_{\{b,c\}}p \wedge D_{\{a,c\}}p$. Again, intuitively this can be converted into common knowledge within each such $2$-agent group by using only in-group communication: e.g. if $a$ and $c$ tell each other all they know, then $C_{\{a,c\}}p$ holds after that. In fact, $a$ and $c$ become ``epistemically superior'' to $b$ after that: they will know all he knows.
Finally, note that in this model, the only way to obtain \emph{full} common knowledge $C_{\{a,b,c\}} p$ is to require \emph{every} agent to share her information with some others: no communication restricted to a specific $2$-agent subgroup can ever result in $C_{\{a,b,c\}} p$ in this model. As we'll see, this is \emph{not} the case in other models: very restricted forms of communication \emph{can} sometimes realize full common knowledge!

We are interested in extending the framework of epistemic logic to capture all the intuitive observations above. Standard temporal-epistemic logics \cite{FHMV, PaRam}, and dynamic approaches e.g. Public Announcement Logic (PAL) \cite{Plaza} and Dynamic Epistemic Logic (DEL) \cite{BMS,DHK,LDII,Baltag and Renne:2016}, can do this in a sense; but only by \emph{always making explicit the specific sentences} that are being communicated. This is not always convenient: the total sum of an agent's knowledge can typically be expressed only by a huge formula! In fact, sometimes this is worse: depending on the expressivity of the language, there might be \emph{no formula} in our language that captures this!

But even when there is one, there are problems with the standard DEL setting in some cases. In a purely syntactic approach to communication, the \emph{order} of the announcements matters: previously expressible information may become inexpressible after another announcement, which may prevent the full resolution of distributed knowledge \cite{Lonely}. Moreover, information that is locally expressible by formulas in every state may not be uniformly captured by any formula.\footnote{Say, all that agent $a$ knows is the value of some variable $x_a$ (ranging over natural numbers), e.g. some secret password. Suppose it is common knowledge (among all agents $a, b, c$) that $a$ shares this information with $b$. In each state, this is equivalent to a specific announcement of a sentence $x=n$ shared between $a$ and $b$. But from the perspective of the outsider agent ($c$), this is not equivalent to a specific announcement of \emph{any} sentence, and not even to \emph{any finite set} of possible such announcements! Indeed, to calculate $b$'s knowledge after this action in standard DEL, we need an event model with infinitely many events (one for each formula $x=n$ for any $n\in N$), all indistinguishable for agent $c$.}

What we need is to be able to abstract away from the specific announcement, and formalize directly the action of sharing ``all you know'' (with some or all of the other agents). Before doing that though, we need to formalize the \emph{effects} of such an action: the state of affairs in which one agent (or group) has \emph{epistemic superiority} over another agent (or group).

\section{We know all you know}\label{superiority}

As we saw, not all epistemic agents are equal. Some may know `more' than others: in fact, an agent $b$ may know \emph{everything} that another agent $c$ knows. This is easier and more realistic to assume if we identify agent $c$'s `knowledge' with the content of his associated information database. The more `expert' agent $b$ may have accessed this database, legally or illegally.

\medskip

In this paper, we extend epistemic logic $LDC$, with comparative epistemic assertions $B\preceq C$ between groups of agents $B, C\subseteq A$, saying that \emph{group $B$'s (distributed) knowledge includes all group $C$'s (distributed) knowledge}.\footnote{This is an extension to groups of the individual comparisons $b\preceq c$ in \cite{DitmarschEtAl}.} For short, we read this as: group $B$ ``knows at least as much'' as group $C$.
When $B\preceq C$ but $C\not\preceq B$, we write $B\prec C$ and say that group $B$ is ``more expert'' than (or ``epistemically superior to'') group $C$. As before, we skip set brackets when dealing with singletons, e.g. writing $b\preceq c$ for $\{b\}\preceq \{c\}$, etc.
The semantics is given by:
$$s\models B\preceq C \,\, \mbox{ iff } \,\, \forall t\in S \, (s\simB t \Rightarrow s\simC t).$$
This definition needs some explanation. Intuitively, the \emph{strongest} piece of knowledge collectively possessed by group $B$ at state $s$ (that entails \emph{everything known} by every $b\in B$) is $s$'s equivalence class $[s]_B=\{t\in S: s\simB t\}$ modulo $\simB$ (comprising all states compatible with the information possessed by agents in $B$).\footnote{Note that if this equivalence class shrinks, the knowledge of the agent (or group of agents) increases. The highest level of knowledge that an agent can achieve is the one in which she can distinguish between all states, i.e. when the equivalence classes are singletons. While this is a standard way of modelling knowledge in epistemic logic, philosophically this conception of knowledge is also well known in the literature and captures the concept of ``information as range'' \cite{HandbookJvB}.} The above clause says that $B\preceq C$ holds at $s$ iff $[s]_B\subseteq [s]_C$, i.e. if group $B$'s total distributed knowledge is at least as strong as group $C$'s distributed knowledge.

\medskip

\begin{exam} \label{oneprime} In the model in Example \ref{one}, group $\{a,b\}$ is `epistemically superior' to $c$: the distributed knowledge within $\{a,b\}$ includes everything known by $c$ but not the other way around (i.e. $\{a,b\}\preceq c$ but $c\not\preceq\{a,b\}$. In the same model, groups $\{a,b\}$ and $\{b,c\}$ are `epistemically equivalent': their distributed knowledge is the same (i.e. $\{a,b\}\preceq \{b,c\}$ and $\{b,c\}\preceq \{a,b\}$).
\end{exam}

\begin{exam} \label{onesecond} In the previous example, all mentioned epistemic comparisons hold globally (at all states). But in the model below, the group $\{a,c\}$ is epistemically superior to $\{b,d\}$ only in the $r$-state; dually, $\{b,d\}$ is superior to $\{a,c\}$ in the $q$-state; while in the $p$-state, the two groups are incomparable ($\{a,c\}\not\preceq \{b,d\}$ and $\{b,d\}\not\preceq \{a,c\}$. But groups $\{a,b\}$ and $\{c,d\}$ are epistemically equivalent ($\{a,b\}\preceq \{c,d\}$ and $\{c,d\}\preceq \{a,b\}$) in all states.
\vspace{-0.2cm}
\begin{center}
      \begin{tikzpicture}[node distance=1.6cm]
        \tikzstyle{zz}=[decorate,decoration={zigzag,post=lineto,post length=5pt}]

        \tikzstyle{w}=[draw=black,thick,circle,
        minimum size=1.6em]

        \tikzstyle{every edge}=[draw,thick,font=\footnotesize]

        \tikzstyle{every label}=[font=\footnotesize]

        \tikzstyle{ev}=[anchor=center,node distance=3.8cm]

        \tikzstyle{wred}=[w,draw=black]


        \node[w,label={below:}] (w1) {$q$};

        \node[w, right of=w1,label={below:},label={above:}] (w2) {$p$};

        \node[w,right of=w2,label={below:}] (w3) {$r$};


        \path (w1) edge[-] node[above]{$a,c$} (w2);

        \path (w2) edge[-] node[above]{$b,d$} (w3);

      \end{tikzpicture}
      \end{center}
\end{exam}

The following is our first new result, whose proof is sketched in Appendix \ref{CompletenessCd}.\footnote{The proof is rather intricate: both completeness and decidability involve a detour through a more general type of relational models, called pseudo-models.}
\begin{prop}\label{LCD-completeness}
The logic $LDC\preceq$, obtained by adding to the language of $LDC$ group comparison statements $B\preceq C$,
is decidable. A complete axiomatization is given by the proof system $\mathbf{LDC\preceq}$ in Table \ref{tb1}.
Moreover, the fragment $LD\preceq$ (obtained by eliminating the common knowledge operator) is axiomatized by the proof system $\mathbf{LD\preceq}$, obtained by removing from  Table \ref{tb1} the last group (IV) (the axioms and rules for common knowledge).
\end{prop}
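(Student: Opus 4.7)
The plan is to establish soundness, completeness, and decidability following the classical pseudo-model strategy of Fagin, Halpern, Moses and Vardi for $LDC$, adapted to handle the new group-comparison connective $\preceq$. Soundness is a routine check of each axiom against the Kripke semantics, including the anticipated axioms for $\preceq$: that it is a preorder, that it is compatible with distributed knowledge via $(B\preceq C)\to (D_C\varphi\to D_B\varphi)$, and that it satisfies an introspection principle such as $(B\preceq C)\to D_B(B\preceq C)$. All of these are immediate from the clause $s\models B\preceq C$ iff $[s]_B\subseteq[s]_C$.

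For completeness I would build the canonical \emph{pseudo-model} whose worlds are maximal $\mathbf{LDC\preceq}$-consistent sets, taking as primitive a family of equivalence relations $\sim_B$ (one per group $B$) defined by $\Gamma\sim_B\Delta$ iff $\{\varphi:D_B\varphi\in\Gamma\}\subseteq\Delta$; the Monotonicity axiom forces $\sim_C\subseteq\sim_B$ whenever $B\subseteq C$, and the $S5$ axioms force each $\sim_B$ to be an equivalence. Using pseudo-models sidesteps the classical obstacle that the canonical $\sim_B$ need not coincide with $\bigcap_{b\in B}\sim_b$. A standard existence lemma yields the truth lemma for $D_B$-formulas; for $\preceq$ one needs an additional existence step showing that $(B\preceq C)\notin\Gamma$ produces a witness $\Delta$ with $\Gamma\sim_B\Delta$ and $\Gamma\not\sim_C\Delta$, which is where the compatibility and introspection axioms for $\preceq$ do their work. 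Common knowledge is handled as usual: one restricts to the Fischer--Ladner closure of a refuted formula and uses the Induction axiom to identify the canonical interpretation of $C_B\varphi$ with the Kripke modality for $\sim^B=(\bigcup_{b\in B}\sim_b)^*$.

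The main obstacle is the transition from pseudo-models to genuine epistemic models. I would unravel each $\sim_B$-edge not already realised by a $\bigcup_{b\in B}\sim_b$-path into a chain of freshly duplicated copies of its endpoints, carefully tagging the copies so that intersecting the individual relations $\bigcap_{b\in B}\sim_b$ in the resulting model recovers exactly the old $\sim_B$ on the original worlds. The delicate new point, beyond the standard $LDC$ construction, is that $\preceq$-assertions depend on full equivalence classes: the duplication must be arranged so that for every original pair of worlds and every group $B$ the new $B$-equivalence class of an original world agrees with the old one, ensuring that the truth values of $B\preceq C$ are preserved at all original worlds. The monotonicity condition on the pseudo-relations together with the introspection axiom for $\preceq$ are precisely what legitimize this local surgery.

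Decidability then follows by filtration: filter the canonical pseudo-model through a finite set of formulas closed under subformulas, the Fischer--Ladner closure of the refuted formula, and all $B\preceq C$ statements occurring in it. The quotient remains a finite pseudo-model satisfying the truth lemma, and the pseudo-to-genuine unraveling can be carried out finitely, yielding the finite model property and hence decidability. The fragment $LD\preceq$ is handled by the same argument, with the Fischer--Ladner closure replaced by ordinary subformula closure and the common-knowledge step dropped.
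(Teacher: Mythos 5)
Your overall architecture does match the paper's: soundness by inspection, a canonical construction over the Fischer--Ladner closure using \emph{pseudo-models} in which each $\sim_B$ is a primitive equivalence relation and $B\preceq C$ is treated as an atom, then a transformation into genuine epistemic models, with the common-knowledge operator handled via the Induction axiom. The two places where your proposal has genuine gaps are precisely the two places where the real work lies.

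First, the pseudo-model-to-model step. You propose to ``unravel each $\sim_B$-edge not already realised by a $\bigcup_{b\in B}\sim_b$-path into a chain of freshly duplicated copies of its endpoints.'' A chain of individually-labelled links realises \emph{reachability} under the union of the $\sim_b$ (i.e.\ the common-knowledge relation $\sim^B$), not the \emph{intersection} $\bigcap_{b\in B}\sim_b$ that defines distributed knowledge; a single $\sim_B$-edge must instead be realised by one edge that simultaneously carries \emph{all} the relevant individual relations. Moreover, your target invariant --- that ``the new $B$-equivalence class of an original world agrees with the old one'' --- cannot hold literally in a duplicated model and, even read as ``projects onto,'' is not strong enough: inclusion of projections does not give inclusion of the new classes, which is what the truth of $B\preceq C$ requires. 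The paper's construction unravels the whole pseudo-model into a \emph{forest} of histories $(s_0,B^1,s_1,\dots,B^n,s_n)$ (so that any two nodes are joined by a unique non-redundant path, making intersections computable path-wise), and --- this is the key device you are missing --- defines the new individual relation $\sim_b$ to contain a $B'$-labelled edge out of $h$ exactly when $B'\preceq\{b\}$ holds at the last state of $h$. This upward closure under \emph{local epistemic superiority} is what simultaneously forces $\sim_B=\bigcap_{b\in B}\sim_b$ to coincide with the intended relation and makes the map sending a history to its last state a p-morphism that preserves the comparative atoms in both directions. Without it, the left-to-right direction of atomic preservation for $B\preceq C$ fails: a $B$-labelled edge out of a state satisfying $B\preceq C$ would not automatically be a $C$-edge in the new model.

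Second, decidability. The unravelled model is an infinite forest (histories of unbounded length), and it is not clear that it ``can be carried out finitely'': any finite quotient threatens the unique-path property on which the computation of the intersections and of the $\preceq$-atoms depends. The paper does not claim the finite model property for genuine models at all; it derives decidability from soundness \emph{plus} completeness with respect to \emph{finite pseudo-models} (the canonical pseudo-model over the Fischer--Ladner closure is already finite), together with decidable model checking on finite pseudo-models. You should reroute your decidability argument the same way rather than through a finite genuine model.
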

\begin{table}[h!]
\begin{center}
{\small
\begin{tabularx}{\textwidth}{>{\hsize=0.5\hsize}X>{\hsize=1.5\hsize}X>{\hsize=0.8\hsize}X}
\toprule
\textbf{(I)} & \textbf{Axioms and rules of classical propositional logic}
 \vspace{1mm} \ \\
 \textbf{(II)} &\textbf{$S5$ axioms and rules for distributed knowledge}
\ \\&
(As in Table \ref{tb0})   
 \ \\
\textbf{(III)} & \textbf{Axioms for comparative knowledge}:  \ \\
(Inclusion) & $B\preceq C$, provided that $C\subseteq B$ \ \\
(Additivity) & $\left(B\preceq C\wedge B\preceq E\right) \to B\preceq C\cup E$  \ \\
(Transitivity) & $\left(B\preceq C\wedge C\preceq E\right)\to B\preceq E$ 
\ \\
 (Known Superiority) & $B\preceq C \to D_B (B\preceq C)$  \ \\
  (Knowledge Transfer) &  $B\preceq C \to \left(D_C\varphi \to D_B\varphi \right)$  \vspace{1mm} \ \\
  \textbf{(IV)} & \textbf{Axioms and rules for common knowledge}
  \ \\&
(As in Table \ref{tb0})
\ \\
\bottomrule
\end{tabularx}
}
\end{center}
\vspace{-0.6cm}
\caption{The proof system $\mathbf{LDC\preceq}$. Individual knowledge is a defined operator $K_a \varphi := D_{\{a\}}\varphi$. The system $\mathbf{LD\preceq}$ is obtained by eliminating the axioms in group (III).}\label{tb1}
\end{table}
Note that the axioms of group (III) take the place of the Monotonicity Axiom from Table \ref{tb0}, capturing natural properties of epistemic comparison and its interaction with distributed knowledge.\footnote{Indeed, Monotonicity becomes now \emph{provable} from these axioms.}
In particular, ``Known Superiority'' says that the more-expert group (collectively) knows its own epistemic superiority over a less-expert group. ``Knowledge Transfer'' says that
a more-expert group collectively knows everything known by a less-expert group.

\section{Tell me all you know: semi-public sharing}\label{semi-public}

We move on now to dynamics.
How can an agent $b$ come to know everything known by another agent $a$? One way is if $a$ actually shares all her information with $b$. In this section we assume this \emph{access permission is common knowledge}: all agents know that this access is being granted to $b$ (and know that the others know, etc). But note that we are not capturing $a$'s intentions or her deontic permissions, but only in the \emph{epistemic-informational features} of this action. For instance, suppose that $b$ gains access to $a$'s information without $a$'s permission (say, by hacking $a$'s information database), but this is done in such an obvious way that it is still common knowledge that it is being done (say, the hacker is `bragging': he issues a public statement confirming the hack). As long as $b$'s access gaining is still common knowledge, this information stealing has the same epistemic effect as the previously considered action of information sharing!

We can consider more general such actions, e.g. $a$ shares her information with a whole group $G$ (say, she gives permission to all agents in $G$ to access her knowledge base). Or all the agents in a group $H$ share all their information with another group $G$; or alternatively, some member of $G$ hacks $H$'s database, ``reads'' it and posts it all on a $G$-shared forum (so that all $G$-members can also ``read'' it), but the theft is discovered and publicly announced on TV; while, \emph{at the same time} another group $H'$ shares all \emph{their} information with group $G'$, etc.

We call all these actions \emph{semi-public `reading' events}. In all of them, some agents get to access (`read') some other agents' knowledge base(s). But \emph{the fact that this access is gained (or not) is public}: it is common knowledge who can ``read'' whose knowledge base during these events. The class of semi-public reading events include the \emph{fully public} ones, in which \emph{both the information} that is being accessed \emph{and the access itself are publicly available}: e.g. an agent or group publicly shares their information with \emph{everybody}; or when a hacker gains access to another agent database and posts on the internet all the information contained in it, thus making it all publicly available (cf. the WikiLeaks case).

\smallskip

\par\noindent\textbf{Reading maps}. To represent a semi-public reading event, we only need the specify who can ``read'' what.
A \emph{reading map} is a function $\alpha: A\to {\mathcal P}(A)$, mapping agents $a\in A$ to sets of agents $\alpha(a)\subseteq A$, subject to the constraint that
$$a\in \alpha(a) \,\, \mbox{ (for every $a\in A$).}$$
Intuitively, $\alpha(a)$ is the \emph{set of agents whose information is accessed by $a$} during this action. So this last constraint means that \emph{every agent $a$ can always re-read her own knowledge base}.\footnote{This is a technical assumption, not actually necessary (since we assume our agents have perfect memory, so they don't actually need to keep re-reading their own information), but which simplifies our reduction laws.}

Given a reading map $\alpha$, we extend the notation $\alpha(a)$ to groups of agents $B\subseteq A$, putting
$$\alpha(B) \,\, :=\,\, \bigcup_{b\in B} \alpha(b)$$
for the \emph{set of agents whose information can be accessed by some $B$-agent} during this action.

\medskip

\par\noindent\textbf{Notation conventions for reading maps}. In general, we denote specific reading maps by using tuples of expressions $a:B$, one for each agent $a$, to express the fact that agent $a$ reads the knowledge bases of all agents in $B$. So the tuple
$(a: G_a)_{a\in A}$ denotes the map $\alpha:A\to {\mathcal P}(A)$ given by $\alpha(a)=G_a$ for all $a\in A$. But we also introduce some conventions to simplify this notation: since $a\in \alpha(a)$ is assumed as a general condition, we can always choose to skip $a$ from the list of agents in $G_a$. Also, if $\alpha$ assigns the same reading assignment to two or more agents, we can compress the tuple, writing e.g. $G:H$ instead of the longer enumeration $(a: H)_{a\in G}$. Also, we skip the set brackets whenever either $G$ or $H$ is a singleton. It is also natural to be able to skip altogether from our tuple the agents $a$ who can only read their own base $\alpha(a)=\{a\}$. With these conventions, e.g. $(A:a)$ represents the map $\alpha$ given by $\alpha(b)=\{a,b\}$ for all $b\in A$; while $(G:H)$ represents the map $\beta$ given by: $\beta(b)=H\cup\{b\}$ if $b\in G$, and $\beta(b)=\{b\}$ otherwise.

\medskip

\par\noindent\textbf{Special reading maps}. We also introduce special notations for especially useful types of reading maps. Given a group $G$, we also ambiguously denote by $G$ the reading map $(A:G)$ (mapping every agent $a$ to $G\cup\{a\}$, so everybody reads the information possessed by $G$-agents).\footnote{We use systematic ambiguity: the reader can see from the context when $G$ denotes a group and when it denotes the corresponding reading map.} In particular, when $G=\{b\}$ is a singleton, we skip the set brackets as mentioned above, and write $a$ for the reading map $\{a\}=(A:a)$ (by which everybody reads $a$'s information). Finally, given mutually disjoint groups $G_1, \ldots, G_n$, we use the abbreviated notation $(G_1, G_2, \ldots, G_n)$ to denote the reading map $(G_1:G_1, G_2:G_2, \ldots , G_n:G_n)$ (that maps every agent $b$ to $G_k$ if $b\in G_k$ for some $k$, and to $\{b\}$ otherwise). As before, we skip set brackets when any of the $G_k$'s is a singleton. Note though that the reading maps $G$ and $(G)$ are \emph{different} (and the same for $a$ versus $(a)$).
In fact, this last notation can be naturally generalized to lists $G_1, \ldots, G_n\subseteq A$ of groups
that are \emph{not necessarily mutually disjoint}: this will denote the map $\alpha$ given by putting
$\alpha(b)=\bigcup\{G_k: 1\leq k\leq n \mbox{ with } b\in G_k\}$ if $b\in \bigcup_k G_k$, and $\alpha(b)=\{b\}$ otherwise.)

\medskip

We proceed now to formalize semi-public reading actions in DEL style
\cite{BMS, LDII, DHK}, as  \emph{epistemic updates}: functions mapping every epistemic model $\bS$ to a new model $\bS^{!\alpha}$.

\medskip

\par\noindent\textbf{Semantics of semi-public reading events}. Given a reading map $\alpha$, we denote by $!\alpha$ the corresponding semi-public event: \emph{it is common knowledge that every agent $a\in A$ simultaneously accesses the knowledge bases of all agents $b\in \alpha(a)$}. Formally, given any epistemic model $\bS= (S, \simb, \underline{\bullet})_{b\in A}$, the event $!\alpha$ returns an updated
model $\bS^{!\alpha}\,\, :=\,\, (S, \simb^{!\alpha}, \underline{\bullet})_{b\in A}$,
having the \emph{same set of states} $S$, the \emph{same valuation} $\underline{\bullet}$, but new epistemic indistinguishability relations $\simb^{!\alpha}$, given by:
$$\simb^{!\alpha} \,\, :=\,\, \sim_{\alpha(b)}$$
Intuitively, each agent $b$ acquires all the knowledge of group $\alpha(b)$, hence her new indistinguishability relation will coincide with the distributed knowledge relation for this group. (Note that, if $b$ has perfect memory, then his new knowledge relation should in fact be $\sim_b \cap \sim_{\alpha(b)}=\sim_{\{b\}\cup
{\alpha(b)}}$; but this is the same as $\sim_{\alpha(b)}$, given our above-mentioned simplifying assumption that $b\in \alpha(b)$.)

\medskip

\par\noindent\textbf{Adding dynamic modalities for semi-public reading actions}. As usual in Dynamic Epistemic Logic, we can now enrich the syntax of any of our logics by adding dynamic modalities $[!\alpha]\varphi$ for each reading map $\alpha$, saying that \emph{$\varphi$ will hold after the semi-public reading event $\alpha$}.
The \emph{semantic clause }for these dynamic modalities is given again as usual in Dynamic Epistemic Logic, by evaluating $\varphi$ at the same state in the updated model:
$$s\models_{\bS}[!\alpha]\varphi \,\,\, \mbox{ iff } \,\,\, s\models_{\bS^{!\alpha}}\varphi.$$

\begin{exam}\label{!a} (\emph{Tell Us All You Know}) For a given agent $b\in A$, $!b$ is a ``fully public'' action, formally given by the reading map $b=(A:b)$ (which according to the above conventions maps every $a \in A$ to $\{a,b\}$). This can be interpreted as public sharing: \emph{$b$ publicly announces all she knows}; but as already mentioned, it can also represent ``public hacking'': an anonymous hacker posts all $b$'s information on a public site. In the drawing below, we represent the effect of the action $!b$ performed on the initial epistemic model $\bS$ in Example \ref{one} (reproduced below in the diagram on the left). The updated model $\bS^{!b}$ after the action $!b$ is in the diagram on the right.
\vspace{-0.2cm}
\begin{center}
      \begin{tikzpicture}[node distance=1.6cm]
        \tikzstyle{zz}=[decorate,decoration={zigzag,post=lineto,post length=5pt}]

        \tikzstyle{w}=[draw=black,thick,circle,
        minimum size=1.6em]

        \tikzstyle{every edge}=[draw,thick,font=\footnotesize]

        \tikzstyle{every label}=[font=\footnotesize]

        \tikzstyle{ev}=[anchor=center,node distance=3.8cm]

        \tikzstyle{wred}=[w,draw=black]



        \node[w,label={left:},label={above:}] (w1) {$p$};

         \node[w, right of=w1,label={above:}] (w2) {$q$};

            \node[w, below of=w1,label={below:}] (w3) {$r$};

            \node[w, below of=w2,label={below:}] (w4) {$w$};




\node[right of=w2,node distance=8em,anchor=west] (ev0) {\,};
\node[right of=ev0,node distance=8em,anchor=east] (ev1){\,};

        \node[w,right of =ev1, node distance=8em, label={left:},label={above:}] (w5) {$p$};

         \node[w, right of=w5,label={below:}] (w6) {$q$};

            \node[w, below of=w5,label={below:}] (w7) {$r$};

            \node[w, below of=w6,label={below:}] (w8) {$w$};


        \path (w1) edge[-] node[above]{$b$} (w2);

           \path (w1) edge[-] node[left]{$a$} (w3);

                \path (w1) edge[-] node[above]{$c$} (w4);

        \path (w2) edge[-] node[above]{$\,$} (w3);

                 \path (w2) edge[-] node[right]{$a$} (w4);

         \path (w3) edge[-] node[below]{$b$} (w4);

           \path (w5) edge[-] node[above]{$b$} (w6);

                 \path (w7) edge[-] node[below]{$b$} (w8);

                   \path (ev0) edge[|->] node[above]{$!b$} (ev1);

      \end{tikzpicture}
\end{center}
\vspace{-0.3cm}
Before this communication event (i.e. in the model on the left), $a$ can distinguish between the left states and the right states (she knows
$p\vee r$ if the actual state is on the left, and knows $q\vee w$ if the actual state is on the right), $b$ can distinguish between the upper and the lower states, while $c$ can distinguish between the two diagonals. After $b$ publicly shares all his knowledge (i.e. in the updated model on the right), the other agents $a$ and $c$ will add $b$'s knowledge to their own, and will thus be able to distinguish between \emph{every} two states: they both come to \emph{know the actual state}. The only one still uncertain is $b$ himself (who learns nothing from his own announcement).
\end{exam}

\begin{exam}\label{!a} (\emph{Tell Me All You Know}) For given agents $a,b\in A$, the action $!(a:b)$ (given by the reading map $(a:b)$, according to the above conventions) is the one in which \emph{it is common knowledge that
$b$ shares with $a$ all she knows}. Note that this sharing event is \emph{not} fully public: the outsiders cannot read
$b$'s information (though they know that
$a$ can read it).
\end{exam}

\begin{exam}\label{!G} (\emph{You'all Tell Us All You Know}) For a given group $G\subseteq A$, the action $!G$ is the one by which \emph{all agents in $G$ publicly announce all they know}. Formally, it is given by the reading map $G$ (which maps every $b\in A$ to $G\cup\{b\}$). Like $!b$, this sharing action $!G$ is a ``fully public'' event. We illustrate this event in the diagram below. We start with the same initial epistemic model $\bS$ as in the previous example (on the left), and perform its update with the action $!\{a,b\}$, by which both agents $a$ and $b$ publicly share with everybody all they know. The result is the updated model $\bS^{!\{a,b\}}$ on the right.
After this action, \emph{everybody} comes to know the actual state, in fact if $p$ was the actual state them after this action \emph{$p$ becomes common knowledge} among all agents in $\{a,b,c\}$:
\vspace{-0.3cm}
\begin{center}
      \begin{tikzpicture}[node distance=1.6cm]
        \tikzstyle{zz}=[decorate,decoration={zigzag,post=lineto,post length=5pt}]

        \tikzstyle{w}=[draw=black,thick,circle,minimum size=1.6em]

        \tikzstyle{every edge}=[draw,thick,font=\footnotesize]

        \tikzstyle{every label}=[font=\footnotesize]

        \tikzstyle{ev}=[anchor=center,node distance=3.8cm]

        \tikzstyle{wred}=[w,draw=black]



        \node[w,label={left:},label={above:}] (w1) {$p$};

         \node[w, right of=w1,label={above:}] (w2) {$q$};

            \node[w, below of=w1,label={below:}] (w3) {$r$};

            \node[w, below of=w2,label={below:}] (w4) {$w$};





 \node[right of=w2,node distance=8em,anchor=west] (ev0) {\,};
 \node[right of=ev0,node distance=8em,anchor=west] (ev1) {\,};



        \node[w,right of =ev1, node distance=8em, label={left:},label={above:}] (w5) {$p$};

         \node[w, right of=w5,label={below:}] (w6) {$q$};

            \node[w, below of=w5,label={below:}] (w7) {$r$};

            \node[w, below of=w6,label={below:}] (w8) {$w$};



        \path (w1) edge[-] node[above]{$b$} (w2);

           \path (w1) edge[-] node[left]{$a$} (w3);

                \path (w1) edge[-] node[above]{$c$} (w4);

        \path (w2) edge[-] node[above]{$\,$} (w3);

                 \path (w2) edge[-] node[right]{$a$} (w4);

         \path (w3) edge[-] node[below]{$b$} (w4);

\path (ev0) edge[|->] node[above]{$!\{a,b\}$} (ev1);

      \end{tikzpicture}
\end{center}
\vspace{-0.3cm}
\end{exam}

\begin{exam}\label{!G} (\emph{Sharing Between Groups}) For a given group $G,H\subseteq A$, the semi-public reading action $!(G:H)$ (formally given by the reading map $(G:H)$) is the one in which \emph{it is common knowledge that all agents in $G$ share all they know with all agents in $H$}.

\end{exam}

\begin{exam}\label{!G} (\emph{Sharing Within Groups}) For groups $G_1, \ldots, G_n\subseteq A$, the semi-public event $!(G_1, \ldots, G_n)$ (given by the reading map $(G_1, \ldots, G_n)=(G_1:G_1, \ldots, G_n:G_n)$) is the one in which \emph{it is common knowledge that every agent in every group $G_k$ shares all she knows with the agents in that same group $G_k$}. A special case is the so-called \emph{$G$-resolution event} $!(G)$, in which (it is common knowledge) that agents in $G$ share all they know with each other.
Note the difference between $!(G)$ and the event $!G$ above. The corresponding dynamic operator $[!(G)]\varphi$ has already been considered in \cite{Agotnes}, under the name of \emph{resolution operator}, denoted by $R_G \varphi$.
\end{exam}

\smallskip
\par\noindent\textbf{Closure under sequential composition}. It is easy to see that \emph{the class of semi-public reading actions is closed under sequential composition}:
$$(\bS^{!\alpha})^{!\beta}= \bS^{!(\alpha\circ \beta)},$$
where $(\alpha\circ\beta)(a):= \alpha(\beta(a))$
is the \emph{functional composition} of reading maps. This immediately gives us the validity
$$[!\alpha][!\beta]\varphi \,\, \longleftrightarrow \,\, [!(\alpha\circ \beta)]\varphi,$$
known again as the ``Composition Law'' for public reading events.

\smallskip
\par\noindent\textbf{Subclasses closed (or not) under composition}.  Subclasses $!K$ of semi-public actions that are closed under sequential composition thus correspond to subclasses $K$ of reading maps that are closed under functional composition. An example is the class of group public sharing actions $\{!G: G\subseteq A\}$, which  is also closed under sequential composition. This can be easily seen from the fact that
$$(A:G)\circ (A:H)= (A:G\cup H),$$
which gives us
$$(\bS^{!G})^{!H}= \bS^{!(G\cup H)}.$$
In contrast, the class of individual sharing actions $\{!a: a\in A\}$ is \emph{not} closed under composition (since $!a;!b=!\{a,b\}\not = !c$ for any $c \in A$). Neither is the class $\{!(G): G\subseteq A\}$ of resolution actions, nor its extension to families of groups $\{!(G_1, \ldots, G_n): n\in N, G_1, \ldots, G_n\subseteq A\}$.

\smallskip
\par\noindent\textbf{The compositional closure of a class of actions} Given any subclass $!K$ of semi-public actions (based on a subclass $K$ of reading maps), we can look at its \emph{compositional closure}
$$!K^+ \,\, :=\,\, \{!(\alpha_1\circ \ldots \alpha_n): n\geq 1, \alpha_1, \ldots, \alpha_n\in K\},$$
which is the smallest class of actions that includes $K$ and is closed under sequential composition. For instance, it is easy to check that \emph{the compositional closure of the class $\{!(G): G\subseteq A\}$ of resolution actions} is the class
$$\{!(G_1\circ \ldots \circ G_n): n\geq 1, G_1, \ldots, G_n\subseteq A\}$$
where the reading map $(G_1\circ \ldots \circ G_n)\, :=\, (G_1)\circ\ldots \circ (G_n)$ is the functional composition of the reading maps $(G_1), \ldots, (G_n)$. It is useful to unfold this into the following explicit inductive definition of the above-defined liftings $\alpha(B)$ of these reading maps to sets of agents:
$$(G_1) (B) \, :=\,  G_1\cup B \,\,\,\, \mbox{ if $G_1\cap B\not=\emptyset$, } \,\,\,\,\,\,\, (G_1) (B)\, :=\, B \,\, \,\, \mbox{ otherwise; }$$
$$(G_1 \circ \ldots \circ G_n) (B)\, := \, G_1((G_2\circ\ldots \circ G_n) (B)),$$
from which we get their direct definition as maps from agents to sets:
$$(G_1 \circ \ldots \circ G_n) (b)\, := \, (G_1 \circ \ldots \circ G_n) (\{b\}).$$

\medskip
\par\noindent\textbf{Dynamic logics for semi-public actions}
The logic $LD\preceq!LD\preceq!$ is obtained by adding to the ``static'' language of $LD\preceq$ dynamic modalities $[!\alpha]\varphi$ for \emph{all} reading maps\footnote{Note that, if $A$ is finite, then there are only finitely many reading maps.} $\alpha$; while the logic $LDC\preceq!$ is obtained by adding such modalities to the language of $LDC$. Also, for any special class of reading maps, we can consider the logic with dynamic modalities restricted to the corresponding events, e.g. the ones of the form $!a$ (with $a\in A$), or $!G$ (with $G\subseteq A$), or $!(G)$.

\begin{prop}\label{LDsharing-completeness}
The dynamic logic  $LD\preceq!$ has the same expressivity as the static logic $LD\preceq$: every formula in $LD\preceq!$ is (provably) equivalent to a formula in $LD\preceq$ (via a step-by-step reduction using the Reduction Laws below). A complete axiomatization $\mathbf{LD\preceq!}$ of the dynamic logic $LD\preceq!$ is obtained by taking the axioms and rules of the system $\mathbf{LD\preceq}$ in Table \ref{tb1}, together the usual axioms and rules of normal modal logic\footnote{For details, see e.g. \cite{BRV}.} for dynamic modalities $[!\alpha]$, as well as the following \emph{`Reduction Laws' for semi-public reading actions}:
\par\noindent
{\centerline{
\begin{tabular}{llll}
$[!\alpha] p$ & $\,$ & $\longleftrightarrow$ $\,$ & $p$ \\
&&& \\
$[!\alpha] \neg \varphi$ & $\,$ & $\longleftrightarrow$ $\,$ & $\neg [!\alpha] \varphi$\\
&&& \\
$[!\alpha] (\varphi \wedge \psi)$ & $\,$ & $\longleftrightarrow$ $\,$ & $[!\alpha] \varphi\wedge [!\alpha] \psi$ \\
&&& \\
$[!\alpha] D_B \varphi$ & $\,$ & $\longleftrightarrow$ $\,$ & $D_{\alpha(B)} [!\alpha] \varphi,$ \\
&&&\\
$[!\alpha] (B\preceq C)$ & $\,$ & $\longleftrightarrow$ $\,$ & $\alpha(B)\preceq \alpha(C)$\\
\end{tabular}
}}
\end{prop}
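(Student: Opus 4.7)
The plan is to follow the standard Dynamic Epistemic Logic reduction strategy: establish the semantic validity of each of the five reduction laws, use them to define a translation $\tau: LD\preceq! \to LD\preceq$ that preserves provable equivalence, and then appeal to the completeness of $LD\preceq$ (Proposition~3.1) to lift completeness to the dynamic extension. Expressive equivalence of the two logics drops out immediately once such a translation is constructed.

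First I would verify the five reduction laws by unfolding the definitions of $!\alpha$. The clauses for atomic formulas, negation and conjunction are immediate because $\bS^{!\alpha}$ shares its state space and valuation with $\bS$, and $!\alpha$ is a total functional update. The essential semantic identity driving the last two laws is
$$\sim_B^{!\alpha} \;=\; \bigcap_{b\in B}\sim_b^{!\alpha} \;=\; \bigcap_{b\in B}\sim_{\alpha(b)} \;=\; \bigcap_{c\in\alpha(B)}\sim_c \;=\; \sim_{\alpha(B)},$$
using $\alpha(B)=\bigcup_{b\in B}\alpha(b)$ and the fact that $\sim_{\alpha(b)}=\bigcap_{c\in\alpha(b)}\sim_c$. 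Applied in $\bS$ (rather than in $\bS^{!\alpha}$), this identity immediately yields the $D_B$ reduction law, since the $\bS^{!\alpha}$-truth of $D_B\varphi$ at $s$ is quantification over $\sim_{\alpha(B)}$-successors of $\bS$ where $\varphi$ is read as $[!\alpha]\varphi$. For the comparative law, apply the same identity to both $B$ and $C$: truth of $B\preceq C$ in $\bS^{!\alpha}$ becomes $\sim_{\alpha(B)}\,\subseteq\,\sim_{\alpha(C)}$ in $\bS$, which is precisely $\alpha(B)\preceq\alpha(C)$.

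Next I would define a complexity measure on $LD\preceq!$-formulas that strictly decreases under each left-to-right application of the reduction rewrites, so that an innermost-first rewriting procedure terminates with a formula of $LD\preceq$. The standard weighting in which $[!\alpha]$ is more costly than any static connective works, but the $D_B$-clause duplicates $[!\alpha]$ on the right, so the weight assigned to $[!\alpha]$ must dominate multiplicatively in the structural complexity of its scope rather than additively. With this measure in place, nested dynamic modalities encountered during rewriting are collapsed via the Composition Law $[!\alpha][!\beta]\varphi\leftrightarrow[!(\alpha\circ\beta)]\varphi$. Each individual rewrite is a theorem of $\mathbf{LD\preceq!}$ (using the reduction laws together with the standard normal-modal rules for $[!\alpha]$, which license replacement of equivalents under the dynamic modality), so the translated formula is provably equivalent to the original.

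The main obstacle I expect is the bookkeeping needed to guarantee strict decrease in the $D_B$ clause, where the dynamic modality is replicated; once the multiplicative weighting is set up correctly, everything else is routine. Completeness of $\mathbf{LD\preceq!}$ then follows immediately: any valid $LD\preceq!$-formula is provably equivalent to its translation, which is a valid $LD\preceq$-formula and therefore derivable by Proposition~3.1. The same argument adapts without change to any dynamic fragment built from a subclass of reading maps that is closed under functional composition.
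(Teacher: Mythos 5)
Your proposal is correct and follows essentially the same route as the paper: verify the reduction laws semantically against the definition of $\bS^{!\alpha}$ (via the identity $\sim_B^{!\alpha}=\sim_{\alpha(B)}$), use them to provably translate every dynamic formula into a static one, and then invoke the completeness of $\mathbf{LD\preceq}$. The only organizational difference is that the paper packages the translation as a double induction (an outer induction on the dynamic formula together with a lemma reducing $[!\alpha]\theta$ for \emph{static} $\theta$ by induction on $\theta$), which handles nested dynamic modalities innermost-first and so never needs your complexity measure or the Composition Law (the latter is not among the listed axioms, though it is derivable); note also that for semi-public actions the $D_B$-law does not actually duplicate $[!\alpha]$, it merely pushes it inward.
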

\par\noindent The proof of this result is in Appendix \ref{reduction-completeness}.
In particular, applying the $D_B$-reduction law to a singleton group $B=\{b\}$, for any $b\in A$, we obtain a \emph{reduction law for individual knowledge}:
$$[ ! \alpha ] K_b \varphi  \, \, \, \, \, \, \, \Longleftrightarrow \, \, \, \, \, \, \, D_{\alpha(b)} [!\alpha] \varphi$$
Also, by restricting to the appropriate subclasses of events, we obtain axiomatizations of the corresponding logics. For instance, the logic of \emph{public sharing actions} $!G$ is axiomatized by the $!G$-instances of the reduction axioms, of which we only spell out the reductions for $D$ and $\preceq$:
$$[!G] D_B \varphi \, \, \,  \longleftrightarrow \, \, \,  D_{B\cup G} [!G] \varphi
\,\,\,\,\,\,\,\,\,\,\,\,\,\,\,\,\,\,\,\,\,\,\,\,\,\,
[!G] (B\preceq C) \, \, \,  \longleftrightarrow \, \, \,  B\cup G\preceq C\cup G$$

By restricting instead to \emph{``resolution'' actions} $!(G)$ (allowing sharing only within $G$), we obtain an
axiomatization for the \emph{logic of resolution} $!(G)$, in which e.g. the reduction law for $D$ splits in two cases, depending on the overlap of $G$ with $B$:\footnote{While the reduction law for $\preceq$ splits into \emph{four} cases, depending on the overlaps of $G$ with $B$ and with $C$.}
$$[!(G)] D_B \varphi  \, \, \, \longleftrightarrow  \, \,\,  D_{B\cup G} [!(G)] \varphi
\,\,\,\,\,\,\,\,\,\,\, \,\, \, \,\, \,\,  \,\,\,  \,\, \,\,\,  \,\,
\mbox{ for $B\cap G\not=\emptyset$ }$$
$$[!(G)] D_B \varphi \, \, \, \longleftrightarrow \, \,  \, D_{B} [!(G)] \varphi
\,\,\,\,\,\,\,\,\,\,\, \,\, \, \,\, \,\,  \,\,\,  \,\, \,\,\,  \,\, \,  \,\, \,\,  \,\,
\mbox{ for $B\cap G=\emptyset$ }$$
Here are some other interesting theorems of $\mathbf{LD\preceq!}$ :
$$[!a](b\preceq a)$$
(``After $a$ publicly announces all she knows, everybody comes to know all she knows.''),
$$D_{G\cup\{b\}} p \to [!G] K_b p$$
(``If $p$ is distributed knowledge in the group $G\cup\{b\}$, then after this group publicly shares all they know agent $b$ comes to know $p$.'')

\bigskip

But what about adding \emph{common knowledge} to this logic? The logic
$LDC\preceq!$, obtained by adding common knowledge operators $C_B$ to the language of $LD\preceq!$, can capture interesting fundamental properties. Here is a validity of $LDC\preceq!$ that involves the resolution operator $!(G)$ (by which agents $G$ share all they know with each other):
$$D_G p \leftrightarrow [!(G)]C_G p$$
(``Distributed knowledge $D_G p$ is the necessary and sufficient condition for realizing common knowledge $C_G p$ using only communication/sharing \emph{within the group} $G$).

\medskip

\par\noindent\textbf{No reduction for common knowledge} Unfortunately,
it turns out that \emph{there are no reduction laws for common knowledge} $[!G]C_B \varphi$ after sharing!
In the Dynamic Epistemic Literature, there are two well-known strategies for dealing with this problem. The \emph{first strategy}, used e.g. in \cite{BMS}, is to directly axiomatize the resulting logic, typically by using some kind of analogues of the axioms for common knowledge. The \emph{second strategy}, used e.g. in \cite{vBvEK}, is to enrich the static base of this logic with new operators, that allow for a reduction law for common knowledge.

\medskip

In the rest of this section, we \emph{sketch without proofs} the result of applying the first strategy to semi-public actions. Then in the next section, we systematically explore the second strategy.

\medskip

\par\noindent\textbf{A direct axiomatization of $LDC\preceq !$}
The idea of this first strategy is to treat the combination $[!\alpha]C_B$ \emph{as if} it was a single operator (``common knowledge after event $e$''), like a  kind of `dynamic version' of common knowledge. Then one can generalize the Fixed Point and Induction axioms to this dynamic combination.

To understand our generalization, it is convenient to first restate the $C$-Induction Axiom in terms of an \emph{inference rule}:
\vspace{-0.2cm}
$$\mbox{ From } \eta \to \left(\, \varphi\wedge \bigwedge_{b\in B} K_b \eta\, \right)  \, \, \, \, \, \, \, \mbox{ infer }  \, \, \, \, \, \, \, \eta \to C_B \varphi.$$
It is well-known (and easy to check) that this rule is equivalent to the Induction Axiom. We can now state our generalization, in the form of a dynamic \emph{$[!\alpha]C$-Induction Rule}:
\vspace{-0.2cm}
$$\mbox{ From } \eta \to \left(\, [!\alpha] \varphi\wedge \bigwedge_{b\in B} D_{\alpha(b)} \eta\, \right)  \, \, \, \, \, \, \, \mbox{ infer }  \, \, \, \, \, \, \, \eta \to [!\alpha] C_B \varphi.$$
The other ingredient we need is the above-mentioned Composition Axiom, which allows us to compress strings of dynamic modalities $[!\alpha_1]\ldots [!\alpha_n] C_B \varphi$ into a single dynamic modality $[!\alpha] C_B\varphi$.

\begin{prop}\label{LDCaction} A complete axiomatization of $LDC\preceq !$ consists of the following:
\begin{itemize}
\item the axioms and rules of the proof system $\mathbf{LD\preceq !}$;
\item the Axioms and Rules for Common Knowledge;
\item the Necessitation Rule and Distribution Axiom for dynamic modalities $[!\alpha]\varphi$;
\item the above Composition Axiom;
\item the above $[!\alpha]C$-Induction Rule.
\end{itemize}
\end{prop}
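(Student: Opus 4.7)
Most axioms and rules are inherited: the $\mathbf{LD\preceq!}$ part is sound by Proposition~\ref{LDsharing-completeness}; the classical $C$-axioms and rules are sound on each (updated) epistemic model in the usual way; and $[!\alpha]$-Necessitation/Distribution hold because $\bS\mapsto\bS^{!\alpha}$ is a total deterministic operation, which makes $[!\alpha]$ a universal modality over the graph of a function. The Composition Axiom is sound by the identity $(\bS^{!\alpha})^{!\beta}=\bS^{!(\alpha\circ\beta)}$ already established earlier. The only genuinely new check is the $[!\alpha]C$-Induction Rule. Here I would exploit $\sim_b^{!\alpha}=\sim_{\alpha(b)}$, so that the relation $\sim^B$ in $\bS^{!\alpha}$ coincides, on the shared state set, with the reflexive-transitive closure in $\bS$ of $\bigcup_{b\in B}\sim_{\alpha(b)}$; an induction on path length in this union shows that, under the premise, $\eta$ is preserved along every such path and $[!\alpha]\varphi$ holds at every endpoint, whence $[!\alpha]C_B\varphi$ holds wherever $\eta$ does.

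\textbf{Reduction to a normal form.} The plan for completeness is a two-stage syntactic reduction. First, by the Composition Axiom, every block $[!\alpha_1]\cdots[!\alpha_n]\psi$ of dynamic modalities is provably equivalent to a single $[!(\alpha_1\circ\cdots\circ\alpha_n)]\psi$, so one may assume at most one layer of $[!\alpha]$ at a time. Second, combining the Reduction Laws of $\mathbf{LD\preceq!}$ with $[!\alpha]$-Necessitation and Distribution, an easy induction on formula complexity shows that every formula of $LDC\preceq!$ is provably equivalent to one in which $[!\alpha]$ occurs \emph{only} immediately in front of a common-knowledge operator $C_B$. This reduces completeness of $\mathbf{LDC\preceq!}$ to completeness of the restricted language whose only non-static subformulas have the indecomposable shape $[!\alpha]C_B\chi$.

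\textbf{Canonical/filtration model and truth lemma.} For the restricted language I would adapt the standard canonical-model proof for $LDC$ (combined with the pseudo-model technique used for Proposition~\ref{LCD-completeness} to accommodate $\preceq$), treating each pattern $[!\alpha]C_B$ as a derived ``dynamic common knowledge'' operator. States are maximal $\mathbf{LDC\preceq!}$-consistent sets, $\Gamma\sim_a^c\Delta$ is defined by $\{\varphi:K_a\varphi\in\Gamma\}\subseteq\Delta$, and the whole construction is filtered through a finite Fischer--Ladner closure $\Sigma$ of the target formula, closed under the subterms generated by the $[!\alpha]$-reduction laws. The essential case of the truth lemma is $[!\alpha]C_B\varphi$. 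For $(\Rightarrow)$ one derives the dynamic Fixed Point
\[
[!\alpha]C_B\varphi \;\to\; \Big([!\alpha]\varphi \,\wedge\, \bigwedge_{b\in B} D_{\alpha(b)}[!\alpha]C_B\varphi\Big)
\]
from ordinary $C$-Fixed Point, $[!\alpha]$-Necessitation, Distribution, and the $D_B$-reduction law, and iterates it along a path in $\bigcup_{b\in B}\sim_{\alpha(b)}^c$. For $(\Leftarrow)$ let $Y$ be the set of $\Sigma$-types all of whose $\bigcup_{b\in B}\sim_{\alpha(b)}^c$-reachable successors contain $[!\alpha]\varphi$, and let $\psi_Y$ be the disjunction of the characteristic formulas of the types in $Y$. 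By construction $\vdash \psi_Y\to[!\alpha]\varphi$, and since $Y$ is closed under $\sim_{\alpha(b)}^c$ in the filtration one also obtains $\vdash \psi_Y\to D_{\alpha(b)}\psi_Y$ for every $b\in B$; the $[!\alpha]C$-Induction Rule then yields $\vdash \psi_Y\to[!\alpha]C_B\varphi$, which gives the required membership.

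\textbf{Main obstacle.} The technical heart of the proof is constructing the finite closure $\Sigma$ and verifying that $\vdash \psi_Y\to D_{\alpha(b)}\psi_Y$ is derivable inside it. One must close $\Sigma$ not only under ordinary subformulas, but also under every $[!\alpha]\xi$ and $D_{\alpha(b)}\xi$ forced by repeated applications of the $D$- and $\preceq$-reduction laws, while keeping $\Sigma$ finite; this is exactly what is needed for the characteristic formulas to lie within the fragment to which the Induction Rule is being applied. Once this closure is set up correctly, the rest of the argument proceeds along the now-standard template already used for Proposition~\ref{LCD-completeness}.
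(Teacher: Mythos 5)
The paper never actually proves Proposition 4.3: it explicitly relegates the proofs of the results in this part of Section~\ref{semi-public} to a future journal version, giving only the strategy that reappears as ``The Idea of the Completeness Proof'' for the Conjecture in Section~\ref{product} --- push dynamic modalities past every operator except $C_B$ via the reduction laws, compress stacks of $[!\alpha_1]\cdots[!\alpha_n]$ in front of $C_B$ with the Composition Axiom, and handle the residual blocks $[!\alpha]C_B\varphi$ with the dynamic Induction Rule together with the (derivable) dynamic Fixed Point. Your proposal follows exactly this strategy, and your soundness check for the $[!\alpha]C$-Induction Rule (path induction along $\bigcup_{b\in B}\sim_{\alpha(b)}$, using $\sim_b^{!\alpha}=\sim_{\alpha(b)}$) is the right argument. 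Where you go beyond the paper is in sketching the canonical/filtration stage and the truth lemma for $[!\alpha]C_B\varphi$; your $(\Leftarrow)$ argument via the set $Y$ and its characteristic formula $\psi_Y$ is the exact analogue of Claims 1--2 in the paper's Truth Lemma for $Cd_{\mathcal B}$ (Appendix~\ref{CompletenessCd}), so the template is sound.

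Two details deserve correction or more care. First, you define $\Gamma\sim_a^c\Delta$ by $\{\varphi:K_a\varphi\in\Gamma\}\subseteq\Delta$ and presumably intend $\sim^c_{\alpha(b)}$ to be the intersection of the individual relations; taken literally this fails the truth lemma for distributed knowledge (the classical difficulty with $D$), and it also does not support the key step $\vdash\psi_Y\to D_{\alpha(b)}\psi_Y$, which in the paper's static proof rests on the \emph{primitive} group relations $T\sim_B W$ iff $T_B=W_B$ of the pseudo-model construction together with Known Superiority and Knowledge Transfer. Since $\alpha(b)$ is in general a proper group, you must carry over that pseudo-model definition of $\sim^c_{\alpha(b)}$ (and then unravel to recover genuine models), not merely ``adapt'' it for $\preceq$. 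Second, your normal form still allows nested blocks $[!\alpha]C_B\chi$ with $\chi$ containing further such blocks, so the truth-lemma induction needs a measure that decreases through $C_B$ and a Fischer--Ladner closure containing $D_{\alpha(b)}[!\alpha]C_B\varphi$ for each such block; this is the same bookkeeping as closure condition~4 in Appendix~\ref{CompletenessCd} and is finite because $A$ is finite, but it should be stated. With these repairs your outline is, as far as I can tell, a viable route to the result the paper only asserts.
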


In fact, the proof is modular: given any class $!K$ of semi-public reading actions that is closed under sequential composition, we get a complete axiomatization of the logic $LDC\preceq !K$ (with dynamic modalities only for actions in $!K$) by restricting the above axioms only to the instances that belong to this logic.
As applications, we obtain the following two results:

\begin{cor}
A complete axiomatization of the logic $LDC\preceq !G$ of public sharing with common knowledge consists of the following:
\begin{itemize}
\item the above axiomatization of the logic $LD\preceq !G$ of public sharing actions;
\item the Axioms and Rules for Common Knowledge;
\item the Necessitation Rule and Distribution Axiom for dynamic modalities $[!G]\varphi$;
\item the $!G$-Composition Axiom: $[!G][!H]\varphi\leftrightarrow [!(G\cup H)]\varphi$;
\item the $[!G]C$-Induction Rule: \vspace{-0.2cm}
$$\mbox{ From } \eta \to \left(\, [!G] \varphi\wedge  D_{B\cup G} \eta\, \right)  \, \, \, \, \, \, \, \mbox{ infer }  \, \, \, \, \, \, \, \eta \to [!G] C_B \varphi.$$
\end{itemize}
\end{cor}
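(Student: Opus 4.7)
The plan is to derive the Corollary as a direct specialization of Proposition \ref{LDCaction} to the subclass $!K=\{!G : G\subseteq A\}$ of public sharing actions. The first step is to confirm that $!K$ is closed under sequential composition: the identity $(A:G)\circ(A:H)=(A:G\cup H)$ already recorded in the text gives $(\bS^{!G})^{!H}=\bS^{!(G\cup H)}$, so the $!G$-Composition Axiom in the Corollary is precisely the restriction of the general Composition Axiom to $!K$. This closure property is exactly what the modularity remark following Proposition \ref{LDCaction} requires in order to transfer its completeness theorem from the full logic $LDC\preceq !$ down to $LDC\preceq !G$.

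Soundness of each listed item is essentially routine. The Necessitation and Distribution axioms for $[!G]$ hold because the update $\bS\mapsto \bS^{!G}$ is total and deterministic; the $!G$-Composition Axiom is the semantic identity just verified; and the $[!G]C$-Induction Rule is sound because the updated indistinguishability $\sim_b^{!G}$ equals $\sim_{\{b\}\cup G}$, so that $K_b$ computed in $\bS^{!G}$ agrees with $D_{\{b\}\cup G}$ computed in $\bS$. The standard soundness proof of the $C$-Induction Rule then lifts to its $[!G]$-indexed dynamic variant without change.

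For completeness, the key observation is that the $!G$-Composition Axiom collapses any string $[!G_1]\cdots[!G_n]\psi$ into a single $[!(G_1\cup\cdots\cup G_n)]\psi$, so modulo provable equivalence every common knowledge subformula is preceded by at most one dynamic modality. The reduction laws already available in $\mathbf{LD\preceq !G}$ dispose of the propositional, Boolean, $D_B$ and $\preceq$ cases under $[!G]$, leaving only formulas of the form $[!G]C_B\varphi$ to analyse. Following the strategy of \cite{BMS}, one builds a Fischer--Ladner-closed canonical pseudo-model whose states are maximally $\mathbf{LDC\preceq !G}$-consistent subsets of this closure, with indistinguishability relations extracted from $D_B$-membership as in the proof of Proposition \ref{LCD-completeness}. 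A truth lemma is then established by induction on the closure; the crucial clause $[!G]C_B\varphi$ is settled using the $[!G]C$-Induction Rule, which supplies exactly the greatest-fixed-point principle needed to characterise $[!G]C_B\varphi$ in the canonical model as the largest $\eta$ satisfying $\eta\to ([!G]\varphi\wedge D_{B\cup G}\eta)$.

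The main obstacle will be precisely this last step: showing that, whenever $[!G]C_B\varphi$ fails at a canonical state, one can produce a finite chain of $\sim_{\{b\}\cup G}$-steps (with $b\in B$) reaching a canonical state at which $[!G]\varphi$ fails. This is the usual fixed-point unreachability argument that underpins completeness for standard $LDC$, adapted here to the updated relations $\sim_{\{b\}\cup G}$; the content of the modularity remark in Proposition \ref{LDCaction} is that, once closure under composition has been verified, this adaptation is uniform across any subclass $!K$, and in particular goes through without further incident for the public-sharing subclass considered in the Corollary.
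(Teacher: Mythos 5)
First, a caveat about the comparison itself: the paper does not actually prove this Corollary (or Proposition \ref{LDCaction} from which it is meant to follow) --- it explicitly relegates these proofs to a future journal version, offering only the remark that the proof of Proposition \ref{LDCaction} is ``modular'' over composition-closed subclasses. Your high-level plan (verify $(A:G)\circ(A:H)=(A:G\cup H)$ and specialize) is therefore exactly the route the paper intends, and your completeness sketch (collapse stacked modalities via Composition, reduce everything except $[!G]C_B$, then run a Fischer--Ladner canonical-model argument using the Induction Rule) is a reasonable reconstruction of what such a proof would look like.

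There is, however, a concrete gap in your soundness step, and it matters. The instance of the general $[!\alpha]C$-Induction Rule at $\alpha=(A:G)$ has premise $\eta\to\bigl([!G]\varphi\wedge\bigwedge_{b\in B}D_{G\cup\{b\}}\eta\bigr)$, since $\alpha(b)=G\cup\{b\}$. The Corollary instead writes the single conjunct $D_{B\cup G}\eta$, which is strictly \emph{weaker} (it follows from $\bigwedge_{b\in B}D_{G\cup\{b\}}\eta$ by Monotonicity, but not conversely), so the rule as stated is \emph{stronger} than the modular specialization delivers. Your claim that ``the standard soundness proof of the $C$-Induction Rule lifts without change'' is precisely where this breaks: the one-step argument needs $\eta$ to be preserved along each relation $\sim_{G\cup\{b\}}$ for $b\in B$, and $D_{B\cup G}\eta$ only guarantees preservation along the finer relation $\sim_{B\cup G}$. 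In fact the rule as written is unsound. Take $A=\{a,b,c\}$, $G=\{c\}$, $B=\{a,b\}$, and $\eta=\varphi=D_{\{a,b,c\}}p$. The premise $D_{\{a,b,c\}}p\to\bigl([!\{c\}]D_{\{a,b,c\}}p\wedge D_{\{a,b,c\}}D_{\{a,b,c\}}p\bigr)$ is valid (the first conjunct by the reduction law $[!\{c\}]D_{\{a,b,c\}}p\leftrightarrow D_{\{a,b,c\}}p$, the second by positive introspection), but the conclusion $D_{\{a,b,c\}}p\to[!\{c\}]C_{\{a,b\}}D_{\{a,b,c\}}p$ fails in the two-state model with $\|p\|=\{w_1\}$, $\sim_a=\sim_c$ total and $\sim_b$ the identity: there $D_{\{a,b,c\}}p$ holds at $w_1$, yet after $!\{c\}$ the state $w_2$ is $\sim_{\{a,c\}}$-reachable and refutes $D_{\{a,b,c\}}p$. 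So you cannot both keep the rule as stated and claim soundness; you must either correct the premise to $\bigwedge_{b\in B}D_{G\cup\{b\}}\eta$ (which is what the modularity argument actually yields, and what your canonical-model step needs in the ``unreachability'' direction anyway), or acknowledge that the system in the Corollary, read literally, derives non-validities. A careful write-up should flag this and prove the corrected statement.
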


\begin{cor}
A complete axiomatization of the logic $LDC\preceq !(G)$ of resolution actions with common knowledge consists of the following:
\begin{itemize}
\item the above axiomatization of the logic $LD\preceq !(G)$ of resolution actions;
\item the Axioms and Rules for Common Knowledge;
\item the Necessitation Rule and Distribution Axiom for dynamic modalities $[!(G)]\varphi$;
\item the $[!(G)]C$-Induction Rule: \vspace{-0.2cm}
$$\mbox{ From } \eta \to \left(\, [!(G_1)]\ldots [!(G_n)] \varphi\wedge  D_{(G_1\circ\ldots G_n)(B)} \eta\, \right)  \, \, \, \, \, \, \, \mbox{ infer }  \, \, \, \, \, \, \, \eta \to [!(G)] C_B \varphi,$$
where $(G_1\circ\ldots G_n)(B)$ is the composed reading map defined by the inductive clauses above.
\end{itemize}
\end{cor}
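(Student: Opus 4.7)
The plan is to derive this corollary as a modular instance of Proposition~\ref{LDCaction}, applied to the compositional closure of the class of resolution actions. The remark following that proposition guarantees a complete axiomatization for any class $!K$ closed under sequential composition, so the work consists of identifying the right class and translating the general $[!\alpha]C$-Induction Rule into the specific resolution-indexed form stated. Concretely, I would take $!K^+$ to be the compositional closure of $\{!(G) : G \subseteq A\}$; by construction this class is closed under composition, and the modular form of Proposition~\ref{LDCaction} yields a complete axiomatization of $LDC\preceq !K^+$. Next I would argue that $LDC\preceq !(G)$ and $LDC\preceq !K^+$ coincide as logics: every $\alpha \in !K^+$ has the form $\alpha = (G_1 \circ \ldots \circ G_n)$, and by the Composition Axiom the modality $[!\alpha]\varphi$ is provably equivalent to $[!(G_1)]\ldots [!(G_n)]\varphi$, so the two languages have the same theorems (up to provable equivalence).

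The key technical step is rewriting the general induction rule in the resolution-indexed form given in the corollary. Here I would use the algebraic identity
\[
\bigcap_{b \in B} \sim_{\alpha(b)} \;=\; \sim_{\alpha(B)},
\]
which is immediate from the definition $\alpha(B) = \bigcup_{b \in B} \alpha(b)$. On the syntactic side, the axioms of group~(III) (in particular Knowledge Transfer and Additivity) let me replace the conjunction $\bigwedge_{b \in B} D_{\alpha(b)}\eta$ appearing in the general rule by the single assertion $D_{\alpha(B)}\eta = D_{(G_1 \circ \ldots \circ G_n)(B)}\eta$. Combined with unfolding $[!\alpha]\varphi$ into $[!(G_1)]\ldots [!(G_n)]\varphi$ via the Composition Axiom, this yields exactly the schema displayed in the corollary.

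The main obstacle I expect is precisely the non-closure of the resolution actions under composition. For public sharing $!G$ the composition reduces to $!(G \cup H)$, which is why the previous corollary gets away with a single-modality induction rule; for resolutions the composition of $(G_1), \ldots, (G_n)$ is genuinely more expressive than any single $!(G)$, which forces the induction rule to be parameterized by arbitrary finite sequences $G_1, \ldots, G_n$ — one schema instance for each way of compressing a composition back into $!K^+$. Verifying that this parameterized schema is still sound, and that the resulting system remains complete via the back-and-forth translation to $!K^+$, is the subtle part of the argument; soundness follows clause-by-clause from the semantic identity above, while completeness reduces to the one already delivered by Proposition~\ref{LDCaction}.
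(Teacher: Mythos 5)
The paper does not actually prove this corollary --- it explicitly relegates the proof to a future journal version --- so your proposal can only be measured against the sketch in the text, namely that the result is a ``modular'' instance of Proposition \ref{LDCaction} restricted to a composition-closed class. Your overall strategy (pass to the compositional closure $!K^{+}$ of the resolution maps, get completeness there from the modular form of Proposition \ref{LDCaction}, and translate back into the single-resolution language via the Composition Axiom) is exactly that sketch, and your diagnosis of why the induction rule must be parameterized by arbitrary sequences $G_1,\ldots,G_n$ is correct. The back-translation between the two languages is indeed delicate (the Composition Axiom itself lives in the larger language, so one must show that every proof of a small-language formula in the big system can be simulated in the small system), but that is a bookkeeping issue you flag appropriately.

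The genuine gap is in your ``key technical step''. The general $[!\alpha]C$-Induction Rule has premise $\bigwedge_{b\in B} D_{\alpha(b)}\eta$, and you propose to replace this conjunction by the single formula $D_{\alpha(B)}\eta$, citing the identity $\bigcap_{b\in B}\sim_{\alpha(b)}\,=\,\sim_{\alpha(B)}$ together with Knowledge Transfer and Additivity. The identity is true, but it proves the wrong thing: the Kripke box of an \emph{intersection} of relations is \emph{weaker} than the conjunction of the boxes of the individual relations, not equivalent to it. Syntactically, Inclusion plus Knowledge Transfer give $D_{\alpha(b)}\eta\to D_{\alpha(B)}\eta$ for each $b\in B$ (since $\alpha(b)\subseteq\alpha(B)$), but the converse is not derivable because it is not valid: distributed knowledge of the union group does not entail distributed knowledge of each subgroup. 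Hence the rule with premise $D_{\alpha(B)}\eta$ is strictly stronger than the general rule, and it is in fact \emph{unsound}. Take $A=\{a,b,c\}$, $G=\{b,c\}$, $B=\{a,b\}$, $n=1$, so $(G)(B)=\{a,b,c\}$; with $\eta:=D_{\{a,b,c\}}q$ and $\varphi:=q$ the premise $\eta\to\bigl([!(G)]\varphi\wedge D_{\{a,b,c\}}\eta\bigr)$ is a theorem (Veracity, Positive Introspection and the atomic reduction law), yet the conclusion $D_{\{a,b,c\}}q\to[!(G)]C_{\{a,b\}}q$ fails in a two-state model with $s\sim_a t$, $s\not\sim_b t$ and $q$ true only at $s$: there $D_{\{a,b,c\}}q$ holds at $s$, but $a$'s relation is untouched by $!(\{b,c\})$, so $t$ remains $B$-reachable and refutes $C_{\{a,b\}}q$. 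The soundness argument for the induction rule must propagate $\eta$ along each single relation $\sim_{\alpha(b)}$ separately (these are the agents' relations in the updated model), which is exactly what the conjunction provides and what $D_{\alpha(B)}\eta$ does not. So the premise of the displayed rule should be the conjunction $\bigwedge_{b\in B}D_{(G_1\circ\ldots\circ G_n)(b)}\eta$; your argument cannot bridge the two forms, and no argument can, since the single-$D$ form yields an unsound system.
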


The last result is a correction (and extension) of the system in \cite{Agotnes}, where the logic of resolution actions was studied, but only in the absence of comparative knowledge statements  $B\preceq C$. The
authors of \cite{Agotnes} gave reduction laws for distributed knowledge after resolution (the same as the ones we obtained by applying our general reduction laws to resolution events). They also proposed an axiomatization for the extension with common knowledge, based on a dynamic version of the Induction Rule, similar to our $[!(G)]C$-Induction Rule. However, their version of the rule is much simpler than ours, and seems to us to be ``wrong'': sound, but too weak to be complete.\footnote{The induction rule for $[!(G)]C$ in \cite{Agotnes} uses (conjunctions of) individual knowledge in the premise, instead of distributed knowledge, which is very strange
 (since the reduction law for $[!(G)]K$ uses $D$). In any case, the completeness proof for that rule in \cite{Agotnes} contains a \emph{gap}. Our $[!(G)]C$-Induction Rule looks so complicated because the class of resolution events is \emph{not} closed under composition.
}

\medskip

We relegate the proofs of these results to a future journal version of this paper, since they are not central to the line of investigation pursued here. In the next section, we solve the same problem using the ``second strategy'' mentioned above (following \cite{vBvEK}): enrich the static base of this logic with new operators, allowing for \emph{simple reduction laws} for common knowledge after \emph{any} semi-public event. The resulting axiomatization will \emph{not} rely on closure under composition (and hence will be applicable to \emph{every} subclass of semi-public actions).

\section{Common distributed knowledge}\label{Cd}

To `pre-encode' common knowledge after a public or semi-public reading action, we need to introduce a relatively new concept\footnote{As far as we know, this concept was first defined, but not axiomatized, in an ILLC master thesis \cite{Suzanne} supervised by the first author.}: \emph{common distributed knowledge}. Though motivated here by the aim of obtaining reduction laws, this epistemic notion is of interest in its own respect.

Given a family ${\mathcal B}\subseteq {\mathcal P}(A)$ of groups of agents, we say that $\varphi$ is said to be \emph{common distributed knowledge} among (the groups in) ${\mathcal B}$ if we have that: each group $B\in {\mathcal B}$ has distributed knowledge that $\varphi$; each group $B\in {\mathcal B}$ has distributed knowledge that each other group $B'\in {\mathcal B}$ has distributed knowledge that $\varphi$; etc (for all iterations). Formally:
$$s\models Cd_{\mathcal B}\varphi \,\, \mbox{ iff }\,\, s\models D_{B^1} D_{B^2}\ldots D_{B^n} \varphi \mbox{ for all sequences (of any length $n\geq 0$) } B^1, \ldots, B^n\in {\mathcal B}.$$
Equivalently, we can define $Cd_{\mathcal B}$ as the Kripke modality for the relation $\sim^{\mathcal B}$, given by
$$\sim^{\mathcal B}\,\, :=\,\, (\bigcup_{B\in {\mathcal B}}\sim_B)^*$$
(where as before, $R^*$ is the reflexive-transitive closure of $R$). Unfolding this definition, we get:
$$s\models Cd_{\mathcal B}\varphi \,\, \mbox{ iff }\,\, t\models \varphi \mbox{ holds at every state $t$ reachable by any finite chain}$$
$$\mbox{ (of any length $n\geq 0$) } s=s_0\sim_{B^1} s_1 \sim_{B^2}\ldots \sim_{B^n} s_n=t \mbox{ with all } B_i\in {\mathcal B}.$$
Here is one way to explain the informational significance of common distributed knowledge, versus plain distributed knowledge.
We already noted the validity $D_G p \, \longleftrightarrow \, [!\alpha^G] C_G p$, saying that distributed knowledge $D_G p$ is the sufficient and necessary precondition for realising common knowledge $C_G p$ by information sharing only \emph{within} the group $G$.
But given a family ${\mathcal B}=(G_1, \ldots, G_n)\subseteq {\mathcal P}(A)$ of groups of agents, the question arises: \emph{when can we achieve common knowledge of $p$ in the larger group $G=G_1\cup\ldots \cup G_n$
by info-sharing only within each of the subgroups} ($G_1, \ldots, G_n$)?

The answer is: whenever \emph{$p$ is common distributed knowledge} among the groups $B_1, \ldots, B_n$. This fact is captured by the validity
$$Cd_{G_1, \ldots, G_n} \,\, \longleftrightarrow \,\, [!(G_1, \ldots, G_n)] C_{G} p,$$
where $G=G_1\cup\ldots \cup G_n$
and $!(G_1, \ldots, G_n)$ is the semi-public event of sharing-within-each-group, as
 defined in the previous section, via the reading map $(G_1, \ldots, G_n)=(G_1:G_1, \ldots, G_n:G_n)$.

\smallskip

\begin{exam} \label{four} In the model drawn in Example \ref{onesecond} (reproduced below on the left), $p$ is common distributed knowledge in the $p$-state between groups $\{a,b\}$ and $\{c,d\}$, i.e. we have $Cd_{\{a,b\}, \{c,d\}}p$: all iterations of $D_{\{a,b\}} p$, $D_{\{c,d\}} p$, $D_{\{a,b\}} D_{\{c,d\}} p$, $D_{\{c,d\}} D_{\{a,b\}} p$ etc, hold at this state. This is witnessed dynamically by the fact that full common knowledge of $p$ can be achieved by sharing information only within the two groups, as witnessed by the drawing below: the updated model after $!(\{a,b\}, \{c,d\})$ is on the right-side, and its $p$-state satisfies $C_{\{a,b,c,d\}} p$.
\begin{center}
      \begin{tikzpicture}[node distance=1.6cm]
        \tikzstyle{zz}=[decorate,decoration={zigzag,post=lineto,post length=5pt}]

        \tikzstyle{w}=[draw=black,thick,circle,
        minimum size=1.6em]

        \tikzstyle{every edge}=[draw,thick,font=\footnotesize]

        \tikzstyle{every label}=[font=\footnotesize]

        \tikzstyle{ev}=[anchor=center,node distance=3.8cm]

        \tikzstyle{wred}=[w,draw=black]



        \node[w,label={below:}] (w1) {$q$};

        \node[w, right of=w1,label={below:},label={above:}] (w2) {$p$};

        \node[w,right of=w2,label={below:}] (w3) {$r$};


        \path (w1) edge[-] node[above]{$a,c$} (w2);

        \path (w2) edge[-] node[above]{$b,d$} (w3);


  \node[right of=w2,node distance=8em,anchor=west] (ev0) {\,};
  \node[right of=ev0,node distance=8em,anchor=east] (ev1) {\,};



 \node[w,right of =ev1, node distance=8em, label={left:},label={above:}] (ev2) {$q$};

         \node[w,right of =ev2, label={left:},label={above:}] (w4) {$p$};

 \node[w,right of =w4, label={left:},label={above:}] (w5) {$r$};


\path (ev0) edge[|->] node[above]{$!(\{a,b\}, \{c,d\})$} (ev1);

      \end{tikzpicture}
\end{center}
\end{exam}

\begin{exam} \label{five}
In contrast, here is an example in which $p$ is distributed knowledge in each of the two groups, but it is \emph{not} common distributed knowledge. In the \emph{left-side model} of the diagram below, the upper $p$-state satisfies both $D_{\{a,b\}} p$ and $D_{\{c,d\}} p$ ; but we also have $\neg D_{\{a,b\}} D_{\{c,d\}} p$ in this world; hence $p$ is \emph{not} common distributed knowledge in the family $\{\{a,b\}, \{c,d\}\}$. This is witnessed by the fact that sharing within each the two groups cannot produce full common knowledge of $p$. Indeed, the action $!(\{a,b\}, \{c,d\})$ produces the \emph{right-side model}, in which we do \emph{not} have $C_{\{a,b,c,d\}} p$ in the upper $p$-state:
\begin{center}
      \begin{tikzpicture}[node distance=1.6cm]
        \tikzstyle{zz}=[decorate,decoration={zigzag,post=lineto,post length=5pt}]

        \tikzstyle{w}=[draw=black,thick,circle,
        minimum size=1.6em]

        \tikzstyle{every edge}=[draw,thick,font=\footnotesize]

        \tikzstyle{every label}=[font=\footnotesize]

        \tikzstyle{ev}=[anchor=center,node distance=3.8cm]

        \tikzstyle{wred}=[w,draw=black]


        \node[w,label={left:}] (w1) {$q$};

   \node[w, right of=w1,label={above left:},label={above:}] (w2) {$p$};

      \node[w,right of=w2,label={right:}] (w3) {$r$};

   \node[w, below of=w1,label={below:}] (w4) {$q$};

      \node[w, below of=w2,label={below:}] (w5) {$p$};

      \node[w, below of=w3,label={below:}] (w6) {$r$};

        \path (w1) edge[-] node[above]{$a,c$} (w2);

        \path (w2) edge[-] node[above]{$b,d$} (w3);

         \path (w4) edge[-] node[above]{$a,c,d$} (w5);

        \path (w5) edge[-] node[above]{$b,c,d$} (w6);

        \path (w1) edge[-] node[left]{$a$} (w4);

        \path (w2) edge[-] node[left]{$a,b$} (w5);

           \path (w3) edge[-] node[left]{$b$} (w6);



  \node[right of=w2,node distance=8em,anchor=west] (ev0) {\,};

  \node[right of=ev0,node distance=8em,anchor=east] (ev1) {\,};



   \node[w,right of =ev1, node distance=8em, label={left:},label={above:}] (ev2) {$q$};

         \node[w,right of =ev2,  label={left:},label={above:}] (w7) {$p$};

    \node[w,right of =w7,  label={left:},label={above:}] (ev3) {$r$};

         \node[w, below of=w7,label={below:}] (w8) {$p$};

            \node[w, left of=w8,label={below:}] (w9) {$q$};

            \node[w, right of=w8,label={below:}] (w10) {$r$};

        \path (w7) edge[-] node[left]{$a,b$} (w8);

        \path (w8) edge[-] node[below]{$c,d$} (w9);

         \path (w8) edge[-] node[below]{$c,d$} (w10);

\path (ev0) edge[|->] node[above] {$!(\{a,b\}, \{c,d\})$} (ev1);

      \end{tikzpicture}
\end{center}
\end{exam}

\medskip

\par\noindent\textbf{Static and dynamic logics} The \emph{static logic of common distributed knowledge $LCd\preceq$} has $Cd_{\mathcal B}$ as \textit{the only modalities} (one for each family ${\mathcal B}\subseteq {\mathcal P}(A)$), in addition to atomic propositions, Boolean connectives and comparative statements $B\preceq C$. Its \emph{dynamic counterpart} $LCd\preceq!$ has in addition dynamic modalities $[!\alpha]\varphi$, for all reading maps $\alpha$.

In these logics, \emph{all the standard epistemic
operators are definable as abbreviations}: $D_B \varphi \,:=\, Cd_{\{B\}}\varphi$,
$K_b \varphi \, := \, Cd_{\{\{b\}\}}\varphi$, $C_B \varphi  \, := \, Cd_{\{\{b\}: b\in B\}}\varphi$.

\begin{prop}\label{LCd}
The static logic $LCd\preceq$ is decidable. A sound and complete axiomatization is given by the proof system
$\mathbf{LCd\preceq}$ in Table \ref{tb2}.
\end{prop}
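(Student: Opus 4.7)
The plan is to adapt the pseudo-model strategy already used for Proposition~\ref{LCD-completeness}, extending it to the family-indexed operators $Cd_{\mathcal{B}}$. First I would define a \emph{pseudo-model} as a structure $(S, (R_B)_{B\subseteq A}, \underline{\bullet})$ equipped with an equivalence relation $R_B$ for every subset $B\subseteq A$ directly (rather than deriving $R_B$ from singleton relations), subject only to the coherence conditions suggested by the axioms: anti-monotonicity $R_C\subseteq R_B$ whenever $B\subseteq C$, interpretation of $B\preceq C$ as $R_B\subseteq R_C$, and $Cd_{\mathcal{B}}$ interpreted as the Kripke modality for $(\bigcup_{B\in\mathcal{B}} R_B)^*$. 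Soundness of $\mathbf{LCd\preceq}$ over genuine epistemic models is a routine axiom-by-axiom check; soundness over pseudo-models follows by the same computations since the distinguishing constraint $R_B=\bigcap_{b\in B}R_b$ is never used.

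Next I would prove completeness with respect to pseudo-models via a canonical-model construction. States are maximal $\mathbf{LCd\preceq}$-consistent sets; for each $B\subseteq A$, set $\Gamma R_B^c \Delta$ iff $\{\varphi : D_B\varphi\in\Gamma\}\subseteq\Delta$, where $D_B\varphi$ is the abbreviation $Cd_{\{B\}}\varphi$. The $S5$ axioms for $Cd_{\{B\}}$ make $R_B^c$ an equivalence relation, anti-monotonicity follows from the corresponding axiom, and the comparative axioms ((Knowledge Transfer) and (Known Superiority)) give $B\preceq C\in\Gamma$ iff $R_B^c\subseteq R_C^c$ restricted to the equivalence class of $\Gamma$. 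The Truth Lemma is proved by induction on formulas; the delicate case is $Cd_{\mathcal{B}}$, handled exactly as in the standard completeness proof for common knowledge: the easy direction uses the Fixed-Point Axiom, and the converse uses the Induction Rule with witness $\eta:=\bigwedge\{Cd_{\mathcal{B}}\psi : Cd_{\mathcal{B}}\psi\in\Gamma\}$ to force the transitive-closure relation to obey membership of $Cd_{\mathcal{B}}$-formulas.

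For decidability and the finite pseudo-model property, I would filtrate the canonical pseudo-model through a suitable finite closure $\Sigma$ of the target formula $\varphi$ — closed under subformulas, single negation, and crucially under the ``unfoldings'' $D_B Cd_{\mathcal{B}}\psi$ for each $B\in\mathcal{B}$ appearing in $\varphi$. Standard filtration arguments produce an exponential-sized pseudo-model satisfying $\varphi$, yielding decidability.

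The final step, and the step I expect to be the main obstacle, is showing that any formula satisfiable in a pseudo-model is also satisfiable in a \emph{genuine} epistemic model. The naive definition $\sim_b:=R_{\{b\}}$ need not satisfy $R_B=\bigcap_{b\in B}\sim_b$; one typically recovers this by a product/unraveling construction, replacing each pseudo-state with the tuple of its equivalence classes under the various $R_B$ and redefining agents' relations so that their intersection coincides with the prescribed $R_B$. The challenge is to perform this unfolding while preserving truth of all $LCd\preceq$ formulas — in particular preserving the transitive closures $(\bigcup_{B\in\mathcal{B}}R_B)^*$ and the inclusions witnessing $B\preceq C$. A bisimulation between the pseudo-model and its unfolded genuine counterpart, parametrized by all groups $B$ and families $\mathcal{B}$ appearing in $\varphi$, should yield the required truth preservation; combined with the finite pseudo-model property this also gives the finite model property over genuine models, completing both the completeness and decidability claims.
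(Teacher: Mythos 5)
Your overall architecture (pseudo-models with primitive group relations, a finite closure, then a transfer to genuine models) matches the paper's, but three points in your plan are genuine gaps. First, you cannot interpret $B\preceq C$ in pseudo-models as the relational inclusion $R_B\subseteq R_C$ and still hope for a Truth Lemma: the axioms (Known Superiority, Knowledge Transfer) only yield the forward direction, namely that $B\preceq C\in\Gamma$ forces every canonical $R_B$-successor of $\Gamma$ to be an $R_C$-successor containing $B\preceq C$; nothing in the system forces $B\preceq C$ to belong to $\Gamma$ whenever the inclusion happens to hold in the canonical structure, so your claimed ``iff'' fails. The paper avoids this by treating $B\preceq C$ as an \emph{atom} in pseudo-models, with its own valuation constrained only by the one-way coherence condition (if $s\in\|B\preceq C\|$ and $s\sim_B t$ then $s\sim_C t$ and $t\in\|B\preceq C\|$); the genuine relational semantics of $\preceq$ is recovered only after the unravelling. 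Second, completeness via the \emph{infinite} canonical model followed by filtration does not work for a transitive-closure modality: $Cd_{\mathcal B}$ is non-compact, so the big canonical model is not truthful and no Truth Lemma can be proved there; moreover your proposed witness $\eta:=\bigwedge\{Cd_{\mathcal B}\psi: Cd_{\mathcal B}\psi\in\Gamma\}$ is in general an infinite conjunction and is in any case not the formula the Induction Axiom needs. The paper works finitarily from the start: states are maximally consistent subsets of a Fisher--Ladner closure $\Sigma(\varphi_0)$, and the induction witness is the \emph{disjunction} $\bigvee\{\widehat{W}: W\in S_T^{\mathcal B}\}$ of the (finite) characteristic formulas of the states reachable from $T$ by $\mathcal B$-chains. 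Your filtration step would have to be promoted to the primary construction, not a post-hoc refinement.

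Third, the step you flag as ``the main obstacle'' is indeed where most of the paper's work lies, and your sketch does not resolve it. The ``tuple of equivalence classes'' trick you allude to is the standard device for plain distributed knowledge, but it is not clear it respects the $\preceq$-atoms or the transitive closures defining $Cd_{\mathcal B}$. The paper instead unravels the pseudo-model into a forest of histories and defines agent $b$'s relation as the equivalence closure of the one-step relations $\to_{B'}$ for exactly those groups $B'$ with $last(h)\models B'\preceq b$; the key lemmas then show that along any $\sim_B$-path the truth of $B\preceq C$ at the last states is invariant, which is what makes the map $last$ a surjective p-morphism and hence preserves all of $LCd\preceq$ (including $Cd$, by safety of union and reflexive-transitive closure under bisimulation). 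Without an argument of this kind, satisfiability in a pseudo-model does not yet yield satisfiability in a genuine epistemic model, and neither completeness nor decidability follows.
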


\begin{table}[h!]
\begin{center}
{\small
\begin{tabularx}{\textwidth}{>{\hsize=0.8\hsize}X>{\hsize=2\hsize}X>{\hsize=0.6\hsize}X}
\toprule
\textbf{(I)} & \textbf{Axioms and rules of classical propositional logic} \vspace{1mm} \ \\
 \textbf{(II)} & \textbf{Axioms and rules for common distributed knowledge}:   \ \\
($Cd$-Necessitation)& From $\varphi$, infer $Cd_{\mathcal B}\varphi$  \ \\
($Cd$-Distribution) & $Cd_{\mathcal B}(\varphi\to \psi)\to (C_{\mathcal B}\varphi\to C_{\mathcal B}\psi)$  \ \\
($Cd$-Fixed Point) & $Cd_{\mathcal B}\varphi \,\, \to \,\, (\varphi\wedge \bigwedge_{B\in {\mathcal B}} D_B Cd_{\mathcal B}\varphi)$  \ \\
($Cd$-Induction) & $Cd_{\mathcal B} (\varphi \to  \bigwedge_{B\in {\mathcal B}} D_B\varphi) \,\, \to \,\, (\varphi\to Cd_{\mathcal B}\varphi)\, $  \ \\
($Cd$ Neg. Introspection) & $\neg Cd_{\mathcal B}\varphi \to Cd_{\mathcal B} \neg Cd_{\mathcal B}\varphi$  \vspace{1mm} \ \\
\textbf{(III)} & \textbf{Axioms for comparative knowledge}  \ \\
 & (As in Table \ref{tb1}) \\
\bottomrule
\end{tabularx}
}
\end{center}
\vspace{-0.5cm}
\caption{The proof system $\mathbf{LCd\preceq}$. Distributed knowledge, common knowledge and individual knowledge are defined operators: $D_B\varphi:=Cd_{\{B\}}\varphi$, $C_B\varphi:=Cd_{\{\{b\}:b\in B\}}\varphi$, $K_b \varphi := Cd_{\{\{b\}\}}\varphi$.}\label{tb2}
\end{table}

\bigskip

The completeness and decidability proofs are included in Appendix \ref{CompletenessCd}. Once again, the proofs are intricate,involving a detour through a non-standard relational semantics.

\medskip

Note that the old axioms and rules for $D$ and $C$ are now both replaced by the axioms and rules for $Cd$ (group (II) in Table \ref{tb2}): indeed, one can easily check that those old axioms for $D$ and $C$ are now derivable in $\mathbf{LCd\preceq}$.

\bigskip
\bigskip

Finally, we obtain our desired axiomatization of $LCd\preceq!$:

\begin{prop}\label{LCd!}
The dynamic logic $LCd\preceq!$ has the same expressivity as its static base $LCd\preceq$. A complete axiomatization $\mathbf{LCd\preceq!}$ is obtained by putting together the axioms and rules of the proof system  $\mathbf{LCd\preceq}$ above with the ones of the proof system $LD\preceq!$
from Proposition \ref{LDsharing-completeness}, as well as with the following \emph{Reduction law for Common Distributed Knowledge}:
$$[!\alpha] Cd_{\mathcal B}\varphi  \, \, \, \, \, \, \, \longleftrightarrow \, \, \, \, \,\, \, Cd_{
\{\alpha(B): B\in {\mathcal B}\}} [!\alpha] \varphi$$ \vspace{-0.5cm}
\end{prop}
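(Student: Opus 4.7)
The plan is to follow the standard DEL reduction axiom strategy: verify soundness of the new reduction law for $Cd$, use all reduction laws together to give an effective translation $t: LCd\preceq! \to LCd\preceq$, and then invoke completeness of the static system $\mathbf{LCd\preceq}$ (Proposition \ref{LCd}).

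The central semantic step is soundness of $[!\alpha] Cd_{\mathcal B}\varphi \leftrightarrow Cd_{\{\alpha(B): B\in {\mathcal B}\}} [!\alpha]\varphi$. First I would record the key lemma: in the updated model $\bS^{!\alpha}$, the group indistinguishability satisfies $\sim^{!\alpha}_B = \bigcap_{b\in B}\sim^{!\alpha}_b = \bigcap_{b\in B}\sim_{\alpha(b)} = \sim_{\alpha(B)}$, using the definition $\sim^{!\alpha}_b := \sim_{\alpha(b)}$ and the fact that $\sim_{\alpha(b)} = \bigcap_{c\in\alpha(b)}\sim_c$. Unfolding the $Cd$-semantics, $s \models_{\bS^{!\alpha}} Cd_{\mathcal B}\varphi$ holds iff for every finite chain $s = s_0 \sim^{!\alpha}_{B^1} s_1 \sim^{!\alpha}_{B^2} \cdots \sim^{!\alpha}_{B^n} s_n$ with each $B^i \in \mathcal B$, we have $s_n \models_{\bS^{!\alpha}} \varphi$. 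Substituting $\sim^{!\alpha}_{B^i} = \sim_{\alpha(B^i)}$ and rewriting endpoint satisfaction via $s_n \models_{\bS^{!\alpha}}\varphi$ iff $s_n \models_{\bS} [!\alpha]\varphi$, this is exactly the semantic condition for $Cd_{\{\alpha(B):B\in\mathcal B\}} [!\alpha]\varphi$ at $s$ in $\bS$. Hence the reduction law is sound, and in particular the $D_B$-reduction law from Proposition \ref{LDsharing-completeness} falls out as the case $\mathcal B = \{B\}$.

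For expressivity and completeness, I would define a complexity weight $w$ with $w(p)=w(B\preceq C)=1$, $w(\neg\varphi)=w(\varphi)+1$, $w(\varphi\wedge\psi)=w(\varphi)+w(\psi)+1$, $w(Cd_{\mathcal B}\varphi)=w(\varphi)+1$, and $w([!\alpha]\varphi)=2\cdot w(\varphi)$. A routine check verifies that in each reduction law the left-hand side has strictly greater weight than the right (e.g.\ $w([!\alpha]Cd_{\mathcal B}\varphi)=2w(\varphi)+2 > 2w(\varphi)+1 = w(Cd_{\{\alpha(B):B\in{\mathcal B}\}}[!\alpha]\varphi)$). An inside-out recursion that first translates the body of a $[!\alpha]$ and then rewrites the outermost dynamic modality via the matching reduction law therefore terminates and produces a $Cd$-equivalent $LCd\preceq$-formula $t(\varphi)$. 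Since each reduction law is an axiom of $\mathbf{LCd\preceq!}$ and the normal modal rules for $[!\alpha]$ give a replacement-of-equivalents principle, we obtain $\mathbf{LCd\preceq!} \vdash \varphi \leftrightarrow t(\varphi)$. Completeness follows: if $\models \varphi$ then $\models t(\varphi)$, so by Proposition \ref{LCd} we have $\mathbf{LCd\preceq} \vdash t(\varphi)$, and hence $\mathbf{LCd\preceq!} \vdash \varphi$.

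The only nontrivial ingredient is the lemma $\sim^{!\alpha}_B = \sim_{\alpha(B)}$ and its transfer to arbitrary finite $\mathcal B$-indexed chains; everything else is bookkeeping. The main potential obstacle is verifying that no subtlety is lost when $\alpha(B)$ collapses (e.g.\ when two distinct $B,B'\in\mathcal B$ yield the same $\alpha(B)=\alpha(B')$), but this causes no problem because the family $\{\alpha(B):B\in\mathcal B\}$ is taken as a set and the semantic clause for $Cd$ depends only on the underlying set of indistinguishability relations indexed by this family.
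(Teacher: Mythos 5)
Your proposal is correct and follows essentially the same route as the paper: verify soundness of the reduction laws (via the key identity $\sim^{!\alpha}_B\,=\,\sim_{\alpha(B)}$ and the preservation of the state set under update), give a provably-correct translation into the static language, and transfer completeness from $\mathbf{LCd\preceq}$. The paper organizes the termination of the translation as a one-step reduction lemma plus an outer induction rather than via your weight function, but this is only a bookkeeping difference; your explicit soundness argument for the $Cd$ law in fact fills in a detail the paper leaves implicit.
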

The proof of this result is in Appendix \ref{reduction-completeness}.
Note that the Reduction Law for distributed knowledge $D_B$ from Proposition \ref{LDsharing-completeness} is in fact redundant now: we can regain it by applying the reduction law for common distributed knowledge $Cd_{\mathcal B}$ to a singleton family ${\mathcal B}:=\{B\}$.

Once again, we can obtain axiomatizations of various sublogics, by restricting the above axioms to the appropriate classes of events: for instance, we get an axiomatization of the logic of fully public sharing $!G$ and common distributed knowledge, with the following reduction axiom for $Cd$:
$$[! G] Cd_{\mathcal B} \varphi \, \, \, \, \, \, \, \longleftrightarrow \, \, \, \, \, \, \, Cd_{\{B\cup G:B\in {\mathcal B}\}} [!G]\varphi.$$
In a similar way, we obtain an axiomatization of the logic of `resolution' actions $!(G)$ and $Cd$, in which the instances of the reduction law for $Cd$ split again in two cases:
$$[!(G)] Cd_{\mathcal B} \varphi \, \, \, \, \, \, \, \longleftrightarrow \, \, \, \, \, \, \, Cd_{B\cup G} [!(G)]\varphi \, \, \, \, \, \, \,\, \, \, \, \, \, \, \mbox{ for $B\cap G\not=\emptyset$},$$
$$[!(G)] Cd_{\mathcal B} \varphi \, \, \, \, \, \, \, \longleftrightarrow \, \, \, \, \, \, \, Cd_{B} [!(G)]\varphi \, \, \, \, \, \, \,\, \, \, \, \, \, \, \,\, \, \,\, \, \,\,\mbox{ for $B\cap G=\emptyset$}.$$

\section{Wilder scenarios: arbitrary reading events}\label{product}

Until now, our relational models captured only \emph{static} information: they all were \emph{state} models, in which the accessibility relations described the agents' uncertainty concerning the current state. The dynamics induced by semi-public actions was simply given by specific \emph{model transformers}. But when dealing with more complex scenarios (involving privacy, secrecy, hacking etc), it is more useful to \emph{represent the actions} themselves in a relational model, with accessibility relations that capture the agents' \emph{uncertainty concerning the current action}. These so-called ``event models'' (or action models) are one of the central features of Dynamic Epistemic Logic \cite{Baltag and Renne:2016,LDII,DitmarschEtAl}, at least in its most popular incarnation (the `BMS approach', due to Baltag, Moss and Solecki \cite{BMS}). Here we adapt this setting to our reading actions.

\smallskip

A \emph{reading event model} is a structure $\bE=(E, \sima, \underline{\bullet})_{a\in A}$, where: $E$ is a finite set of `events'; $\sima\subseteq E\times E$ are equivalence relations, describing each agent's epistemic indistinguishability between events; and $\underline{\bullet}: E \to {\mathcal P}(A)^A$ is a \emph{reading assignment}, associating a reading map $\underline{e}: A \to {\mathcal P}(A)$ to each event $e\in E$.
Intuitively, the events $e\in E$ represent the possible actions that might be taking place at a given moment; $\underline{e}(a)\subseteq A$ is the set of agents whose knowledge bases are accessed (`read') by agent $a$ during action $e$; while the accessibility relations $\sima$ express agent $a$'s
knowledge/beliefs about the current action taking place. As before, the associated reading functions satisfy $a\in \underline{e}(a)$, but in addition they are subject to the constraint
$$e \sima f \mbox{ implies } \underline{e}(a)=\underline{f}(a),$$
saying that \emph{agents know what information bases they read}.
The relations $\sima\subseteq E\times E$ can be extended to groups $B\subseteq A$ and families of groups ${\mathcal B}\subseteq {\mathcal P}(A)$, to define relations $\sim_B$, $\sim^B$ and $\sim^{\mathcal B}$ between events in $E$, in exactly the same way we defined them on states.

\medskip

As usual in Dynamic Epistemic Logic, we describe the \emph{dynamics} induced by a reading event by defining a \emph{product update} operation: a reading action $e$
from a given event model $\bE$ ``acts'' on an input-state $s$ from a given state model $\bS$, producing an output-state $(s,e)$ living in a new
state model $\bS\otimes \bE$ (that represents the possible states and the epistemic uncertainty \emph{after} the event). Once again, we need to adapt this construction to reading actions.

\smallskip
\par\noindent\textbf{Product Update}
Given an epistemic model $\bS= (S, \sima, \underline{\bullet})_{a\in A}$ and a reading event model $\bE=(E, \sima, \underline{\bullet})_{a\in A}$, we can construct their \emph{update product}, which is another epistemic model
$\bS\otimes \bE=(S\times E, \sima, \underline{\bullet})_{a\in A}$, obtained by taking:
\begin{itemize}
\item the set of states is Cartesian product: $S\times E\,\, :=\,\, \{(s,e): s\in S, e\in E\}$.
 \vspace{-0.1cm}
\item the new indistinguishability relations are
$$(s,e)\sima (s',e') \,\, \mbox{ iff } \,\, s\sim_{\underline{e}(a)} s' \mbox{ and } e\sima e'$$
\vspace{-0.1cm}
(which implies that $\underline{e}(a)=\underline{e'}(a)$, and hence that $s\sim_{\underline{e'}(a)} s'$ as well).
\item the truth assignment is as usually inherited from the original state:
$$\underline{(s,e)} \,\, := \,\, \underline{s}.$$ \vspace{-0.3cm}
\end{itemize}

Intuitively, this definition can be justified as follows. The pair $(s,e)$ denotes the output-state produced by performing reading action $e$ on input-state $s$: so our reading events are \emph{deterministic}. The new epistemic relations tell us that: agent $a$'s new knowledge after a reading event $e$ is the result of putting together the knowledge about the original state $s$ gained by reading the information of all agents in $e(a)$ (which incorporates her initial knowledge about $s$, due to the convention $a\in e(a)$) and her knowledge about the event $e$ itself. Finally, the definition of the new truth assignment says that these are \emph{pure reading events}: non-epistemic facts $p$ stay unchanged.

\medskip

\par\noindent\textbf{Drawing conventions} In our graphic representations, we represent the possible events as circles, inside which we write the associated reading map. As before, the epistemic indistinguishability relations between events by links by the respective agent, and as before we skip the loops, directions of arrows, and some arrows obtainable by transitivity.

\begin{exam}\label{public} (\emph{Public Sharing vs. Secret Hacking}) We can represent every semi-public reading/sharing event $!\alpha$, as a \emph{single-event} model $E=\{e\}$, with $\underline{e}=\alpha$. For instance, suppose there are only two agents $A=\{a,b\}$; then the one-event model on the \emph{left} of the diagram below represents the fully public sharing $!a$ by agent $a$ (having only loops for both agents, thus no explicit links in our graph). It is easy to see that taking the product update $S\otimes E$ of any epistemic state model with this event model produces exactly the updated model $\bS^{!a}$.
\vspace{-0.5cm}
%
%
%







\begin{center}
      \begin{tikzpicture}[node distance=1.6cm]
        \tikzstyle{zz}=[decorate,decoration={zigzag,post=lineto,post length=5pt}]

        \tikzstyle{w}=[draw=black,thick,circle,
        minimum size=1.6em]

        \tikzstyle{every edge}=[draw,thick,font=\footnotesize]

        \tikzstyle{every label}=[font=\footnotesize]

        \tikzstyle{ev}=[anchor=center,node distance=3.8cm]

        \tikzstyle{wred}=[w,draw=black]



        \node[w,label={above left:},label={above:}] (w1) {$b:a$};

         \node[w,right of =w1, node distance=25em, label={left:},label={above:}] (w2) {b:a};

         \node[w, right of=w2,label={below:}] (w3) {};
------------------------------------

        \path (w2) edge[-] node[above]{$a$} (w3);

      \end{tikzpicture}
\end{center}
\vspace{-0.3cm}
In contrast, the model on the \emph{right} in the above diagram represents the \emph{secret hacking} by $b$ of $a$'s information base. The circle labeled $b:a$ is the actual (hacking) action, while the empty circle is the alternative scenario in which no hacking attack happens (or the attack fails). It is common knowledge that: $a$ has no access to $b$'s data (since she is no hacker); $a$ \emph{doesn't know} that she is being hacked (hence the $a$-link to the empty circle); but she considers this \emph{possible}.
\end{exam}

\begin{exam}\label{mutual-hack} (\emph{Hacking-with-detection vs. Mutual-hacking})
The event model on the \emph{left} in the diagram below represents ``detected hacking'' event (assuming again only two agents $a$ and $b$): everything goes as in the secret-hacking scenario above, except that now $a$ is able to secretly detect the attack (so she knows she is being hacked). The upper $(b:a)$-labeled circle is the actual action (in which the hacking is being detected, so $a$ knows she is being hacked: hence, no $a$-uncertainty links between this circle and any others). Agent $b$ doesn't know that his attack has been detected, but he is of course aware of this possibility (hence the $b$-link between the upper and the lower $(b:a)$-labeled circle, capturing $b$'s uncertainty concerning detection).
\vspace{-0.5cm}
\begin{center}
      \begin{tikzpicture}[node distance=1.6cm]
        \tikzstyle{zz}=[decorate,decoration={zigzag,post=lineto,post length=5pt}]

          \tikzstyle{w}=[draw=black,thick,circle,
          minimum size=2em]

        \tikzstyle{every edge}=[draw,thick,font=\footnotesize]

        \tikzstyle{every label}=[font=\footnotesize]

        \tikzstyle{ev}=[anchor=center,node distance=3.8cm]

        \tikzstyle{wred}=[w,draw=black]



        \node[w,label={above left:},label={above:},text width=0.7cm] (w5) {$b:a$};

            \node[w, below of=w5,label={below:}] (w6) {$b:a$};

         \node[w, right of=w6,label={above:}] (w7) {};
---------------------------------------------------------------------------

         \node[w,right of =w5, node distance=25em, label={left:},label={above:},text width=0.7cm] (w1) {$a:b$\newline $b:a$};

         \node[w, right of=w1,label={above:}] (w2) {$b:a$};

            \node[w, below of=w1,label={below:}] (w3) {$a:b$};

            \node[w, below of=w2,label={below:}] (w4) {};


            \node[above left of=w1] (ul){ };
            \node[below left of=w3] (dl){ };
            \node[above right of=w2] (ur){ };
            \node[below right of=w4] (dr){ };

\path (w5) edge[-] node[left]{$b$} (w6);

        \path (w6) edge[-] node[above]{$a$} (w7);

        \path (w1) edge[-] node[above]{$b$} (w2);

           \path (w1) edge[-] node[left]{$a$} (w3);

                 \path (w2) edge[-] node[right]{$a$} (w4);

         \path (w3) edge[-] node[below]{$b$} (w4);

      \end{tikzpicture}
      \end{center}
\vspace{-1cm}
The event model on the \emph{right} represents `mutual-secret-hacking': there are only two agents $a$ and $b$, each secretely reading the other's knowledge base. None of them knows that (s)he is being hacked, but (being rational) they consider this possibility. The upper-left circle is the actual event (of double-hacking), while the other circles represent events that are possible according to one agent or another. E.g. the upper-right event represents the case that \emph{only} $b$ is hacking $a$'s database: this is possible according to $b$, hence the horizontal $b$-link between the upper circles.
\end{exam}

\par\noindent\textbf{Adding dynamic modalities for arbitrary reading events}
Given a fixed (locally finite) event model $\bE$,
let $LD\preceq E$ be the logic obtained by adding to $LD\preceq$ dynamic operators $[e]\varphi$ for all events $e\in E$, and let $LDC\preceq E$
be its extension with common knowledge operators. The semantic clause is again given by evaluating $\varphi$ in the updated model:
$$s\models_{\bS}[e]\varphi \,\,\, \mbox{ iff } \,\,\, (s,e)\models_{\bS\otimes \bE}\varphi.$$
The proof of the next result is in Appendix \ref{reduction-completeness}.
\begin{prop}\label{LDE} The dynamic logic $LD\preceq E$ has the same expressivity as its static base $LD\preceq$. A complete axiomatization $\mathbf{LD\preceq E}$ is obtained by adding to the proof system  $\mathbf{LD\preceq}$ the usual axioms and rules of normal modal logic for the dynamic modalities $[e]\varphi$, as well as the following \emph{`Reduction laws' for arbitrary reading events}:
\end{prop}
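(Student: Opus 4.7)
The plan is to follow the now-standard reduction-law strategy of Dynamic Epistemic Logic, adapted to our reading-event setting. First I would spell out what the missing reduction laws should look like. By unfolding the definition of product update: the atomic, Boolean, and dynamic-box cases are as in Proposition~\ref{LDsharing-completeness}; for distributed knowledge we should get
\[
[e] D_B \varphi \,\,\longleftrightarrow\,\, D_{\underline{e}(B)} \bigwedge_{f : e \sim_B f} [f]\varphi,
\]
since $(s,e)\sim_B(s',f)$ iff $s\sim_{\underline{e}(B)} s'$ and $e\sim_B f$ (using the constraint that $e\sim_a f$ forces $\underline{e}(a)=\underline{f}(a)$). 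For epistemic comparison, the same unfolding shows that $(s,e)\models B\preceq C$ decouples into a state-level implication $s\sim_{\underline{e}(B)} s' \Rightarrow s \sim_{\underline{e}(C)} s'$ and an event-level implication $e\sim_B f \Rightarrow e\sim_C f$; since the event model $\bE$ is fixed, the latter is a constant $\top$ or $\bot$ determined by $e,B,C$, so the reduction law takes the shape
\[
[e](B\preceq C)\,\,\longleftrightarrow\,\, \bigl(\underline{e}(B)\preceq \underline{e}(C)\bigr)\wedge \chi_{e,B,C},
\]
with $\chi_{e,B,C}$ either $\top$ or $\bot$ according to whether $\sim_B\subseteq \sim_C$ holds locally at $e$ in $\bE$.

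Soundness of each equivalence is then a direct calculation from the product-update semantics and the definition of $[e]$, exactly analogous to the semi-public case but carried out over event-pairs instead of a single event; no new idea is needed there. For completeness and expressivity, I would reduce every formula of $LD\preceq E$ to a provably equivalent formula of $LD\preceq$ by repeated (inside-out) application of the reduction laws, and then invoke the completeness of $\mathbf{LD\preceq}$ from Proposition~\ref{LCD-completeness}. The usual obstacle is that the $D_B$-reduction replaces $[e]D_B\varphi$ by a \emph{conjunction} indexed over the finitely many $f$ with $e\sim_B f$, each still carrying a $[f]$-modality; this prevents an induction on naive formula length.

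The standard BMS fix is therefore the main technical step: define a Kozen--Parikh style weighted complexity measure $c(\varphi)$ on formulas of $LD\preceq E$ (for instance $c([e]\psi) = (|E|+1)\cdot c(\psi)$, $c(D_B\psi)=1+c(\psi)$, $c(B\preceq C)=1$, Booleans additive), chosen so that on each reduction law the right-hand side has strictly smaller $c$-value than the left-hand side; the factor $|E|+1$ absorbs the blow-up from the conjunction $\bigwedge_{f\sim_B e}[f]\varphi$ in the $D_B$-law. Well-foundedness of $c$ then guarantees that iterated rewriting terminates in a $[e]$-free formula, yielding expressive equivalence with $LD\preceq$ and, combined with the static completeness, the completeness of the proof system $\mathbf{LD\preceq E}$. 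The only subtlety worth flagging is keeping the $\preceq$-reduction's event-level side condition $\chi_{e,B,C}$ outside any induction — since $\bE$ is fixed, it is an input to the axiomatization rather than something proved, so the axiom schema should literally list $\top$ or $\bot$ according to $\bE$, just as the semi-public axiomatization listed $\alpha(B)\preceq\alpha(C)$ on the right-hand side.
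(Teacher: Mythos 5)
Your reduction laws match the paper's (your $D_{\underline{e}(B)}\bigwedge_{f\sim_B e}[f]\varphi$ is provably the same as the paper's $\bigwedge\{D_{\underline{e}(B)}[f]\varphi : f\sim_B e\}$, and your $\chi_{e,B,C}$ is exactly the paper's case split on the side condition $B\preceq^e C$), and your soundness argument and overall plan (translate to $LD\preceq$, then invoke Proposition~\ref{LCD-completeness}) are the paper's as well. Where you genuinely diverge is the termination of the translation. The paper does not use a weighted complexity measure at all: it proves a \emph{one-step reduction lemma} by induction on the structure of a \emph{static} formula $\theta$, showing that for every event $e$ there is a static $\theta_e$ with $\vdash [e]\theta\leftrightarrow\theta_e$ (the conjunction over $f\sim_B e$ is harmless there, since the induction hypothesis is on the subformula $\psi$ and already quantifies over all events $f$), and then a separate outer induction that eliminates nested dynamic modalities inside-out. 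This two-stage argument buys you freedom from having to tune any measure. Your BMS-style measure would also work in principle, but the specific instance you wrote down does not: with Boolean connectives \emph{additive}, the right-hand side $\bigwedge_{f\sim_B e} D_{\underline{e}(B)}[f]\varphi$ has $c$-value roughly $|E|\cdot(1+(|E|+1)c(\varphi))$, which exceeds $c([e]D_B\varphi)=(|E|+1)(1+c(\varphi))$ as soon as $|E|\geq 2$. The standard fix is to take $c(\varphi\wedge\psi)=1+\max(c(\varphi),c(\psi))$ rather than a sum, so that the event-indexed conjunction does not multiply the weight; with that adjustment your route goes through, but the paper's inside-out lemma avoids the issue entirely.
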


$$[e ] p \, \, \,  \longleftrightarrow \, \, \,  p
\, \, \, \, \, \, \,  \, \, \, \, \, \, \, \, \, \, \, \, \, \, \,\,\,\, \, \, \,\,\,\, \, \, \, \,\,\,\, \,\,\,\,\, \, \,\, \,
[e ] \neg \varphi \, \, \,  \longleftrightarrow \, \, \, \,\neg [e] \varphi \, \, \, \, \, \, \,  \, \, \, \, \, \, \, \, \, \, \, \, \, \,  \,\,\,\, \, \,\,\,\, \, \, \, \,\,\,\, \,\,\,\,\, \,\,\, \,
[e ] (\varphi \wedge \psi) \, \, \, \longleftrightarrow \, \, \,  [e ] \varphi\wedge [e ] \psi$$
\vspace{-0.5cm}
$$[e](B\preceq C) \,\,\, \longleftrightarrow \,\,\, \underline{e}(B)\preceq \underline{e}(C) \, \, \, \mbox{ for $B\preceq^e C$}  \, \, \, \,  \, \, \, \, \, \, \, \, \, \, \, \, \, \, \, \, \, \, \, \, \, \, \, \, \, \,  \, \, \, \, \, \, \, \, \, \, \, \, [e](B\preceq C) \,\,\, \longleftrightarrow \,\,\, \bot \, \, \, \mbox{ for $B\not\preceq^e C$}$$
$$[e] D_B \varphi \,\,\, \longleftrightarrow \,\,\, \bigwedge \{D_{\underline{e}(B)}[f]\varphi\, : \, f\simB e\}$$
where $B\preceq^e C$ denotes the side condition $\forall f\in E (f\simB e \Rightarrow f\simC e)$
.

\smallskip

\par\noindent\textbf{No reduction laws for $C$ and $Cd$}
Once again, there are no general reduction laws for common knowledge after arbitrary events, nor in fact for common distributed knowledge! To solve this problem, one could again follow the ``second strategy'' (used in Section \ref{Cd}): extend the static base logic, building on ``group epistemic PDL'' \cite{Suzanne}, itself based on \cite{vBvEK}. This would embed common distributed knowledge within a whole range of \emph{distributional levels of knowledge}, similar to \cite{Parikh}, that may be of interest for applications in distributed computing. However, many of the `programs' of epistemic PDL do not seem to have a very transparent and natural epistemic interpretation. Moreover, the resulting reduction laws of (both epistemic PDL in \cite{vBvEK}, and of) group epistemic PDL in \cite{Suzanne} are extremely complex to even state, and too complex to be actually used in any real proofs.

For all these reasons, the ``first strategy'' (used in Section \ref{semi-public}) seems preferable in this case. So in the rest of this section we will follow this strategy, sketching a direct axiomatization of the full dynamic logic of arbitrary events, based on a dynamic analogue of the Induction Rule, that extends the $[!\alpha]C$-Induction Axiom from Section \ref{semi-public} to arbitrary events. Though relatively complex, the resulting rule is still much simpler than reduction laws for epistemic PDL, and can in fact be used in proving various theorems. We leave completeness of this system as a Conjecture, since we did not yet spell out the proof in detail. We plan to do this in a future journal version of this paper.


\medskip

\par\noindent\textbf{Towards an axiomatization of $LDC\preceq E$}
Once again, the idea of the ``first strategy'', when adapted to event models, is to treat the combination $[e]C_B$ \emph{as if} it was a single operator (``common knowledge after event $e$''), like a `dynamic version' of common knowledge. Then one generalizes the Fixed Point and Induction axioms to this dynamic combination, as follows.

The \emph{Dynamic Induction Rule} is a generalization of the $[!\alpha]C$-Induction Axiom to arbitrary events, obtaining by replacing the single premise $\eta$ by a family of premisses $\eta_f$, one for each event $f$ reachable from the given event $e$ by a chain of $B$-links:

\begin{quote}
Given an event $e\in E$, a group $B\subseteq A$, a formula $\varphi$, and a family of formulas $\{\eta_f: f\sim^B e\}$ (one for each event $f\in E$ with $f \sim^B e$), suppose that the formulas
$$\eta_f \,\, \, \to \,\,\, [f]\varphi\wedge D_{\underline{f}(b)}\eta_g$$
are provable, for all $f\sim^B e$, $b\in B$ and $g \sim_b f$. Then we can infer
$$\eta_e \,\, \, \to\,\,\, [e]C_B\varphi.$$
\end{quote}

There is an also a similarly generalized ``Dynamic Fixed Point Axiom'', but that is redundant: it is actually derivable from the usual $C$-Fixed Point Axiom, together with the reduction law for knowledge after $e$.

But the other essential ingredient we need is a Composition Axiom, that allows us to compress strings of dynamic modalities $[e_1]\ldots [e_n]C_B\varphi$ into a single dynamic modality $[e] C_B\varphi$. For this we need to first show that \emph{reading events are closed under sequential composition}.

\smallskip

\par\noindent\textbf{Composition of Event Models} Given full communication event models $\bE=(E, \sima, \underline{\bullet})_{a\in A}$ and $\bE_2=(E_2, \sima, \underline{\bullet})_{a\in A}$,
we can construct their \emph{sequential composition}, which is another event model
$\bE_1; \bE_2=(E_1\times E_2, \sima, \underline{\bullet})_{a\in A}$, obtained by taking:
\begin{itemize}
\item the set of events to be the Cartesian product
$$E_1\times E_2\,\, \,\, :=\,\, \,\, \{e;f: e\in E_1, f\in E_2\},$$
where we used the notation $e;f$ for the ordered pair $(e,f)\in E_1\times E_2$, to stress that it represents the \emph{sequential composition} of the two events.
\item  the epistemic indistinguishability relations to be
$$e; f \, \sima \, e';f' \,\, \,\, \mbox{ iff } \,\, \,\, e\sim_{\underline{f}(a)} e' \mbox{ and } f\sima f'$$
(which implies that $f(a)=f'(a)$, and hence that $e\sim_{\underline{f'}(a)}e'$ as well);
\item the reading assignment function is given by putting
$$\underline{e;f} (a)\,\, \,\, :=\,\,\,\, \underline{e}(\underline{f}(a))=\bigcup_{b\in
\underline{f}(a)} \underline{e}(b).$$
\end{itemize}

\par\noindent\textbf{Closure Under Composition} We can easily see that the function $f: S\times (E_1\times E_2)\to (S\times E_1)\times E_2$ given by
$$f(s,(e; f)):=((s,e), f)$$
is an isomorphism between the models $\bS\otimes (\bE_1; \bE_2)$ and  $(\bS\otimes \bE_1)\otimes \bE_2$.
This establishes the soundness of the following \emph{Event Composition Axiom}
$$[e][f]\varphi \,\, \leftrightarrow [e; f]\varphi$$

We believe that the resulting system is a complete axiomatization of the logic $LDC\preceq E$. Since we did not yet check the proof, we leave this as an open question:

\smallskip

\par\noindent\textbf{Conjecture} A complete axiomatization of $LDC\preceq E$ consists of the following:
\begin{itemize}
\item the axioms and rules of the proof system $\mathbf{LD\preceq E}$ in Proposition \ref{LDE};
\item the axioms and rules for common knowledge;
\item the Necessitation Rule and Distribution Axiom for dynamic modalities $[e]\varphi$;
\item the above Event Composition Axiom;
\item the above Dynamic Induction Rule.
\end{itemize}

\medskip

\par\noindent\textbf{The Idea of the Completeness Proof}
By using the above Reduction Laws (as well as the Necessitation Rule and Distribution Axiom for $[e]$), we can ``push'' dynamic modalities past all the other operators \emph{except for common knowledge} (and eliminate them when they come in front of an atomic proposition $p$ or a comparative statement $B\preceq C$). In this way, we can  reduce any formula in the logic $LCD\preceq E$ to a formula in which \emph{all} dynamic modalities occur \emph{only} in front of common knowledge operators, possibly stacked e.g. in expressions of the form $[e_1][e_2]\ldots [e_n] C_B\varphi$. We can then use the above Composition Law to ``compress'' the stacks into a single dynamic modality
$[e] C_B\varphi$.
Finally, we can deal with the proof theory of expressions of the form $[e] C_B \varphi$ by using the Dynamic Induction Rule (and the Fixed Point Axiom).

\medskip

As mentioned, we are planning to fully settle our Conjecture in a future journal version, by spelling out this proof in detail.

\section{Comparison with other work}

The problem of converting distributed knowledge into common knowledge via sharing was discussed in detail in \cite{Lonely} (where it was shown that this conversion may fail if the agents can share only information expressible by formulas in a given formal language). A more semantic approach was taken in \cite{SB}, based on protocols requiring agents to ``tell everybody all they know'', similarly to our \emph{public sharing actions} $!G$ (but without axiomatizing them).\footnote{In fact, an (unpublished) axiomatization of $!G$-modalities for public sharing events $!G$ (without comparative knowledge, but with a version of common distributed knowledge) was presented by this paper's first author at a workshop affiliated with ESSLLI 2010.}

\smallskip

The more restricted \emph{resolution action} $!(G)$ (by which agents in $G$ share all they know only with each other) was considered in \cite{Agotnes}. The authors gave reduction laws for distributed knowledge after resolution (which can be obtained by applying our general reduction laws to resolution events). They also proposed an axiomatization for the extension with common knowledge, based on a dynamic version of the Induction Rule (as in the second strategy sketched at the end of the
last section). But, as already mentioned, the completeness proof in \cite{Agotnes} contains a gap, and the version of induction rule proposed there seems too weak to be complete. In any case, the strategy pursued in the first part of this paper (adding common distributed knowledge) yields a much simpler complete axiomatization of resolution logic.

\medskip

\emph{Comparative epistemic logic} was introduced in \cite{DitmarschEtAl}, though allowing only \emph{individual} \emph{comparisons} $b\preceq c$ (which the authors write in reverse order, using $c\succeq b$), and combining it only with \emph{individual knowledge} operators $K_a\varphi$. Also, no dynamic extensions were considered. A complete axiomatization of comparative epistemic logic was given in \cite{DitmarschEtAl}, using a non-standard `Gabbay-style' inference rule. Since the rule requires an infinite supply of fresh atomic variables, that completeness proof did not yield decidability. In contrast, our axiomatization immediately gives decidability of this logic (and of its extensions considered in this paper).

\medskip

There is an obvious analogy between some of the axioms in group (III) of Table \ref{tb1} and Armstrong's axioms for functional dependence in Database Theory \cite{Armstrong}, as well as the logical-epistemic properties of variable dependence \cite{BvB2020a,Baltag2016}. This is more than an analogy: one can associate to each agent a corresponding \emph{variable}, taking as ``value'' the agent's information state (her ``local state'', in the sense of \cite{FHMV}). Agent $a$'s associated variable functionally determines agent $b$'s variable iff agent $a$'s information cell at the current state uniquely determines (i.e. it is included in) agent $b$'s information cell, which is the same as epistemic superiority: agent $a$ knows everything known by agent $b$.
Indeed, the fragment $LD\preceq$ of our logic is in a sense just an epistemic reinterpretation of the logic LFD of functional dependence in \cite{BvB2020a} (forthcoming) accompanied by a simplification of the syntax (eliminating the predicates). In this sense, the proof system $\mathbf{LD\preceq}$ for this fragment is not completely new: it is a (simplified) variant of the system $\mathbf{LFD}$ in \cite{BvB2020a}. But our results for all the larger languages are new, as are the setting of semi-public sharing events and the further generalization to arbitrary reading events.

\medskip

We should stress that the completeness and decidability results in this paper are non-trivial: we are not aware of \emph{any} known decidable logic in which our logics can be embedded via some obvious translation. All natural candidates (e.g. the known decidable extensions of mu-calculus or of Propositional Dynamic Logic, the fixed-point extensions of the guarded fragments of First-Order Logic, Monadic Second Order Logic etc.) seem to be able to embed only some proper fragment of our logics. Indeed, the logics presented in this paper are so powerful that they seem to come very close to the borderline where expressivity runs into undecidability.\footnote{Even some very mild extensions (e.g. with dynamic operators $[!\varphi]\psi$) for public announcements in the usual sense) pose problems to our proof methods, and may well turn out to be undecidable.}


\newpage


\appendix

\section{Completeness and decidability of the static logics}\label{CompletenessCd}

In this section, we sketch the proofs of completeness and decidability for the strongest static logic $LCd\preceq$ above, and as an aside indicate how to extract from them similar proofs for its sublogics $LCd$, $LDC\preceq$ and $LD\preceq$.
The proof needs a detour through a more general type of relational models, called \emph{pseudo-models}.

Essentially, pseudo-models treat each group's distributed knowledge relation $\simB$ as a basic, undefined equivalence relation (rather than defining them as intersections of individual knowledge relations); and they also treat comparative knowledge statements $B\preceq C$ as atomic propositions of the usual kind (whose meaning is directly given by truth-assignment functions or valuations, rather than being defined in terms of the relations $\simB$).

In fact, it is convenient to present pseudo-models in the more standard form involving (extended) \emph{valuations}, rather than using truth-assignment functions (although the two presentations are of course equivalent).

\subsection{Soundness and completeness for finite pseudo-models}

\par\noindent\textbf{Pseudo-models} A \emph{pseudo-model} is a structure $\bS=(S, \simB, \|\bullet\|)_{B\subseteq A}$, where: $S$ is a set of states; $\simB\subseteq S\times S$ are binary relations, one for each group $B\subseteq A$; and $\|\bullet\|: Prop\cup \{B\preceq C: B, C\subseteq A\}\to {\mathcal P}(S)$ is an extended valuation function, mapping atomic propositions $p\in Prop$ and formulas $B\preceq C$ into sets of states $\|p\|, \|B\preceq C\|\subseteq S$. These components are required to satisfy the following conditions:
\begin{enumerate}
\item $\simB$ are equivalence relations on $S$;
\item if $s\in \|B\preceq C\|$ and $s\simB t$, then $s\simC t$ and $t\in \|B\preceq C\|$;
\item $\|B\preceq C\|=S$ if $C\subseteq B$;
\item $\|B\preceq C\|\cap \|B\preceq E\|\subseteq \|B\preceq C\cup E\|$;
\item $\|B\preceq C\|\cap \|C\preceq E\|\subseteq \|B\preceq E\|$.
\end{enumerate}

Given a pseudo-model $\bS=(S, \simB, \|\bullet\|)_{B\subseteq A}$, we can define recursively the \emph{satisfaction} relation $s\models\varphi$ between states $s\in S$ and formulas of $LCd\preceq$, by using the valuation on formulas $\theta\in Prop\cup \{B\preceq C: B, C\subseteq A\}$ in the usual way (putting $s\models \theta$ iff $s\in \|\theta\|$), using the standard Tarski clauses for the propositional connectives, and using the standard modal clause for $Cd_{\mathcal B}$ seen as a Kripke modality for the relation
$$\sim^{\mathcal B} \,\, :=\,\, (\bigcup_{B\in {\mathcal B}} \sim_B)^*.$$

\begin{prop}\label{pseudo-soundness}
The axioms and rules of $\mathbf{LCd\preceq}$ are sound with respect to pseudo-models.
\end{prop}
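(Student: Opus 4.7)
The plan is to check each schema in $\mathbf{LCd\preceq}$ one at a time, confirming that the truth set of every instance coincides with all of $S$ in every pseudo-model $\bS=(S,\simB,\|\bullet\|)_{B\subseteq A}$. The propositional axioms, modus ponens, and the basic normal-modal ingredients (the two Necessitation rules, the two Distribution axioms) are immediate from the fact that $Cd_{\mathcal B}$ is interpreted as the Kripke box for the binary relation $\sim^{\mathcal B}:=(\bigcup_{B\in\mathcal B}\sim_B)^*$. For the Fixed Point and Induction axioms, I would use the two standard properties of the reflexive-transitive closure: the pre-fixed-point inclusion $\mathsf{Id}\cup\bigcup_{B\in\mathcal B}\sim_B\circ \sim^{\mathcal B}\subseteq \sim^{\mathcal B}$ yields the Fixed Point instance, while the induction-on-chain-length argument (any $\sim^{\mathcal B}$-step decomposes as a finite sequence of single $\sim_B$ steps with $B\in\mathcal B$) yields Induction in the usual way.

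For Negative Introspection, the key observation is that $\sim^{\mathcal B}$ is an \emph{equivalence} relation in any pseudo-model: pseudo-model condition (1) makes each $\sim_B$ reflexive and symmetric, hence the union $\bigcup_{B\in\mathcal B}\sim_B$ is symmetric, and the reflexive-transitive closure of a symmetric relation is again symmetric. Reflexivity and transitivity are automatic for a reflexive-transitive closure. Once $\sim^{\mathcal B}$ is known to be Euclidean, the axiom $\neg Cd_{\mathcal B}\varphi \to Cd_{\mathcal B}\neg Cd_{\mathcal B}\varphi$ is sound by the standard $S5$-style argument. Observe also that since each $\sim_B$ is already an equivalence, $\sim^{\{B\}}=\sim_B$, so the defined operator $D_B\varphi := Cd_{\{B\}}\varphi$ behaves on pseudo-models exactly as the Kripke box for $\sim_B$; this legitimises the use of $D_B$ inside the axioms of group (III).

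For the five comparative-knowledge axioms, the correspondence with the conditions on $\|\bullet\|$ is essentially one-to-one. Inclusion, Additivity and Transitivity are literally conditions (3), (4) and (5). Known Superiority $B\preceq C\to D_B(B\preceq C)$ is the \emph{second half} of condition (2): the closure of $\|B\preceq C\|$ under $\sim_B$ is exactly what it means for $D_B(B\preceq C)$ to hold throughout $\|B\preceq C\|$. Knowledge Transfer $B\preceq C\to(D_C\varphi\to D_B\varphi)$ comes from the \emph{first half} of condition (2): if $s\in\|B\preceq C\|$ and $s\models D_C\varphi$, then for each $t$ with $s\sim_B t$ one has $s\sim_C t$ by (2), whence $t\models\varphi$; thus $s\models D_B\varphi$. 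Putting these checks together gives soundness.

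There is no real obstacle, since the pseudo-model conditions were tailored precisely to validate these axioms; the only point worth stating carefully is that condition (2) is used twice, in the two different rôles sketched above, and that the closure of $\sim^{\mathcal B}$ under both symmetry and transitivity (needed for Negative Introspection) is a small but essential lemma about reflexive-transitive closures of symmetric relations.
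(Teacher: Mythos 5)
Your verification is correct and follows essentially the same route as the paper, which simply notes that the pseudo-model conditions were tailored to the comparative axioms and that the $Cd$ axioms are sound for any Kripke modality of the form $(\bigcup_{B\in\mathcal B}\sim_B)^*$ over equivalence relations. Your elaboration --- in particular the observation that $\sim^{\mathcal B}$ inherits symmetry from the $\sim_B$'s (needed for Negative Introspection) and that condition (2) does double duty for Known Superiority and Knowledge Transfer --- is exactly the detail the paper leaves to the reader.
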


\par\noindent The \emph{proof} is an easy verification: the semantic conditions imposed on pseudo-models are designed to match each of the axioms of $\mathbf{LD\preceq}$, while $Cd$ axioms and rules are always sound for Kripke modalities for relations of the form $(\bigcup_{B\in {\mathcal B}} \sim_B)^*$ based on \emph{any} equivalence relations $\sim_B$.

\medskip

But completeness requires a bit more work.

\medskip

\par\noindent\textbf{Fisher-Ladner Closure} Given any formula $\varphi_0$ in the language of $LCd\preceq$, its Fisher-Ladner closure is the smallest set of formulas $\Sigma=\Sigma(\varphi_0)$ satisfying, for all groups $B, C\subseteq A$, families ${\mathcal B}, {\mathcal C}\subseteq {\mathcal P}(A)$ and formulas $\psi, \theta$:
\begin{enumerate}
\item $\varphi_0\in \Sigma$;
\item $(B\preceq C)\in \Sigma$;
\item if $Cd_{\mathcal B}\psi\in \Sigma$ then $Cd_{\mathcal C}\psi\in \Sigma$;
\item if $Cd_{\mathcal B}\psi \in \Sigma$ and ${\mathcal B}$ is \emph{not} a singleton (consisting of a single set ${\mathcal B}=\{b\})$, then $D_C Cd_{\mathcal B}\psi \in \Sigma$;
\item if $\psi\in \Sigma$ and $\theta$ is a subformula of $\psi$, then $\theta\in \Sigma$;
\item $\Sigma$ is closed under single negations\footnote{The single negation $\sim\varphi$ is defined as: $\sim \varphi:=\theta$ if $\varphi$ is of the form $\neg\theta$; and $\sim\varphi:=\neg\varphi$ if $\varphi$ is \emph{not} of the form $\neg\theta$ (for any $\theta$).} $\sim$: if $\psi\in \Sigma$, then $(\sim\psi)\in \Sigma$.
\end{enumerate}

Note that (given the fact $K_b$, $D_B$ and $C_B$ are in this language just abbreviations) conditions 3 and 4 imply the following closure conditions:
\begin{description}
\item[3'] if $D_B \psi\in \Sigma$ then $D_C \psi\in \Sigma$;
\item[4'] if $C_B\psi\in \Sigma$ then $K_c C_B\psi\in \Sigma$.
\end{description}
For the sublanguages missing the operator $Cd$, conditions 3 and 4 should be skipped, and replaced with condition 3'(only when the operator $D$ belongs to the given sublanguage) and condition 4' (only when $C$ belongs to it). For $LCd$, we have to skip instead condition 2.

One can easily check that \emph{the Fisher-Ladner closure of any formula is finite}.\footnote{Note that, given that $D_B$ is an abbreviation for $Cd_{\{B\}}$, the restriction to non-singleton families in condition 4 is needed to avoid infinite iterations of $D_B$'s.}

\medskip

\par\noindent\textbf{Finite Canonical Pseudo-Model}
For a fixed formula $\varphi_0$, consider the following ``\emph{canonical pseudo-model} for $\varphi_0$'' $\bS^c=(S^c, \simB, \|\bullet\|)$, where: $S^c$ is the set of all maximally consistent theories $T\subseteq \Sigma=\Sigma(\varphi_0)$ (over the finite sublanguage given by the Fisher-Ladner closure of $\varphi_0$);
for $T\in S^c$, $B\subseteq A$, we first put
$$T_B\,\, :=\,\, \{(B\preceq C)\in T: C\subseteq A\} \cap \{(D_C\varphi)\in T: C\subseteq A \mbox{ with } (B\preceq C)\in T\};$$
then the group
group epistemic relations $\simB$ are given by putting, for all $T,W\in S^c$:
$$T\simB W\,\, \mbox{ iff } \,\, T_B=W_B;$$
and the valuation is given by putting, for all $\theta\in Prop\cup \{B\preceq C: B, C\subseteq A\}$:
$$\|\theta\|=\{T\in S^c: \theta\in T\}.$$
It is easy to check that \emph{$S^c$ is a pseudo-model}: $\simB$ are obviously equivalence relations, and the other conditions are ensured by the axioms. It is also clear that \emph{$S^c$ is finite}: since $\Sigma=\Sigma(\varphi_0)$, the number of maximally consistent subtheories is bounded the size of ${\mathcal P}(\Sigma)$, hence $|S^c|\leq 2^{|\Sigma|}$.

For the following result, it is useful to denote by $\pm\varphi$ any of the formulas in the set $\{\varphi,\sim\varphi\}$, and to extend the sets $T_B$ by putting
$$T_B^{\pm} \,\, :=\,\, \{(\pm B\preceq C)\in T:  C\subseteq A\} \cap \{(\pm D_C\varphi)\in T:C\subseteq A \mbox{ with } (B\preceq C)\in T\}.$$
Then we can characterize $\simB$ in terms of one-way inclusion:
$$T\simB W\,\, \mbox{ iff } \,\, T_B^{\pm} \subseteq W.$$

\begin{lem}\label{TruthLemma} (``\emph{Truth Lemma}'')
Given a finite canonical pseudo-model $S^c$ over some Fisher-Ladner closure $\Sigma$, we have for all $\varphi\in \Sigma$:
$$T\models_{\bS^c} \varphi \,\, \mbox{ iff } \,\, \varphi\in T,$$
for every $T\in S^c$.
\end{lem}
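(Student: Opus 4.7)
The proof proceeds by induction on the complexity of $\varphi \in \Sigma$. The base cases $\varphi = p$ and $\varphi = B \preceq C$ are immediate from the definition of the valuation $\|\bullet\|$, and the Boolean cases are handled routinely using the fact that each $T \in S^c$ is a maximally consistent subtheory of $\Sigma$ and that $\Sigma$ is closed under single negation. All the real work lies in the modal case $\varphi = Cd_{\mathcal B}\psi$, which splits into the two directions.

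For the ``if'' direction, suppose $Cd_{\mathcal B}\psi \in T$. The plan is to show by induction on the length of a $\sim^{\mathcal B}$-chain from $T$ to $W$ that $Cd_{\mathcal B}\psi \in W$; together with the Fixed Point axiom and the inductive hypothesis on $\psi$, this gives $W \models \psi$. For the inductive step along a link $T' \sim_B W'$ with $B \in {\mathcal B}$ and $Cd_{\mathcal B}\psi \in T'$: the Fixed Point axiom gives $D_B Cd_{\mathcal B}\psi \in T'$; since $B \preceq B \in T'$ by Inclusion, this formula lies in $T'_B$; the equality $T'_B = W'_B$ (from $T' \sim_B W'$) transports it to $W'_B \subseteq W'$; and Veracity extracts $Cd_{\mathcal B}\psi \in W'$. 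Closure condition~4 ensures $D_B Cd_{\mathcal B}\psi \in \Sigma$ when ${\mathcal B}$ is not a singleton; when ${\mathcal B} = \{\{b\}\}$ the formula $Cd_{\mathcal B}\psi$ reduces to $K_b\psi$, and positive introspection handles the iteration directly.

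For the ``only if'' direction, assume $Cd_{\mathcal B}\psi \notin T$ and suppose for contradiction that $\psi \in W$ for every $W$ in the $\sim^{\mathcal B}$-equivalence class $[T]_{\mathcal B}$ of $T$. Set
$$\eta \,:=\, \bigvee_{W \in [T]_{\mathcal B}} \chi_W, \qquad \chi_W \,:=\, \bigwedge W,$$
where $\chi_W$ is the (finite) characteristic formula of $W$. The plan is to apply the $Cd$-Induction axiom to $\eta$. Three ingredients are needed: (a)~$\chi_T \vdash \eta$ (trivial, as $T \in [T]_{\mathcal B}$); (b)~$\vdash \eta \to \psi$ (each $\chi_W$ has $\psi$ as a conjunct by the contradiction hypothesis); and, crucially, (c)~$\vdash \eta \to \bigwedge_{B \in {\mathcal B}} D_B \eta$. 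Granted (c), necessitation yields $\vdash Cd_{\mathcal B}\bigl(\eta \to \bigwedge_{B \in {\mathcal B}} D_B \eta\bigr)$, the Induction axiom gives $\vdash \eta \to Cd_{\mathcal B}\eta$, and then $Cd$-Distribution with (b) gives $\chi_T \vdash Cd_{\mathcal B}\psi$, forcing $Cd_{\mathcal B}\psi \in T$ --- the desired contradiction.

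Establishing (c) is the main obstacle and is where the comparative axioms earn their keep. Reducing disjunct-wise, it suffices to prove that for each $W \in [T]_{\mathcal B}$ and each $B \in {\mathcal B}$ one has $\chi_W \vdash D_B \eta$. The key intermediate step is a ``signed canonical modal lemma'':
$$\vdash \chi_W \to D_B\, \chi_{W_B^\pm}, \qquad \chi_{W_B^\pm} := \bigwedge W_B^\pm,$$
which I would verify conjunct-by-conjunct over $W_B^\pm$. Positive statements $B \preceq C \in W$ pass into $D_B$ by Known Superiority; negative statements $\neg(B \preceq C) \in W$ pass by its derivable negative counterpart (Veracity plus Known Superiority give $\neg(B \preceq C) \leftrightarrow \neg D_B(B \preceq C)$, and $D_B$-negative introspection then supplies $D_B \neg (B \preceq C)$); formulas $D_C\varphi \in W$ with $B \preceq C \in W$ pass via $D_C\varphi \to D_C D_C\varphi$ (positive introspection) followed by Knowledge Transfer instantiated at $D_C\varphi$, which under $B \preceq C$ delivers $D_C D_C\varphi \to D_B D_C\varphi$; and their negations $\neg D_C\varphi$ pass similarly, using $D_C$-negative introspection and Knowledge Transfer applied to $\neg D_C\varphi$. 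Once this lemma is in hand, finiteness of $S^c$ gives $\vdash \chi_{W_B^\pm} \to \bigvee\{\chi_V : W_B^\pm \subseteq V\}$; the characterization $V \sim_B W \iff W_B^\pm \subseteq V$ identifies this with $\bigvee_{V \sim_B W}\chi_V$; and since $[T]_{\mathcal B}$ is closed under $\sim_B$ for every $B \in {\mathcal B}$, the latter is a sub-disjunction of $\eta$, yielding $\vdash \chi_{W_B^\pm} \to \eta$ and hence, via the signed canonical modal lemma and $D_B$-monotonicity, $\chi_W \vdash D_B\eta$. This closes (c) and completes the induction.
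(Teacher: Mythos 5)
Your proof is correct and follows essentially the same route as the paper's: the same induction with the same canonical-model machinery, the Fixed Point axiom pushing $Cd_{\mathcal B}\psi$ along chains for one direction, and the $Cd$-Induction axiom applied to the disjunction $\eta$ of characteristic formulas of the reachable theories for the other. The only (presentational) difference is that you prove the key claim $\vdash \eta \to \bigwedge_{B\in{\mathcal B}} D_B\eta$ by a direct conjunct-by-conjunct derivation of $\vdash \chi_W \to D_B\chi_{W_B^\pm}$, where the paper argues by contradiction from the consistency of $\widehat{W}\wedge\langle D_B\rangle\widehat{V}$ --- but both rest on exactly the same syntactic facts about $W_B^\pm$.
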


\begin{proof} For $T\in S^c$, we will use the notation $\widehat{T}:=\bigwedge T$.
The proof is by induction on the complexity of $\varphi$, in which we treat the inductive case for $D_B\varphi$ (i.e. $Cd_{\{B\}}\varphi$) separately from the one for $Cd_{\mathcal B}\varphi$ with $|{\mathcal B}|>1$.

\medskip

\noindent\textbf{Base cases}: \emph{Atomic propositions} $p\in Prop$ and \emph{comparative assertions} $B\preceq C$ are taken care by our choice of valuation.

\medskip

\noindent\textbf{Inductive cases for Boolean connectives}: these are trivial.

\medskip

\noindent\textbf{Inductive case for $D_B\varphi$}. \emph{Left-to-right}: assume that $T\models D_B\varphi$, and suppose towards a contradiction that $(D_B\varphi)\not\in T$. Take the set
$$W_0=\{\sim\varphi\}\cup T_B^{\pm}$$
(where $T_B^{\pm}$ is the notation introduced earlier).

\smallskip

\par\noindent\emph{Claim}: $W_0$ is consistent.

\smallskip

\emph{Proof of Claim}: Suppose not. Then we have $\vdash \, \widehat{T_B^{\pm}}\to \varphi$. Applying $D$-Necessitation and $D$-Distribution (derivable in our system), we obtain $\vdash \, D_B \widehat{T_B^{\pm}}\to D_B\varphi$. But it is easy to see that we also have $\vdash \, \widehat{T_B^{\pm}}\to D_B \widehat{T_B^{\pm}}$ (which follows from the theorems $(\vdash \, \pm B\preceq C)\to D_B (\pm B\preceq C)$ and
$\vdash \, B\preceq C\to (\pm D_C\varphi \to D_B \pm D_C\varphi)$, derivable in our system from the Interaction Axioms together with the derivable $S5$ laws for $D_B$), and also $\vdash\, \widehat{T}\to  \widehat{T_B^{\pm}}$ (since $T_B^{\pm}\subseteq T$). Putting all these together, we obtain $\vdash\,  \widehat{T}\to D_B\varphi$. Since $(D_B\varphi)\in \Sigma$ and $T$ is maximally consistent subset of $\Sigma$, this gives us $(D_B\varphi)\in T$, which contradicts our assumption that $(D_B\varphi)\not\in T$.

\smallskip

Given the above Claim, we can use the standard Lindenbaum Lemma for our language to construct a maximally consistent subset $W\in S^c$, with $T_B^{\pm}\subseteq W$ and $(\sim\varphi)\in W$.
The first gives us $T\simB W$, and the second gives us $\varphi\not\in W$, and so $W\not\models\varphi$ (by the induction hypothesis), which together contradict the assumption that $T\models D_B\varphi$.

\medskip

\par\noindent\emph{Right-to-left}: Assume that $(D_B\varphi)\in T$. To prove that $T\models D_B\varphi$, let $W\in S^c$ be s.t. $T\simB W$; it is enough to show that $\varphi\in W$.

For this, note that, by the definition of $\simB$ in our canonical pseudo-model,  $(D_B\varphi)\in T$ and $T\models D_B\varphi$ imply $(D_B\varphi)\in W$, which in its turn implies that $\varphi\in W$ (by the provable $S5$ ``axiom's'' for $D$, in particular Truthfulness: $\vdash\, D_b\varphi\to \varphi$).

\medskip

\noindent\textbf{Inductive case for $Cd_{\mathcal B}\varphi$ with $|{\mathcal B}|>1$}. \emph{Left-to-right}:
Assume that $T\models Cd_{\mathcal B}\varphi$. Let
$$S_T^{\mathcal B}\,  := \, \{W\in S^c: \mbox{ there is a chain } T=T^0\sim_{B^1} \ldots \sim_{B^n} T^n=W
\mbox{ with $n\geq 0$ and all $B_i\in {\mathcal B}$}\}$$
We put $\eta\, :=\, \bigvee \{\widehat{W}: W\in S_T^{\mathcal B}\}$.

\smallskip

\par\noindent\emph{Claim 1}: We have $\vdash \, \eta\to \bigwedge_{B\in {\mathcal B}} D_B\eta$.

\smallskip

\emph{Proof of Claim 1}: Suppose not. Then there is some $B\in {\mathcal B}$ s.t. $\eta\wedge \langle D_B \rangle \neg \eta$ is consistent (where $\langle D_B \rangle \theta:= \neg D_B \neg \theta$ is the existential dual of $D_B$). Given the definition of $\eta$, and the easily proven theorem $\vdash \, \bigvee \{\widehat{V}: V\in S^c\}$, this means there exist $W\in S_T^{\mathcal B}$, $V\in S^c - S_T^{\mathcal B}$ such that $\widehat{W}\wedge \langle D_B\rangle \widehat{V}$ is consistent. But this implies that $W\simB V$ (using the definition of $\simB$, $S5$ laws for $D$, and the axioms of Known Superiority and Knowledge Transfer). From this and $W\in S_T^{\mathcal B}$ (together with $B\in {\mathcal B}$ and the definition of $S_T^{\mathcal B}$), we obtain that $V\in S_T^{\mathcal B}$, which contradicts of our above choice of $V\in S^c - S_T^{\mathcal B}$.

\smallskip

\par\noindent\emph{Claim 2}: We have  $\vdash \, \eta\to \varphi$.

\smallskip

\emph{Proof of Claim 2}: From $T\models Cd_{\mathcal B}\varphi$, using the semantics of $Cd$ and the definition of $S_T^{\mathcal B}$, we obtain that $W\models \varphi$ for all $W\in S_T^{\mathcal B}$. By the induction hypothesis, we get that $\varphi\in W$, hence $\vdash \, \widehat{W}\to\varphi$, for all $W\in S_T^{\mathcal B}$. Using the definition of $\eta$, we derive $\vdash\, \eta\to \varphi$, as desired.

\smallskip

Applying now $Cd$-Necessitation to the theorem in Claims 1, we obtain $\vdash \, Cd_{\mathcal B} (\eta\to \bigwedge_{B\in {\mathcal B}} D_B\eta$, which by the $Cd$-Induction Axiom yields $\vdash \, \eta \to Cd_{\mathcal B}\eta$. Combining this with the theorem $\vdash \, Cd_{\mathcal B}\eta\to Cd_{\mathcal B}\varphi$ (obtain from the theorem in Claim 2 by applying $Cd$-Necessitation and $Cd$-Distribution), we obtain
$\vdash \, \eta\to Cd_{\mathcal B}\varphi$. But we also have $\vdash \, \widehat{T}\to \eta$ (since $T\in S_T^{\mathcal B}$). Putting these together, we obtain $\vdash \,  \widehat{T}\to Cd_{\mathcal B}\varphi$, which implies that $(Cd_{\mathcal B} \varphi)\in T$ (since $(Cd_{\mathcal B}\varphi)\in \Sigma$ and $T$ is a maximally consistent subset of $\Sigma$), as desired.

\medskip

\par\noindent\emph{Left-to-right}: Assume that $(Cd_{\mathcal B}\varphi)\in T$. To prove that $T\models Cd_{\mathcal B}\varphi$, let $W\in S^c$ be reachable by some chain $T=T^0\sim_{B^1} T^1 \ldots \sim_{B^n} T^n=W$ for some $n\geq 0$ and $B^1, \ldots, B^n\in {\mathcal B}$; it is enough to show that $W\models \varphi$.

\par\noindent\emph{Claim}: $(Cd_{\mathcal B}\varphi)\in T^k$ for all $1\leq k\leq n$.

\smallskip

\emph{Proof of Claim}: Induction on $k$. For $k=1$, the claim is true by the assumption that $(Cd_{\mathcal B}\varphi)\in T$. For the inductive step: assume that $(Cd_{\mathcal B}\varphi)\in T^k$. From this, using the theorem
$\vdash \, Cd_{\mathcal B}\varphi\to D_{B^k} Cd_{\mathcal B}\varphi$ (which follows from the $Cd$-Fixed Point Axiom), together with $(D_{B^k}Cd_{\mathcal B}\varphi)\in \Sigma$ (by the closure conditions on $\Sigma$), we get that $(D_{B^k}Cd_{\mathcal B}\varphi)\in T^k$ (since $T^k$ is a maximally consistent subset of $\Sigma$).
From this and $T^k \sim_{B^k} T^{k+1}$, we obtain $(D_{B^k}Cd_{\mathcal B}\varphi)\in T^{k+1}$ (by the definition of $\simB$),
hence $(Cd_{\mathcal B}\varphi)\in T^{k+1}$ (using the theorem $\vdash D_{B^k}Cd_{\mathcal B}\varphi \to Cd_{\mathcal B}\varphi$ and the fact that $T^{k+1}$ is maximally consistent); so we proved the claim for $k+1$, as desired.

\smallskip

Applying now the above Claim to $k:=n$, we obtain that $(Cd_{\mathcal B}\varphi)\in T^n=W$, and hence (by the theorem $\vdash \, Cd_{\mathcal B}\varphi \to\varphi$ and the fact that $W$ is maximally consistent) we have $\varphi\in W$, which implies $W\models \varphi$ (by the induction hypothesis), as desired.
\end{proof}

\begin{cor}\label{pseudo-completeness} The axioms and rules of $\mathbf{LCd\preceq}$ are sound and weakly complete with respect to pseudo-models. Moreover, $\mathbf{LCd\preceq}$ has the \emph{finite pseudo-model property}: it is also complete with respect to finite pseudo-models.
\end{cor}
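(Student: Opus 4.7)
The plan is to derive this corollary directly from Proposition~\ref{pseudo-soundness} and the Truth Lemma (Lemma~\ref{TruthLemma}), essentially as a wrap-up of the work already done in the two preceding results.

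For soundness, I would simply invoke Proposition~\ref{pseudo-soundness} verbatim. For weak completeness over pseudo-models, my approach would be contrapositive. Assume $\varphi_0$ is consistent in $\mathbf{LCd\preceq}$. First, form its Fisher--Ladner closure $\Sigma = \Sigma(\varphi_0)$ and recall that this set is finite. Next, via a standard Lindenbaum-style argument restricted to the finite sublanguage spanned by $\Sigma$, extend $\{\varphi_0\}$ to a maximally consistent subset $T_0 \subseteq \Sigma$. Then build the finite canonical pseudo-model $\bS^c$ on the set $S^c$ of all maximally consistent subsets of $\Sigma$, exactly as in the construction preceding Lemma~\ref{TruthLemma}.

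Before invoking the Truth Lemma, I would carefully verify that $\bS^c$ meets the five defining conditions of a pseudo-model: equivalence of each $\sim_B$ is immediate because $T \sim_B W$ is defined by the equality $T_B = W_B$; conditions~3--5 on the extended valuation are direct images of the Inclusion, Additivity and Transitivity axioms of group~(III), since these are all theorems of $\mathbf{LCd\preceq}$ and hence belong to every maximally consistent $T$; and condition~2 --- the interaction of $\|B \preceq C\|$ with $\sim_B$ --- follows by combining the Known Superiority and Transitivity axioms with the very definition of $T_B$. Once these are established, applying Lemma~\ref{TruthLemma} to $T_0$ and $\varphi_0 \in \Sigma$ yields $T_0 \models_{\bS^c} \varphi_0$, producing a pseudo-model satisfying $\varphi_0$.

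The finite pseudo-model property then comes for free from the same construction: since $\Sigma(\varphi_0)$ is finite, we have $|S^c| \leq 2^{|\Sigma|}$, so the pseudo-model witnessing consistency of $\varphi_0$ is itself finite, and the bound is effectively computable from $\varphi_0$. The main obstacle has already been absorbed into Lemma~\ref{TruthLemma}; what remains here is the mostly routine verification of the pseudo-model conditions on the canonical construction. The only mildly delicate step I anticipate is condition~2, which requires the Transitivity axiom $B\preceq C \wedge C\preceq E \to B\preceq E$ (to transport $D_E\varphi$-membership between $T_C$ and $T_B$) in order both to deliver $T\sim_C W$ whenever $T\sim_B W$ and $(B\preceq C)\in T$, and to propagate the formula $B\preceq C$ to $W$ via the fact that it lies in $T_B$.
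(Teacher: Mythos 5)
Your proposal is correct and follows essentially the same route as the paper: soundness is quoted from Proposition~\ref{pseudo-soundness}, and completeness is obtained by extending a consistent $\varphi_0$ via Lindenbaum to a maximally consistent subset of the finite Fisher--Ladner closure, applying the Truth Lemma~\ref{TruthLemma} in the finite canonical pseudo-model, and reading off the finite pseudo-model property from the bound $|S^c|\leq 2^{|\Sigma|}$. Your extra verification that $\bS^c$ satisfies the pseudo-model conditions is a point the paper dispatches with ``ensured by the axioms,'' so you are if anything slightly more careful than the original.
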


\begin{proof} Soundness was established in Proposition \ref{pseudo-soundness}. Given any consistent formula $\varphi_0$, construct the canonical pseudo-model $\bS^c$ for $\varphi_0$. By Lindenbaum Lemma, there exists some maximally consistent theory $T_0\in \bS^c$ with $\varphi_0\in T_0$. By the Truth Lemma \ref{TruthLemma}, $T_0$ satisfies $\varphi_0$ in $\bS^c$. Since $\bS^c$ is finite, this gives us weak completeness wrt finite pseudo-models (and hence also wrt all pseudo-models).
\end{proof}

\subsection{From pseudo-models to models}

Given a pseudo-model $\bS=(S, \simB, \|\bullet\|)_{B\subseteq A}$, we construct an \emph{associated model}
$\bM= (H, \sima, \underline{\bullet})_{a\in A}$. The construction technique is a variation of modal \emph{unravelling}, making infinitely many copies of each state:

As new set of \emph{states} we take the set $H$ all `histories', i.e. all finite sequences $h=(s_0, B^1, s_1, \ldots, B^n, s_n)$, with $n\geq 0$, $s_o, \ldots, s_n\in S$ and $B^1, \ldots, B^n \subseteq A$ satisfying $s_{k-1} \sim_{B^k} s_k$ for all $k=1,n$. We denote by $last(h):=s_n$ the last state in history $h$, and by $\to_{B}$ the natural \emph{one-step relation} on histories, given by $h\to_B h'$ iff $h'=(h, B, s')$ (with $last(h) \simB s'=last(h')$).

The one-step relations structure $H$ in a \emph{tree-like manner} (or more precisely, a ``rootless tree'', i.e. a \emph{forest}, since there is no unique root): \emph{every two nodes $h, h'$ of this ``rootless tree'' are connected by a unique non-redundant path}.

\smallskip

To make this tree into a model for our language, we define first a \emph{new one-step relation}  $\stackrel{\sim}{\to}_{B}$, incorporating all the one-step relations labeled by groups that are (locally) at least as knowledgeable as $B$:
$$h \stackrel{\sim}{\to}_{B} h' \,\,\, \mbox{ iff } \,\,\, h\to_{B'} h' \mbox{ for some $B'$ with }
last(h)\models B'\preceq B.$$
In particular, for $B=\{b\}$ with $b\in A$, we obtain new one-step relations $\stackrel{\sim}{\to}_b$ for single agents, and then we can go on to define our \emph{indistinguishability relations} $\simb\subseteq H\times H$, by putting
$$\simb \,\, :=\,\, \left( \stackrel{\sim}{\to}_b\cup \stackrel{\sim}{\ot}_b \right)^*,$$
where $\stackrel{\sim}{\ot}_b$ is the converse of $\stackrel{\sim}{\to}_b$, and $R^*$ is the reflexive-transitive closure of $R$. The relation $\simb$ is the smallest equivalence relation that includes $\stackrel{\sim}{\to}_b$.

\smallskip

Finally, we define our \emph{truth-assignment function}, by putting:
$$\underline{h} \,\, :=\, \, \{p\in Prop: last(h)\models_{\bS} p\}=\{ \{p\in Prop: last(h)\in \|p\|\}$$

\medskip

This gives us the \emph{associated model} $\bM=(H, \simb, \underline{\bullet})_{b\in A}$. To compare it with the original pseudo-model, we can \emph{consider this associated model as a pseudo-model} $\bM=(H, \simB, \underline{\bullet})_{b\subseteq A}$, when endowed with the distributed-knowledge relations $\simB$ (defined as usual by taking intersections: $\simB := \bigcap_{b\in B}\simb$) and the additional comparative ``atoms'' $B\preceq C$ (whose valuation is defined to fit the associated model definition: $h\in \|B\preceq C\|$ iff $\forall h'\in H (h\simB h'\Rightarrow h\simC h')$).
It is obvious that the model-based semantics on $\bM$ agrees with this pseudo-model semantics on $\bM$. So we can now directly compare $\bS$ and $\bM$ as pseudo-models.

\medskip

Before doing this, it is useful to give more concrete characterizations of the distributed-knowledge relations $\simB$ in $\bM$.

\begin{lem}\label{b-path}

The following are \emph{equivalent}, for all $b\in A$ and histories $h,h'\in H$:
\begin{enumerate}
\item $h\simb h'$;
\item the non-redundant path from $h$ to $h'$ consists only of steps of the form $h_n {\to}_{B^n} h_{n+1}$, or $h_n {\ot}_{B^n} h_{n+1}$, with $last(h_n)\models B_n\preceq b$.
\end{enumerate}
\end{lem}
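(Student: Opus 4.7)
The proof reduces to two easy-to-state facts: (i) comparative assertions $B \preceq C$ are invariant along $\sim_B$-classes (by clause 2 of the definition of pseudo-models), and (ii) the graph $(H, \to_B)_{B \subseteq A}$ is a forest, so any two nodes are connected by a \emph{unique} non-redundant path, and any walk between them ``reduces'' to this path by cancelling back-and-forth traversals of the same edge. Only (ii) requires some care to exploit.

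\textbf{Preliminary observation.} Whenever $h \to_{B^n} h_{n+1}$ (so $h_{n+1} = (h, B^n, last(h_{n+1}))$ with $last(h) \sim_{B^n} last(h_{n+1})$), pseudo-model condition 2 gives
$$last(h) \models B^n \preceq b \,\, \iff \,\, last(h_{n+1}) \models B^n \preceq b.$$
Consequently, in the side condition ``$last(h_n) \models B^n \preceq b$'' appearing in the lemma, it does not matter which endpoint of the edge we evaluate at. In particular, an edge $h_n \to_{B^n} h_{n+1}$ with $last(h_n) \models B^n \preceq b$ is exactly a $\stackrel{\sim}{\to}_b$-step, and an edge traversed backwards $h_n \leftarrow_{B^n} h_{n+1}$ with $last(h_n) \models B^n \preceq b$ (equivalently $last(h_{n+1}) \models B^n \preceq b$, the source of the underlying forward arrow) is exactly a $\stackrel{\sim}{\leftarrow}_b$-step.

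\textbf{Direction (2)$\Rightarrow$(1).} This is immediate from the observation: each step of the non-redundant path from $h$ to $h'$ is a $\stackrel{\sim}{\to}_b$- or $\stackrel{\sim}{\leftarrow}_b$-step, so concatenating them witnesses $h \sim_b h'$, since $\sim_b = (\stackrel{\sim}{\to}_b \cup \stackrel{\sim}{\leftarrow}_b)^*$.

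\textbf{Direction (1)$\Rightarrow$(2).} Suppose $h \sim_b h'$. Then there is a finite walk
$h = g_0, g_1, \ldots, g_m = h'$
in which each consecutive pair $(g_i, g_{i+1})$ is related by $\stackrel{\sim}{\to}_b$ or $\stackrel{\sim}{\leftarrow}_b$, i.e.\ by some underlying forest-edge $\to_{B^i}$ (in one of the two directions) with $B^i \preceq b$ holding at the appropriate endpoint. The plan is to show that this walk reduces to the unique non-redundant path from $h$ to $h'$ while preserving the side condition along every remaining edge.

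Because $(H, (\to_B)_B)$ is a forest (each node $h' = (h,B,s')$ has the unique predecessor $h$), a standard reduction argument applies: whenever the walk traverses the same edge first in one direction and then immediately back, we may delete that pair to obtain a shorter walk with the same endpoints; iterating this cancellation yields the unique non-redundant path from $h$ to $h'$. The crucial point is that cancellation only \emph{removes} edges; every edge that survives in the non-redundant path already occurred in the original walk, hence is of the required form $g_i \to_{B^i} g_{i+1}$ (or its converse) with the side condition $B^i \preceq b$ satisfied at $last(g_i)$. This gives (2).

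\textbf{Where the work lies.} The only non-trivial step is the tree-cancellation argument, and its content is the mild book-keeping observation that a non-redundant back-and-forth traversal of a tree-edge in $H$ must consist of consecutive steps in the walk (there is no other route ``around'' the edge in a forest), so all cancellations are local. Everything else is bookkeeping on the semantic clauses and pseudo-model axioms, with the invariance observation doing the main semantic work.
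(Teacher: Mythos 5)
Your proof is correct and takes essentially the same route as the paper, whose own ``proof'' is a one-line appeal to the definition of $\simb$ on histories and the uniqueness of the non-redundant path---precisely the two facts you elaborate (tree-cancellation of walks plus the fact that each surviving edge is an underlying $\stackrel{\sim}{\to}_b$- or $\stackrel{\sim}{\ot}_b$-step). Your preliminary observation, that $B^n\preceq b$ transfers across a $\sim_{B^n}$-edge so the side condition can be checked at the walk-source of a backwards step, is a genuinely needed detail that the paper leaves implicit here and only records afterwards in Lemma \ref{one-step}.
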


\begin{proof}
This should be obvious, given the definition of $\simb$ on histories, and the uniqueness of the non-redundant path from $h$ to $h'$.
\end{proof}

\begin{lem}\label{path}
The following are \emph{equivalent}, for all $B\subseteq A$ and histories $h,h'\in H$:
\begin{enumerate}
\item $h\simB h'$;
\item the non-redundant path from $h$ to $h'$ consists only of steps of the form $h_n {\to}_{B^n} h_{n+1}$, or $h_n {\ot}_{B^n} h_{n+1}$, with $last(h_n)\models B_n\preceq B$.
\end{enumerate}
\end{lem}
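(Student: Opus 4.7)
The plan is to reduce the lemma for arbitrary groups $B$ to the already-proven singleton case (Lemma on $b$-paths) and to exploit the forest structure of $H$. The two directions are handled separately, and each relies on one of the core ``comparative knowledge'' axioms, interpreted semantically in the pseudo-model $\bS$ via the closure conditions (3)--(5) on extended valuations.

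\emph{Direction $(2)\Rightarrow(1)$.} Suppose the unique non-redundant path from $h$ to $h'$ consists of steps $h_n\to_{B^n} h_{n+1}$ (or reverse) with $last(h_n)\models B^n\preceq B$. Fix any $b\in B$. Since $\{b\}\subseteq B$, condition (3) on pseudo-models (Inclusion) gives $last(h_n)\models B\preceq b$, so condition (5) (Transitivity) yields $last(h_n)\models B^n\preceq b$. Hence each step of the path satisfies the criterion of Lemma \ref{b-path} for the agent $b$, which gives $h\sim_b h'$. As $b\in B$ was arbitrary, $h\in\bigcap_{b\in B}\sim_b = \sim_B$, i.e.\ $h\simB h'$.

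\emph{Direction $(1)\Rightarrow(2)$.} Suppose $h\simB h'$, so $h\sim_b h'$ for every $b\in B$. By Lemma \ref{b-path}, for each $b\in B$ the non-redundant path from $h$ to $h'$ consists entirely of steps $h_n\to_{B^n} h_{n+1}$ (or reverse) with $last(h_n)\models B^n\preceq b$. Crucially, because $H$ is a forest under $\to_B$, the non-redundant path from $h$ to $h'$ is \emph{unique}, so it is literally the same sequence of steps for every choice of $b\in B$. Thus, for each step with label $B^n$, we have $last(h_n)\models B^n\preceq b$ simultaneously for all $b\in B$. Since $B$ is finite and $B=\bigcup_{b\in B}\{b\}$, iterated application of Additivity (condition (4) of pseudo-models) yields $last(h_n)\models B^n\preceq B$, as required.

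The main obstacle, and the reason the proof is not entirely routine, is the need to invoke the uniqueness of the non-redundant path in the forest $H$: without it, the ``witness paths'' provided by Lemma \ref{b-path} for different $b\in B$ would not be guaranteed to coincide, and Additivity could not be applied step-by-step. Once this geometric fact about the unravelling is pinned down, the axioms Inclusion, Transitivity, and Additivity carry all the algebraic weight; no inductive argument on path length is needed beyond what is already encapsulated in Lemma \ref{b-path}.
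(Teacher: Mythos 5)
Your proof is correct and follows essentially the same route as the paper, which likewise derives the group case from Lemma \ref{b-path} by exploiting the uniqueness of the non-redundant path in the forest $H$ and the closure conditions on pseudo-models. Your write-up merely makes explicit the roles of Inclusion, Transitivity, and Additivity (the paper's proof cites only condition 5, leaving the rest implicit), so there is nothing to correct.
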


\begin{proof}
This follows immediately from the preceding result, using again the uniqueness of the non-redundant path from $h$ to $h'$ (and condition 5 in the definition of pseudo-models).
\end{proof}

Here are some useful properties of the relations $\stackrel{\sim}{\to}_B$ on histories:

\begin{lem}\label{one-step}
If $h\stackrel{\sim}{\to}_B h'$, then we have:
\begin{enumerate}
\item $last(h)\simB last(h')$;
\item $last(h)\in \|B\preceq C\|_{\bS}$ iff $last(h')\in \|B\preceq C\|_{\bS}$;
\item if any of the two equivalent conditions in the previous part hold, then $h \stackrel{\sim}{\to}_C h'$.
\end{enumerate}
\end{lem}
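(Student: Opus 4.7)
The plan is to unpack the definition of $\stackrel{\sim}{\to}_B$ and then apply, one by one, the semantic conditions imposed on pseudo-models (conditions 1--5 in the definition of pseudo-models), together with the axioms of comparative knowledge that these conditions reflect. By definition, $h\stackrel{\sim}{\to}_B h'$ means that there exists some group $B'\subseteq A$ with $h\to_{B'}h'$ (i.e.\ $h'=(h,B',last(h'))$) and $last(h)\in \|B'\preceq B\|_{\bS}$, and moreover $last(h)\sim_{B'} last(h')$ by definition of $\to_{B'}$. So every proof below starts from the witness $B'$ supplied by the definition.

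For part (1), I would apply condition 2 in the definition of pseudo-models (the semantic counterpart of Knowledge Transfer and Known Superiority): since $last(h)\in \|B'\preceq B\|_{\bS}$ and $last(h)\sim_{B'} last(h')$, this condition immediately yields $last(h)\sim_{B} last(h')$, and also $last(h')\in \|B'\preceq B\|_{\bS}$ as a bonus. For part (2), assume first $last(h)\in \|B\preceq C\|_{\bS}$; from part (1) we have $last(h)\sim_B last(h')$, and another application of pseudo-model condition 2 gives $last(h')\in \|B\preceq C\|_{\bS}$. The converse direction uses the symmetry of $\sim_B$ (which is an equivalence relation by condition 1) to obtain $last(h')\sim_B last(h)$, and then the same clause again.

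For part (3), suppose one (hence, by (2), both) of the equivalent conditions holds, i.e.\ $last(h)\in \|B\preceq C\|_{\bS}$. Combined with the witness $last(h)\in \|B'\preceq B\|_{\bS}$ from the assumption $h\stackrel{\sim}{\to}_B h'$, condition 5 in the definition of pseudo-models (transitivity of $\preceq$) yields $last(h)\in \|B'\preceq C\|_{\bS}$. Together with $h\to_{B'} h'$, this is precisely what the definition of $\stackrel{\sim}{\to}_C$ requires, so $h\stackrel{\sim}{\to}_C h'$.

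I do not foresee a substantive obstacle here: the lemma is essentially a bookkeeping check that the decorations $B'$ used to justify a $\stackrel{\sim}{\to}_B$-step can be transferred across the endpoints and widened to any larger label for which the $\preceq$-comparison holds. The only point that needs mild care is the biconditional in (2): one must remember to use the symmetry of $\sim_B$ for the right-to-left direction, rather than trying to invoke a (nonexistent) ``converse'' of the pseudo-model clause. Everything else is a direct unfolding of definitions and application of the pseudo-model conditions.
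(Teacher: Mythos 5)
Your proposal is correct and follows essentially the same route as the paper's own proof: unpack the witness $B'$ from the definition of $\stackrel{\sim}{\to}_B$, apply pseudo-model condition 2 for parts (1) and (2), and condition 5 (transitivity of $\preceq$) for part (3). The only difference is that you spell out the symmetry step in the right-to-left direction of (2), which the paper leaves implicit.
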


\begin{proof} Assume $h\stackrel{\sim}{\to}_B h'$. By the definition of $\stackrel{\sim}{\to}_B$, this means that $h\to_{B'} h'$ (i.e. $h'=(h, B', last(h')$ with $last(h)\sim_{B'} last (h')$) and $last(h)\models B'\preceq B$ (i.e. $last(h)\in \|B'\preceq B\|_{\bS}$). Putting together $last(h)\sim_{B'} last (h')$, $last(h)\in \|B'\preceq B\|_{\bS}$ and condition 2 in the definition of pseudo-models, we get that $last(h)\simB last(h')$ and $last(h')\in \|B'\preceq B\|_{\bS}$. The first of these immediately yields part 1 of our Lemma.

As for part 2: it follows from the (already proven) part 1 ($last(h)\simB last(h')$) together again with condition 2 in the definition of pseudo-models.

Finally, for part 3: by (the already proven) clause 2, if any of the two conditions in that part holds, then the first one does, i.e. we have $last(h)\in \|B\preceq C\|_{\bS}$. This together with $last(h)\in \|B'\preceq B\|_{\bS}$ gives us that $last(h) \in \|B'\preceq C\|_{\bS}$ (by condition 5 in the definition of pseudo-models). Combining this with the fact that $h\to_{B'} h'$ (and using the definition of $\stackrel{\sim}{\to}_C$), we obtain that $h \stackrel{\sim}{\to}_C h'$, as desired.
\end{proof}

We can now extend these properties to the relation $\simB$ on histories:

\begin{lem}\label{equiv}
If $h\simB h'$, then we have:
\begin{enumerate}
\item $last(h)\simB last(h')$;
\item $last(h)\in \|B\preceq C\|_{\bS}$ iff $last(h')\in \|B\preceq C\|_{\bS}$;
\item if any of the two equivalent conditions in the previous part hold, then $h \simC h'$.
\end{enumerate}
\end{lem}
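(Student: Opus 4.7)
The natural plan is induction on the length $n$ of the unique non-redundant path connecting $h$ to $h'$ in the forest of histories. By Lemma \ref{path}, the assumption $h\simB h'$ says precisely that this path $h=h_0,h_1,\dots,h_n=h'$ decomposes into edges each of which is a $\stackrel{\sim}{\to}_{B}$-step or its converse $\stackrel{\sim}{\ot}_{B}$-step (since the side condition $last(h_k)\models B_k\preceq B$ in Lemma \ref{path} is exactly what turns a $\to_{B_k}$-edge into a $\stackrel{\sim}{\to}_{B}$-edge). The base case $n=0$ is trivial in all three parts: $h=h'$ gives (1) by reflexivity of $\simB$ on $\bS$, (2) by identity, and (3) trivially. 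For the inductive step I would peel off the last edge $h_{n-1}\stackrel{\sim}{\to}_{B} h'$ (or its converse), apply the induction hypothesis to the shorter path $h\simB h_{n-1}$, and combine with Lemma \ref{one-step} applied to the final edge.

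For part (1), each edge preserves the relation between the last-states: Lemma \ref{one-step}.1 gives this directly for $\stackrel{\sim}{\to}_{B}$-edges, and for $\stackrel{\sim}{\ot}_{B}$-edges it follows by symmetry of $\simB$ on $\bS$. Transitivity of $\simB$ on $\bS$ then combines the induction hypothesis $last(h)\simB last(h_{n-1})$ with $last(h_{n-1})\simB last(h')$ to give $last(h)\simB last(h')$. For part (2), Lemma \ref{one-step}.2 is already a biconditional, and the same biconditional holds for backward edges by direct reading; the inductive chain is a finite conjunction of biconditionals, so the equivalence $last(h)\in\|B\preceq C\|_{\bS}\Leftrightarrow last(h')\in\|B\preceq C\|_{\bS}$ is immediate.

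Part (3) is the only part with any subtlety. Assuming $last(h)\in\|B\preceq C\|_{\bS}$ (equivalently, by part (2), $last(h')\in\|B\preceq C\|_{\bS}$), I first need to know that $last(h_k)\in\|B\preceq C\|_{\bS}$ for every intermediate node $h_k$ on the path. This is exactly part (2) applied to the prefix $h\simB h_k$, which is itself a non-redundant $\simB$-path and so falls under the inductive hypothesis. With $\|B\preceq C\|_{\bS}$-membership available at every $last(h_k)$, Lemma \ref{one-step}.3 upgrades each edge $h_k\stackrel{\sim}{\to}_{B} h_{k+1}$ to $h_k\stackrel{\sim}{\to}_{C} h_{k+1}$ (and similarly for backward edges, after swapping endpoints and re-invoking part (2) to shift the side condition from $last(h_{k+1})$ back to $last(h_k)$). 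The resulting chain of $\stackrel{\sim}{\to}_{C}$ and $\stackrel{\sim}{\ot}_{C}$ edges witnesses $h\simC h'$ by the very definition of $\simC$ as the symmetric-reflexive-transitive closure of $\stackrel{\sim}{\to}_{C}$.

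The main obstacle, such as it is, is purely bookkeeping: keeping the induction clean when paths contain backward edges, and making sure part (3) cites part (2) applied to all prefixes so that the $\|B\preceq C\|$-condition travels along the whole path before Lemma \ref{one-step}.3 is invoked edge-by-edge. No new algebraic fact about pseudo-models is needed beyond what Lemma \ref{one-step} already encapsulates; this lemma is essentially the step-closure of that one-step result under the equivalence relation $\simB$.
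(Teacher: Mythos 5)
Your proposal is correct and follows essentially the same route as the paper: induction on the length of the unique non-redundant path, combining Lemma \ref{one-step} edge-by-edge with the characterization in Lemma \ref{path}, and using part (2) to transport the $\|B\preceq C\|$-condition along the path before invoking Lemma \ref{one-step}.3 (the paper organizes this by peeling off the last edge and applying the induction hypothesis to the prefix, which is the same bookkeeping in a different order). The only nitpick is that the final step for a non-singleton group $C$ is justified by Lemma \ref{path} rather than literally ``by the definition of $\simC$'', since $\simC$ is defined as $\bigcap_{c\in C}\sim_c$.
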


\begin{proof} We prove the three parts for all pairs of histories $(h,h')$ with $h\simB h'$. The proof is by \emph{induction on the length $N$ of the non-redundant path} from $h$ to $h'$:

\emph{Base case}: $h=h'$. All parts are trivial in this case (given that $\simB$ are equivalence relations).

\emph{Inductive case}: Suppose the non-redundant path from $h$ and $h'$ has length $N+1$, and let us look at the last transition on this path. Given Lemma \ref{path}, this transition can be either of the form
$h_N {\to}_{B^N} h_{N+1}=h'$, or of the form $h_N {\ot}_{B^N} h_{N+1}=h'$, with $last(h_N)\models B_n\preceq B$. Hence, we have either $h_N \stackrel{\sim}{\to}_B h'$ or $h_N \stackrel{\sim}{\ot}_B h'$. Note that the non-redundant path from $h$ to $h_N$ has length $N$. By the induction hypothesis, the pair $(h,h_N)$ satisfies all three parts of our Lemma (with $h'$ replaced by $h_N$). But (using either $h_N \stackrel{\sim}{\to}_B h'$ or $h_N \stackrel{\sim}{\ot}_B h'$, and applying Lemma \ref{one-step}), we can see that the pair $(h_N, h')$ also satisfies all three parts of our Lemma (with $h$ replaced by $h_N$). Putting these two together (and using the transitivity of respectively $\simB$, logical equivalence and $\simC$), we conclude that the pair $(h,h')$ also satisfies all three parts of our Lemma. \end{proof}

Given that pseudo-models are just Kripke models (with relations $\simB$ that happen to be indexed by groups, and having two kind of ``atoms'': $p\in Prop$ and $B\preceq C$ for $B,C\subseteq A$), it is meaningful to ask if the pseudo-models $\bS$ and $\bM$ are \emph{bisimilar}.

\begin{prop}\label{Representation}
Every pseudo-model $bS$ is a p-morphic image of its associated model $\bM$ (seen as a pseudo-model, as explained above). More precisely, the map $last: H\to S$, mapping every history $h\in H$ to its last element $last(h)$, is a surjective p-morphism\footnote{A p-morphism is a functional bisimulation, cf. \cite{BRV}.} from $\bM$ to $\bS$ (seen as Kripke models with basic relations $\simB$ and atoms in $Prop\cup\{B\preceq C: B,C\subseteq A\}$).
\end{prop}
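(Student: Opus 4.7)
The plan is to verify the four standard p-morphism clauses for $last : H \to S$, relative to $\bS$ and $\bM$ viewed as Kripke structures over the signature with relations $\simB$ (for $B\subseteq A$) and atoms in $Prop\cup\{B\preceq C : B,C\subseteq A\}$: surjectivity, atom preservation, the forth condition, and the back condition. Surjectivity of $last$ is immediate, since for every $s\in S$ the singleton sequence $(s)$ is a legal element of $H$ with $last((s))=s$. Atom preservation for $p\in Prop$ is built into the definition $\underline{h}=\{p\in Prop:last(h)\in\|p\|_{\bS}\}$, so $h\in\|p\|_{\bM}$ iff $last(h)\in\|p\|_{\bS}$ holds by construction.

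For the \emph{forth} condition I would appeal directly to Lemma~\ref{equiv}(1): whenever $h\simB h'$ in $\bM$, one has $last(h)\simB last(h')$ in $\bS$. For the \emph{back} condition, given $h\in H$ and $s'\in S$ with $last(h)\simB s'$ in $\bS$, I would exhibit the witness $h' := (h,B,s')$. This is a well-formed history thanks to the edge $last(h)\simB s'$, and $last(h')=s'$ is immediate. To see $h\simB h'$ in $\bM$, recall that $\simB=\bigcap_{b\in B}\simb$; fix any $b\in B$ and observe that $\{b\}\subseteq B$, so by the Inclusion clause (condition~3 in the definition of pseudo-models) we have $last(h)\in\|B\preceq b\|_{\bS}$. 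Consequently the transition $h\to_B h'$ qualifies as $h\stackrel{\sim}{\to}_b h'$, hence $h\simb h'$. Since $b\in B$ was arbitrary, $h\simB h'$ as required.

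It remains to verify atom preservation for comparative statements $B\preceq C$. For the ``if'' direction, suppose $last(h)\in\|B\preceq C\|_{\bS}$ and take any $h'$ with $h\simB h'$ in $\bM$; Lemma~\ref{equiv}(3) then yields $h\simC h'$, so $h\in\|B\preceq C\|_{\bM}$. For the ``only if'' direction, suppose $last(h)\notin\|B\preceq C\|_{\bS}$; pick $s'\in S$ with $last(h)\simB s'$ but $last(h)\not\simC s'$, and set $h':=(h,B,s')$ exactly as in the back-clause construction, so that $h\simB h'$. If one also had $h\simC h'$, then Lemma~\ref{equiv}(1) applied with $C$ in place of $B$ would yield $last(h)\simC last(h')=s'$, contradicting the choice of $s'$. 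Hence $h\not\simC h'$, and so $h\notin\|B\preceq C\|_{\bM}$.

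The hard part is not this proposition itself but making sure that Lemmas~\ref{one-step}, \ref{b-path}, \ref{path} and \ref{equiv} are in the exact form needed, so that both the back-condition witness $(h,B,s')$ and the comparative-atom analysis above go through without side-conditions; granted those lemmas, the p-morphism verification is essentially bookkeeping with the Inclusion axiom and the fact that $\simB=\bigcap_{b\in B}\simb$ in $\bM$. Once this proposition is in hand, the standard bisimulation-invariance theorem yields $h\models_{\bM}\varphi\Leftrightarrow last(h)\models_{\bS}\varphi$ for every $\varphi$ of $LCd\preceq$, which is exactly what is needed to transport the finite pseudo-model property of Corollary~\ref{pseudo-completeness} into completeness and decidability with respect to genuine (finite) models.
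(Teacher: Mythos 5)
Most of your verification matches the paper's own proof step for step: surjectivity via the length-one history, atomic preservation for $p\in Prop$ by construction, the forth condition via Lemma~\ref{equiv}(1), the back condition via the witness $h'=(h,B,s')$ (where your explicit appeal to the Inclusion condition and $\simB=\bigcap_{b\in B}\simb$ is actually more detailed than the paper's one-line remark), and the right-to-left direction of preservation for $B\preceq C$ via Lemma~\ref{equiv}(3).

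There is, however, a genuine gap in your left-to-right (``only if'') direction for the comparative atoms. From $last(h)\notin\|B\preceq C\|_{\bS}$ you ``pick $s'\in S$ with $last(h)\simB s'$ but $last(h)\not\simC s'$'' --- but no such $s'$ need exist. In a pseudo-model the sets $\|B\preceq C\|$ are an independent component of the extended valuation, constrained only by conditions 2--5; condition 2 is a \emph{one-way} implication, so a pseudo-model may perfectly well have $s\notin\|B\preceq C\|$ even though every $\simB$-successor of $s$ is a $\simC$-successor. (A one-state pseudo-model with all $\simB$ the identity and $\|B\preceq C\|=\emptyset$ whenever $C\not\subseteq B$ satisfies all five conditions and refutes your step.) Your argument tacitly treats $\bS$ as a genuine model, where $\preceq$ is relationally defined --- which is exactly what the detour through pseudo-models gives up. The whole point of the unravelling is that the separating witness is \emph{created} in $\bM$ rather than found in $\bS$: the paper instead takes $h':=(h,B,last(h))$ (appending a reflexive $B$-loop), notes $h\simB h'$ by Inclusion, so $h\in\|B\preceq C\|_{\bM}$ forces $h\simC h'$, and then applies Lemma~\ref{path} to the unique one-step non-redundant path from $h$ to $h'$: that single step is $h\to_B h'$, so $h\simC h'$ can only hold if $last(h)\models B\preceq C$. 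Replacing your witness-picking step by this loop-history argument (or, contrapositively: if $last(h)\notin\|B\preceq C\|_{\bS}$ then $h\simB (h,B,last(h))$ but, by Lemma~\ref{path}, $h\not\simC (h,B,last(h))$) closes the gap; everything else in your write-up stands.
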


\begin{proof}
It is clear that $last$ is a well-defined function, and that it is surjective: for any $s\in S$, if we just take the history $h_s=(s)$ of length $1$ that has $s$ itself as its root, then we obviously have $last(h_s)=s$.
We check that $last$ satisfies the conditions of a $p$-morphism:

\emph{Atomic preservation for basic atoms} $p\in Prop$ (i.e. $h\in \|p\|_\bM$ iff $last(h)\in \|p\|_\bS$) is immediate (given the way we defined the truth-assignment map in $\bM$).

\emph{Atomic preservation for comparative ``atoms''} $B\preceq C$ (i.e. $h\in \|B\preceq C\|_\bM$ iff $last(h)\in \|B\preceq C\|_\bS$): For the \emph{left-to-right} implication, assume $h\in \|B\preceq C\|_\bM$, i.e. $h\models_{\bM} B\preceq C$. Construct now the history $h':=(h, B, last(h))$, obtained by appending to $h$ a final $B$-transition from $last(h)$ to $last(h)$. We obviously have $h\to_B h'$, thus $h\stackrel{\sim}{\to}_B h'$, hence $h\simB h'$. From this and $h\models_{\bM} B\preceq C$, we obtain that $h\simC h'$ (since $\bM$ is a ``standard'' model, not a pseudo-model). By Lemma \ref{path} and the structure of $h'$, this means we have $h\stackrel{\sim}{\to}_C h'$. Given that $h'=(h, B, last(h)$, this means that $last(h)\models B\preceq C$, i.e. $last(h)\in \|B\preceq C\|_\bS$, as desired.

For the \emph{right-to-left} implication, assume $last(h)\in \|B\preceq C\|_\bS$, i.e. $last(h)\models B\preceq C$. To prove that $h\in \|B\preceq C\|_\bM$, let $h'\in H$ be s.t. $h\simB h'$, and we have to show that $h\simC h'$. But $h\simB h'$ implies $last(h)\simB h'$ (by part 1 in Lemma \ref{equiv}), which together with $last(h)\models B\preceq C$ gives us $h \simC h'$ (by part 3 in in Lemma \ref{equiv}).

\emph{Forth condition}: assume $h\simB h'$, and we need to prove $last(h)\simB last(h')$. This follows by part 1 in Lemma \ref{equiv}.

\emph{Back condition}: assume $last(h)\simB s'$, and we need to show that there exists some $h'\simB h$ with $last(h')=s'$. For this, we can just take $h':=(h,B,s')$.
\end{proof}

\begin{cor}\label{ModalEquivalence}
The same formulas in $LCd\preceq$ are satisfiable in the pseudo-model $\bS$ as in its associated model $\bM$. More precisely, for every history $h\in H$ and every formula $\varphi$ of $LCd\preceq$, we have:
$$h\models_{\bM} \varphi \, \, \mbox{ iff }\,\, last(h)\models_{\bS}\varphi$$
\end{cor}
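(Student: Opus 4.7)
My approach is the standard induction on the complexity of $\varphi$, using the fact that Proposition \ref{Representation} gives a surjective p-morphism $last: \bM \to \bS$ between the two pseudo-models (where $\bM$ is also viewed as a pseudo-model). The point is that p-morphisms preserve modal equivalence with respect to \emph{every} Kripke modality indexed by the basic relations, including those defined via reflexive-transitive closures of unions of basic relations.

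\textbf{Base and Boolean cases.} For atomic propositions $p \in Prop$ and comparative atoms $B \preceq C$, the equivalence $h \models_\bM \theta \Leftrightarrow last(h) \models_\bS \theta$ is exactly the ``atomic preservation'' clause already verified in the proof of Proposition \ref{Representation}. The Boolean cases $\neg\varphi$ and $\varphi \wedge \psi$ are immediate from the inductive hypothesis.

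\textbf{The inductive step for $Cd_\mathcal{B}$.} This is the only substantive case. Recall that $Cd_\mathcal{B}$ is interpreted in each pseudo-model as the Kripke modality for the reflexive-transitive closure $\sim^\mathcal{B} = (\bigcup_{B \in \mathcal{B}} \sim_B)^*$. I will argue each direction contrapositively by lifting or projecting witnessing chains.

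For the \emph{forward} direction: assume $last(h) \not\models_\bS Cd_\mathcal{B}\varphi$, so there is a finite chain $last(h) = s_0 \sim_{B^1} s_1 \sim_{B^2} \cdots \sim_{B^n} s_n$ in $\bS$ with all $B^i \in \mathcal{B}$ and $s_n \not\models_\bS \varphi$. Using the back condition of the p-morphism repeatedly (starting from $h$), I build a matching chain $h = h_0 \sim_{B^1} h_1 \sim_{B^2} \cdots \sim_{B^n} h_n$ in $\bM$ with $last(h_i) = s_i$ for each $i$. Then $h \sim^\mathcal{B} h_n$ in $\bM$, and by the inductive hypothesis $h_n \not\models_\bM \varphi$, so $h \not\models_\bM Cd_\mathcal{B}\varphi$.

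For the \emph{backward} direction: assume $h \not\models_\bM Cd_\mathcal{B}\varphi$, so there is a chain $h = h_0 \sim_{B^1} h_1 \sim \cdots \sim_{B^n} h_n$ in $\bM$ with all $B^i \in \mathcal{B}$ and $h_n \not\models_\bM \varphi$. Applying the forth condition (or more directly part 1 of Lemma \ref{equiv}) to each step yields $last(h_{i-1}) \sim_{B^i} last(h_i)$ in $\bS$, giving a chain $last(h) \sim^\mathcal{B} last(h_n)$ in $\bS$. By the inductive hypothesis, $last(h_n) \not\models_\bS \varphi$, so $last(h) \not\models_\bS Cd_\mathcal{B}\varphi$.

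\textbf{Main obstacle.} The only non-routine point is confirming that the back condition really does lift a chain one step at a time in the presence of the group-indexed relations; but this is exactly what Proposition \ref{Representation} guarantees, since the back condition is stated for every $\sim_B$. Everything else amounts to the classical observation that p-morphisms preserve satisfaction for Kripke modalities closed under reflexive-transitive closure. Once the inductive step for $Cd_\mathcal{B}$ is handled, the corollary follows.
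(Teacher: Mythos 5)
Your proof is correct and follows essentially the same route as the paper: both rest on the surjective p-morphism $last$ from Proposition \ref{Representation} and the invariance of modal formulas under bisimulation, with the $Cd_{\mathcal B}$ case handled via the fact that reflexive-transitive closures of unions of the basic relations are respected by the p-morphism. The only difference is presentational: where the paper cites the general result that PDL operations (union, Kleene star) are safe for bisimulation, you unpack that black box by explicitly lifting and projecting the witnessing chains via the back and forth conditions, which is a perfectly sound and self-contained rendering of the same argument.
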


\begin{proof}
By Proposition \ref{Representation}, the map $last:H\to S$ is a bisimulation between $\bS$ and $\bM$, seen as Kripke models for the language with modalities $D_B$ and additional ``atoms'' $B\preceq C$. Since $LD\preceq$ is just the basic modal language for this vocabulary, formulas in $LD\preceq$ are preserved by $last$ (by the standard results on preservation of modal formulas under bisimulations, cf. \cite{BRV}).
The fact that the addition of $Cd_{\mathcal B}$ to the language maintains this preservation under $last$ follows from the definition of $Cd_{\mathcal B}$ as a modality for the reflexive-transitive closure of the union of all $\simB$'s (which can be seen as an application of the PDL operations of union of relations and reflexive-transitive closure) and the known result that PDL operations are safe for bisimulation \cite{ELD}.
\end{proof}

\medskip

To finish now the proof of Proposition \ref{LCd}, we put together Corollaries \ref{pseudo-completeness} and \ref{ModalEquivalence}, obtaining (weak) \emph{completeness} of $\mathbf{LCd\preceq}$ for our (intended) models. The \emph{decidability} of the logic $LCd\preceq$ follows in the usual way from the fact (cf. Corollary \ref{pseudo-completeness}) that its complete proof system $\mathbf{LCd\preceq}$ is also sound and complete for \emph{finite} pseudo-models, together with the obvious fact that model-checking for $LCd\preceq$ formulas on a finite model is a decidable task.

\medskip

The completeness proofs for the sublogics $LDC\preceq$ and $LD\preceq$ (i.e. Proposition \ref{LCD-completeness}) can be obtained by eliminating from the above proof the steps corresponding to the missing connectives.

\newpage

\section{Completeness and Reduction of Dynamic Logics}\label{reduction-completeness}

We prove this for the logic $LD\preceq E$, and then sketch how the proof can be adapted to $LCD\preceq!$ and $LD\preceq!$.

\begin{lem}\label{soundness-reduction}
The axioms and rules of $\mathbf{LD\preceq E}$ are sound.
\end{lem}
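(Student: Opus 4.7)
The goal is to check, for each axiom and rule of $\mathbf{LD\preceq E}$, that it preserves truth in every pointed state model $(\bS,s)$ under the semantics of dynamic modalities given by product update. My plan is to group the items into four packages: (i) the inherited axioms of $\mathbf{LD\preceq}$; (ii) the normal-modal-logic package for the boxes $[e]$; (iii) the trivial reduction laws for atomic propositions and Boolean connectives; and (iv) the two substantive reduction laws, one for $B\preceq C$ and one for $D_B\varphi$, which are the only points where the definition of product update actually does work.

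For (i) nothing new is needed: the relations $(s,e)\sima(s',e')$ defined by the product update are intersections of equivalences and hence equivalences, so $\bS\otimes\bE$ is itself an $S5$ epistemic model on which $\mathbf{LD\preceq}$ was already shown sound. For (ii), $[e]$-Necessitation is immediate because product update is defined on every state, and the K-axiom for $[e]$ follows from the fact that $[e]$ is functional (each $e$ sends $s$ to the unique pair $(s,e)$), which makes $[e]$ a normal box. For (iii), the clauses for $p$, $\neg\varphi$, and $\varphi\wedge\psi$ fall out of the semantic clause $s\models_{\bS}[e]\varphi$ iff $(s,e)\models_{\bS\otimes\bE}\varphi$ together with the equation $\underline{(s,e)}=\underline{s}$.

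The main technical point is the reduction for $B\preceq C$. Unfold: $s\models_{\bS}[e](B\preceq C)$ iff for every $(s',e')$ with $(s,e)\sim_B(s',e')$ one has $(s,e)\sim_C(s',e')$. Using the product-update clause, the former condition decomposes into the conjunction of $s\sim_{\underline{e}(B)}s'$ and $e\sim_B e'$; the frame condition $e\simb f\Rightarrow\underline{e}(b)=\underline{f}(b)$ for each $b$ then forces $\underline{e'}(B)=\underline{e}(B)$, and similarly for $C$. I then split on the side condition. If $B\preceq^e C$, then $e\sim_B e'$ already implies $e\sim_C e'$, so preservation of the event component is automatic and the whole condition collapses to: $s\sim_{\underline{e}(B)}s'\Rightarrow s\sim_{\underline{e}(C)}s'$, i.e.\ $s\models \underline{e}(B)\preceq\underline{e}(C)$ (the reverse direction is obtained by instantiating $e'=e$, which is always a legal witness). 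If instead $B\not\preceq^e C$, I pick a witness $f\sim_B e$ with $f\not\sim_C e$ and consider the pair $(s,f)$: then $(s,e)\sim_B(s,f)$ but $(s,e)\not\sim_C(s,f)$, refuting $B\preceq C$ at $(s,e)$ in $\bS\otimes\bE$ and justifying the reduction to $\bot$.

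Finally, for the $D_B$ law the key observation is the same decomposition: $(s,e)\sim_B(s',e')$ iff $s\sim_{\underline{e}(B)}s'$ and $e\sim_B e'$, with $\underline{e'}(B)=\underline{e}(B)$ automatic. Hence $s\models[e]D_B\varphi$ iff for every $f\sim_B e$ and every $s'\sim_{\underline{e}(B)}s$ we have $(s',f)\models\varphi$, i.e.\ $s'\models[f]\varphi$; pushing the universal over $s'$ inside and over $f$ outside, this is exactly $s\models\bigwedge\{D_{\underline{e}(B)}[f]\varphi:f\sim_B e\}$, a finite conjunction since $\bE$ is (locally) finite. The principal obstacle in both the $\preceq$ and $D_B$ cases is to keep the biconditional genuinely two-way: one direction always supplies witnesses of the form $(s',e)$ or $(s,f)$ that live in the Cartesian product, while the other direction requires the frame condition on event models to transfer $\underline{e}(B)$ across $\sim_B$-linked events, so I will be careful to invoke that constraint symmetrically on both sides of each equivalence.
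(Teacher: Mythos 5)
Your proof is correct and follows essentially the same route as the paper's: a direct semantic verification in which the key step is the decomposition $(s,e)\sim_B(s',e')$ iff $s\sim_{\underline{e}(B)}s'$ and $e\sim_B e'$, with the frame condition on event models transferring $\underline{e}(B)$ across $\sim_B$-linked events. The paper only writes out the chain of equivalences for the $D_B$ reduction law and declares the rest "an easy verification," whereas you also spell out the $\preceq$ cases (including the $\bot$ case via the witness $(s,f)$), which is a faithful elaboration rather than a different argument.
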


\begin{proof}
This is an easy verification. The reduction laws reflect the construction of the product update. We only give here the proof of soundness for the reduction for $D_B$. We have the following sequence of equivalencies

$s\models [e]D_B \varphi$ iff $(s,e)\models D_B\varphi$ iff $\forall (s',e')\simB (s,e):\, (s',e')\models \varphi$ iff $\forall e'\simB w \, \forall s'\sim_{\underline{e}(B)} s:\, s'\models [e'] \varphi$ iff
$\forall e'\simB w:\, s\models D_{\underline{e}(B)} [e'] \varphi$ iff
$s\models \bigwedge \{ D_{\underline{e}(B)} [e'] \varphi : e'\simB e\}$.
\end{proof}

\begin{lem}\label{one-step reduction}
Let $\theta$ be any ``static'' formula in $LD\preceq$. Then, for every event $e\in E$, there exists some formula $\theta_e$ in the `static' logic $LD\preceq$, s.t.
$$\vdash \, [e]\theta \leftrightarrow \theta_e$$
is provable in $\mathbf{LD\preceq E}$.
\end{lem}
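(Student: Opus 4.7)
The plan is a straightforward induction on the complexity of the static formula $\theta$, using the Reduction Laws of Proposition \ref{LDE} as axioms to recursively eliminate the dynamic modality $[e]$. The measure must be taken on $\theta$ itself (not on $[e]\theta$), so that the inductive hypothesis reads: for every static $\psi\in LD\preceq$ of strictly lower complexity than $\theta$ and every event $f\in E$, there is a static $\psi_f$ with $\vdash [f]\psi\leftrightarrow\psi_f$.

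First I would dispose of the easy cases. For $\theta = p$ and $\theta = (B\preceq C)$, the reduction axioms already deliver a static right-hand side directly (namely $p$, or $\underline{e}(B)\preceq\underline{e}(C)$, or $\bot$, according to whether the side condition $B\preceq^e C$ holds). For $\theta = \neg\psi$ and $\theta = \psi\wedge\chi$, I would apply the Boolean reduction law to push $[e]$ inside, invoke IH on $[e]\psi$ and $[e]\chi$ to replace them by static $\psi_e$ and $\chi_e$, and close up using propositional reasoning together with Necessitation and Distribution for $[e]$.

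The substantive case is $\theta = D_B\psi$. Here the corresponding reduction axiom gives
$$\vdash\ [e]\, D_B \psi \ \leftrightarrow\ \bigwedge\{\, D_{\underline{e}(B)}\,[f]\psi\ :\ f\simB e\,\}.$$
Local finiteness of $\bE$ ensures this is a finite conjunction, hence a genuine formula. Applying the IH to $\psi$, which is strictly simpler than $D_B\psi$, supplies uniformly in $f$ a static formula $\psi_f \in LD\preceq$ with $\vdash [f]\psi \leftrightarrow \psi_f$. Replacement of equivalents under the normal modality $D_{\underline{e}(B)}$ (via its Necessitation and Distribution rules) then gives $\vdash D_{\underline{e}(B)}[f]\psi \leftrightarrow D_{\underline{e}(B)}\psi_f$ for each $f\simB e$, and I would set
$$\theta_e\ :=\ \bigwedge\{\, D_{\underline{e}(B)}\psi_f\ :\ f\simB e\,\},$$
which lies in $LD\preceq$ and satisfies $\vdash [e]\theta \leftrightarrow \theta_e$ by transitivity of provable equivalence.

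The main, and really only, obstacle is ensuring that the induction is well-founded: the $D_B$-reduction does re-introduce dynamic modalities $[f]$, but only in front of the strictly smaller static formula $\psi$, so measuring complexity on static formulas (counting only the static connectives and modalities of $LD\preceq$) keeps the recursion intact. Local finiteness of $\bE$ is indispensable here, both to write the resulting conjunction as a genuine formula of $LD\preceq$ and to guarantee that only finitely many applications of IH are required at each step.
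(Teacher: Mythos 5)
Your proof is correct and follows essentially the same route as the paper's: induction on the complexity of the static formula $\theta$, with the $D_B$ case handled by the reduction axiom, the inductive hypothesis applied to $\psi$ for each $f\simB e$, and replacement of equivalents under $D_{\underline{e}(B)}$. Your added remarks on the $\bot$ case for $B\preceq C$, on local finiteness of $\bE$, and on why the induction is well-founded are accurate refinements of details the paper leaves implicit.
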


\begin{proof} Induction on the subformula complexity of the static formula $\theta$:

For $\theta:=p$,
the Atomic Reduction Axiom gives us the appropriate formula $\theta_e:=p$.

For $\theta\, :=\, (B\preceq C)$, the corresponding Reduction Axiom gives us $\theta_e\, :=\, \underline(e)(B)\preceq \underline{e}(C)$.

For $\theta\, := \, \neg \psi$, apply the induction hypothesis to $\psi$; so there exists some `static' formula $\psi_e$, such
that $\vdash \, [e]\psi  \leftrightarrow  \psi_e$. Putting this together with the Reduction Axiom for negation, we get $\vdash \, \theta \leftrightarrow \neg\psi_e$ (so we can take $\theta_e:=\neg\psi_e$).

The case $\theta \, :=\, \phi\wedge \psi$ is similar.

For $\theta:= D_B\psi$, we apply the induction hypothesis to $\psi$; hence for every event $f\simB e$, there exists some static formula $\psi_f$ such that $\vdash \, [f]\psi \leftrightarrow  \psi_f$. Putting this together with the Reduction Axiom for $D_B$, we get $\vdash \, \theta \leftrightarrow \bigwedge_{f\simB e} D_{\underline{e}(B)} \psi_f$ (so we can take $\theta_e:= \bigwedge_{f\simB e} D_{\underline{e}(B)} \psi_f$).
\end{proof}

Now we can prove the first part of Proposition \ref{LDE}: the provable co-expressivity of $LD\preceq E$ and $LD\preceq$.

\begin{lem}\label{reduction}
For every formula $\theta$ of the dynamic logic $LD\preceq E$, there exists some formula $\theta'$ of the static language $LD\preceq$, s.t.
$$\vdash \, \theta \leftrightarrow \theta'$$
is provable in $\mathbf{LD\preceq E}$.
\end{lem}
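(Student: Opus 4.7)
The plan is to prove this by induction on a natural complexity measure of $\theta$ that respects the reduction laws of $\mathbf{LD\preceq E}$. A clean choice is to induct on the total number of logical symbols (propositional connectives, static modalities $D_B$, comparative atoms, and dynamic modalities $[e]$) occurring in $\theta$. The base cases are atomic propositions $p$ and comparative atoms $B \preceq C$: these are already in $LD\preceq$, so I set $\theta' := \theta$ and the biconditional is propositionally trivial.

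For the Boolean inductive cases ($\theta = \neg \psi$ or $\theta = \psi \wedge \chi$) and the static modal case ($\theta = D_B \psi$), I would apply the induction hypothesis to each immediate subformula to obtain static equivalents, then combine them with the same connective or modality to produce $\theta'$. The provable equivalence $\vdash \theta \leftrightarrow \theta'$ then follows from propositional reasoning in the Boolean cases, and from $D$-Necessitation together with $D$-Distribution in the $D_B$ case (both already derivable or present in $\mathbf{LD\preceq}$).

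The crucial step is the dynamic case $\theta := [e]\psi$. Here $\psi$ has strictly fewer symbols than $\theta$, so the induction hypothesis supplies a static formula $\psi'$ with $\vdash \psi \leftrightarrow \psi'$. Applying the Necessitation Rule and Distribution Axiom for $[e]$ (both part of $\mathbf{LD\preceq E}$), I obtain $\vdash [e]\psi \leftrightarrow [e]\psi'$. Since $\psi'$ is now static, Lemma B.2 yields a static formula $(\psi')_e$ with $\vdash [e]\psi' \leftrightarrow (\psi')_e$. Setting $\theta' := (\psi')_e$ and chaining the two biconditionals gives $\vdash \theta \leftrightarrow \theta'$, as required.

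The only delicate point is the well-foundedness of the induction through nested dynamic modalities, but this is automatic: the recursive call is applied to $\psi$ (strictly smaller than $[e]\psi$), and the output of Lemma B.2 is already in $LD\preceq$ and so requires no further reduction. Consequently the single well-founded measure on symbol count suffices, with no need for a lexicographic order on (dynamic-depth, size) pairs. The genuine technical work was performed in Lemma B.2, which pushed a single dynamic modality past all static constructors; the present lemma simply iterates that one-step reduction in an innermost-out fashion using the normal-modal rules for $[e]$.
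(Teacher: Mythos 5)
Your proof is correct and follows essentially the same route as the paper's: induction on the structure of $\theta$, with the dynamic case handled by first applying the induction hypothesis to the immediate subformula, then using $[e]$-Necessitation and Distribution to replace it under the modality, and finally invoking the one-step reduction lemma (Lemma \ref{one-step reduction}) to eliminate the remaining $[e]$. The only difference is your choice of symbol count rather than subformula complexity as the induction measure, which is immaterial since both make the recursive calls well-founded.
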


\begin{proof} Induction on the subformula complexity of the dynamic formula $\theta$:

For $\theta:=p$, or $\theta:=(B\preceq C)$, we can take $\theta'=\theta$ (since this is already in $LD\preceq$).

For $\theta\, := \, \neg \psi$, apply the induction hypothesis to $\psi$; so there exists some `static' formula $\psi'$, such
that $\vdash \, \psi  \leftrightarrow  \psi'$. But then we have $\vdash \, \theta \leftrightarrow \neg\psi'$ (so we can take $\theta':=\neg\psi'$).

The cases $\theta \, :=\, \phi\wedge \psi$ is similar.

For $\theta\,:= D_B\psi$, apply the induction hypothesis to $\psi$; so there exists some `static' formula $\psi'$, such that $\vdash \, \psi  \leftrightarrow  \psi'$. By $D_B$-Necessitation and $D_B$-Distribution, we get that
$\vdash \, D_B \psi  \leftrightarrow  D_B\psi'$ (so we can take $\theta':=D_B\psi'$).

For $\theta\,:= [e]\psi$, apply the induction hypothesis to $\psi$; so there exists some `static' formula $\psi'$, such
that $\vdash \, \psi  \leftrightarrow  \psi'$. By $[e]$-Necessitation and $[e]$-Distribution, we get that
$\vdash \, [e] \psi  \leftrightarrow  [e]\psi'$, and by Lemma \ref{one-step reduction} we get another static formula $\psi'_e$, s.t. we have $\vdash \, [e] \psi'  \leftrightarrow \psi'_e$. Putting these together, we get $\vdash \, [e] \psi  \leftrightarrow \psi'_e$ (so we can take $\theta':=\psi'_e$).
\end{proof}

\medskip

Finally, we can now prove Proposition \ref{LDE} (on completeness and co-expressivity of $LD\preceq E$):

\smallskip

\emph{Proof of Proposition \ref{LDE}}: The first part (provable co-expressivity) is already proven (Lemma \ref{reduction}). As for completeness: let $\theta$ be a consistent formula of $LD\preceq E$. By Lemma \ref{reduction}, there exists some $\theta'$ in $LD\preceq$ s.t. $\vdash \, \theta\leftrightarrow \theta'$ is a theorem in $\mathbf{LD\preceq E}$. So $\theta'$ must be consistent (wrt $\mathbf{LD\preceq E}$, hence also) wrt $\mathbf{LD\preceq}$. By the completeness result for $\mathbf{LD\preceq}$ (Proposition \ref{LCD-completeness}), $\theta'$ must be satisfiable at some state $s$ in some epistemic model $\bS$. But then, given the $\mathbf{LD\preceq E}$-theorem  $\vdash \, \theta\leftrightarrow \theta'$ (and the soundness of $\mathbf{LD\preceq E}$), $\theta$ is also satisfiable (at the same state $s$ in the same model).

\bigskip

The completeness and co-expressivity proof for $LCd\preceq!$ (Proposition \ref{LCd!}) is similar: all the above steps are almost identical, except for the reduction laws for $[!\alpha](B\preceq C)$ and $[!\alpha]Cd_{\mathcal B}\varphi$. But these are in fact simpler than the corresponding reduction laws for
$[e](B\preceq C)$ and
$[e] D_B\varphi$, and so both their soundness and the corresponding inductive cases (when proving the analogue of Lemma \ref{one-step reduction}) are easier to check.

\bigskip

Finally, the proof of the analogue results for $LD\preceq!$ (Proposition \ref{LDsharing-completeness}) is similar to the one for $LCd\preceq!$, and in fact even easier: all the steps are identical, except that the reduction law for  $[!\alpha]Cd_{\mathcal B}\varphi$ is replaced by the very similar reduction law for $[!\alpha]D_B\varphi$.

\end{document}